\def\I{\mbox{$1 \hspace{-1.0mm}  {\bf l}$}}
\def\id{\mathrm{id}}
\DeclareMathOperator{\Tr}{Tr}
\DeclareMathOperator{\supp}{supp}
\DeclareMathOperator{\diag}{diag}
\def\CC{\mathbb{C}}
\def\RR{\mathbb{R}}
\newcommand{\mc}{\mathcal}
\newcommand{\mbb}{\mathbb}
\newcommand{\msf}{\mathsf}
\newcommand{\tr}{\text{Tr}}
\def\calD{\mathcal{D}}
\def\calH{\mathcal{H}}
\def\calL{\mathcal{L}}
\def\calS{\mathcal{S}}
\newcommand{\ketbra}[2]{\ket{#1}\!\bra{#2}}
\let\emphb\relax 
\DeclareTextFontCommand{\emphb}{\bfseries\em}
\newtheorem{theorem}{Theorem}
\newtheorem{corollary}[theorem]{Corollary}
\newtheorem{lemma}[theorem]{Lemma}
\newtheorem{proposition}[theorem]{Proposition}
\theoremstyle{definition}
\newtheorem{definition}{Definition}[section]
\begin{document}
\title{Entanglement manipulation and distillability beyond LOCC}

\author[a]{Eric Chitambar}
\author[b]{Julio I.\ de Vicente}
\author[c,d]{Mark W.\ Girard}
\author[c]{Gilad Gour}
\affil[a]{Department of Electrical and Computer Engineering, Coordinated Science Laboratory\\ University of Illinois at Urbana-Champaign, Urbana, IL 61801, USA}
\affil[b]{Departamento de Matem\'aticas, Universidad Carlos III de Madrid, E-28911  Legan\'es (Madrid), Spain}
\affil[c]{Department of Mathematics and Statistics, University of Calgary, Alberta T2N 1N4, Canada}
\affil[d]{Institute for Quantum Computing, University of Waterloo, Ontario N2L 3G1, Canada}


\maketitle
\begin{abstract}


When a quantum system is distributed to spatially separated parties, it is natural to consider how the system evolves when the parties perform local quantum operations with classical communication (LOCC).  However, the structure of LOCC channels is exceedingly complex leaving many important physical problems unsolved. In this paper we consider generalized resource theories of entanglement based on different relaxations to the class of LOCC. The behavior of various entanglement measures is studied under non-entangling channels, as well as the newly introduced classes of dually non-entangling and PPT-preserving channels.  In an effort to better understand the nature of LOCC bound entanglement, we study the problem of entanglement distillation in these generalized resource theories. We first show that unlike LOCC, general non-entangling maps can be superactivated, in the sense that two copies of the same non-entangling map can nevertheless be entangling.  On the single-copy level, we demonstrate that every NPT entangled state can be converted into an LOCC-distillable state using channels that are both dually non-entangling and having a PPT Choi representation and that every state can be converted into an LOCC-distillable state using operations belonging to any family of polytopes that approximate LOCC. We then turn to the stochastic convertibility of multipartite pure states and show that any two states can be interconverted by any polytope approximation to the set of separable channels. Finally, as an analog to $k$-positive maps, we introduce and analyze the set of $k$-non-entangling channels.


\end{abstract}
\section{Introduction}
\label{intro}

In the distant laboratory paradigm of quantum information theory, a multipartite quantum system is distributed to spatially separated parties.  The individual parties are restricted to performing only local quantum operations on their respective subsystems, but they are permitted to coordinate their actions through global classical communication.  Quantum channels arising from this scenario are known as LOCC (local operations and classical communication), and they emerge as the natural operational class to consider for distributed quantum information processing tasks.  Entanglement becomes a resource under the restriction of LOCC; it is consumed when performing information processing tasks (such as quantum teleportation), and entangled states cannot be generated using LOCC alone. It is therefore natural to study entanglement in the framework of a \emph{quantum resource theory}, which characterizes the convertibility of entangled states under LOCC and investigates their applicability in various information-theoretic tasks (see the reviews \cite{Horodecki2009, Plenio2007, Chitambar-2019a}).

Despite the conceptually intuitive description of LOCC in the distant-lab setting, mathematically characterizing the set of LOCC operations is notoriously complex \cite{Chitambar2014}.  One way to overcome this difficulty is by relaxing the constraints of LOCC to encompass a more mathematically simple class of operations.  Any task found to be impossible by these more general operations is therefore also impossible by LOCC.  In fact, LOCC is not the largest class of channels that cannot create entanglement from unentangled, or separable, states.  There exists a plethora of operational classes ``beyond'' LOCC that act invariantly on the set of separable states, and each of these defines a different quantum resource theory of entanglement manipulation.

It is interesting to study what differences arise in these generalized resource theories of entanglement compared to the standard LOCC scenario.  Doing so has already proven successful in the asymptotic scenario where, under LOCC, certain mixed states demonstrate irreversibility in terms of pure-state entanglement distillation and the reverse process of entanglement dilution \cite{Vidal2001,Yang2005}.  The most dramatic example of this irreversibility is the phenomenon of \emph{bound entanglement} \cite{Horodecki1998a}, which refers to entangled mixed states that require the consumption of pure-state entanglement for their preparation, and yet no pure-state entanglement can be distilled back from them.  Remarkably, the theory of entanglement distillation and dilution becomes much more amenable and elegant after enlarging the class of operations to include all asymptotically non-entangling operations.   The work in \cite{Brandao2008,Brandao2010,Brandao2011} shows that bound entanglement no longer exists under this larger class of operations, and more importantly, the entanglement in any mixed state can be reversibly distilled and diluted at a rate given by the regularized relative entropy of entanglement.

In this paper we analyze in detail entanglement manipulation using operations beyond LOCC.  Specifically, we study which state transformations are possible and which entanglement measures retain their validity within generalized resource theories of entanglement.  For example, \emph{separable} maps are those whose Choi representations are separable \cite{Rains1997}, while \emph{non-entangling} maps are precisely those that do not produce an entangled output state for any separable input.  The class of \emph{dually non-entangling} maps, which we introduce here, consists of channels $\Lambda$ such that both $\Lambda$ and its dual map $\Lambda^*$ are non-entangling.  All classes of operations that we consider contain LOCC as a proper subset.

One of our motivations is to better understand the nature of bound entanglement within the standard framework of LOCC.  While it is known that no pure-state entanglement can be distilled from any state having a positive partial transpose (PPT), one of the major open problems in quantum information theory is to determine if the converse is also true \cite{pankowski2010}; namely, does there exist non-PPT (NPT) bound entanglement?  A strategy that would allow us to answer this question in the affirmative would be to identify a class of operations that contains LOCC and which lacks the ability to distill certain NPT states.

The first comprehensive attempt in this direction was carried out in \cite{Eggeling2001}, where it was shown that NPT bound entanglement does not exist when the class of operations is enlarged to contain all so-called PPT operations, as originally introduced in \cite{Rains1999,Rains2000}. This is the class of channels whose Choi representation is PPT (see also Sect.\ \ref{sec:prelims} for definitions). Even stronger, the results of \cite{Brandao2008,Brandao2010,Brandao2011} imply that no bound entanglement (not even PPT bound entanglement) exists for the class of all strictly non-entangling channels.

Beyond the question of distillability, we also examine the difference between the class of channels whose Choi operator is PPT and the class of PPT-preserving channels.  While the former refers to the well-known class of PPT channels originally defined by Rains \cite{Rains1999,Rains2000}, to our knowledge the latter has not been previously studied. Finally, analogous to the relation between completely positive and $k$-positive maps, we also introduce here the classes of \emph{$k$-non-entangling} and \emph{completely non-entangling} maps. The former refers to maps that are non-entangling when tensored with the identity on a $k$-dimensional ancilla, while the later refers to maps that are $k$-non-entangling for all $k$.

The rest of the paper is structured as follows. Section \ref{sec:prelims} introduces the necessary background and notation, including precise definitions for the classes of channels that we study. Section \ref{sec:nonenttransformations} investigates how certain entanglement measures behave under different classes of non-entangling channels.  The entanglement measures studied in this section include the robustness of entanglement, Schmidt rank, the R\'enyi $\alpha$-entropies of entanglement (as well as the relative R\'enyi entropies of entanglement), and the negativity.  Although the robustness remains monotonic under non-entangling channels, we use the robustness to derive conditions for the convertibility of pure states under non-entangling channels that are independent of the celebrated majorization criterion for LOCC channels \cite{Nielsen1999}.  We also show that the R\'enyi $\alpha$-entropies of entanglement (which are entanglement measures for all $\alpha\in[0,+\infty]$) can be increased arbitrarily under non-entangling channels for $\alpha\in[0,1/2)$.  On the other hand, we show that the $\alpha$-entropies of entanglement in the range $\alpha\in[1/2,+\infty]$ coincide with a related relative R\'enyi entropy of entanglement measure, and therefore they cannot increase under non-entangling channels. 
In particular, this gives a closed-form expression for the relative $\alpha$-entropies of entanglement in certain cases. We also demonstrate that the Schmidt rank can be increased under the more restrictive class of dually non-entangling maps. Finally, we show that the negativity can be increased by an arbitrarily large fractional amount under PPT-preserving maps, thus highlighting the difference between the classes of PPT and PPT-preserving maps.

Entanglement distillation by classes of channels larger than LOCC is investigated in Section \ref{sec:distillbeyondLOCC}. We demonstrate that all entangled states can be converted to an LOCC-distillable state using dually non-entangling channels. The significance of this result is somewhat tempered by the fact that two copies of a non-entangling channel need not be non-entangling, something we also demonstrate in Section \ref{sec:distillbeyondLOCC}.  Consequently, single-copy convertibility of a given state to an LOCC-distillable one by some generalized class of operations does not imply that the original state is distillable within the generalized resource theory. Finally, we close the section by showing that any class of operations that generalizes LOCC by placing a finite number of linear constraints on the Choi matrix always allows to obtain LOCC-distillable states from any quantum state.



In Section \ref{sec:stochastic}, we relax the constraint of trace-preserving maps and turn to the problem of stochastic convertibility between pure states.  For multipartite state spaces, a natural way to categorize entanglement is in terms of stochastic convertibility under LOCC.  That is, two states $\ket{\psi}$ and $\ket{\hat{\psi}}$ belong to the same entanglement class if there is invertible LOCC transformation from $\ket{\psi}$ to $\ket{\hat{\psi}}$ that succeeds with some nonzero probability \cite{Dur2000}. We show that this entanglement structure completely collapses under any class of entangling undetected channels.  Connections to the problem of tensor rank calculation are also discussed.

The structure of $k$-non-entangling maps and $k$-PPT-preserving maps are analyzed in Section \ref{sec:kresourcepreserving}. We show that complete non-entangling is equivalent to $d$-non-entangling  when the local systems have dimension $d$, and we show that the structure of $k$-non-entangling maps is nontrivial.  That is, for every $k<d$ there exists $k$-non-entangling maps that are not $(k+1)$-non-entangling.  We also investigate distillability under this class of maps but we have not been able to answer whether or not they allow for bound entanglement. In particular, we show that previously used constructions are not 3-non-entangling, so the weaker question of whether every entangled state can be converted to an LOCC-distillable one under this class of maps remains open. Lastly, a detailed proof of Theorem \ref{thm:EalphaisERalphainverse} is presented in Section \ref{sec:Salphaproof} while a concluding discussion is given in Section \ref{sec:discussion}.

\section{Notation and definitions}
\label{sec:prelims}


Let $\calL(\calH)$ denote the space of linear operators on a finite-dimensional Hilbert space $\calH$ with orthonormal basis denoted by $\{\ket{1},\dots,\ket{n}\}$ (where $n$ is the dimension of the space). An operator $\rho\in\calL(\calH)$ is in the set $\calD(\calH)$ of density operators (or states) on $\calH$ if $\rho\geq0$ and $\Tr(\rho)=1$. The tensor product of two Hilbert spaces $\calH_{\msf{A}}$ and $\calH_{\msf{B}}$ is denoted by $\calH_{\msf{A}}\otimes\calH_{\msf{B}}$ or $\calH_{\msf{A}\msf{B}}$. The identity operator on a Hilbert space $\calH_{\msf{A}}$ is denoted $\I_{\msf{A}}$, while the identity map on $\calL(\calH_{\msf{A}})$ is denoted $\id_{\msf{A}}$ and the identity operator on $\CC^d\otimes\CC^d$ is denoted by $\I_{d^2}$.

We write $\psi$ to denote the density operator $\psi=\ketbra{\psi}{\psi}$ associated with a pure state vector $\ket{\psi}\in\calH$. For bipartite pure states $\ket{\psi}\in\calH_{\msf{A}\msf{B}}$ we typically assume, without loss of generality, that $\ket{\psi}$ is in Schmidt form,
\[
 \ket{\psi} = \sum_{i=1}^d\sqrt{\lambda_i}\ket{ii},
\]
where $\ket{ii}$ is short-hand for $\ket{i}\otimes\ket{i}$, and $\lambda=(\lambda_1,\dots,\lambda_d)$ are the Schmidt coefficients of $\psi$ (in decreasing order) such that $\sum_{i=1}^d\lambda_i=1$ and $d\leq\min\{\dim\calH_\msf{A},\dim\calH_\msf{B}\}$. The Schmidt rank of the state is defined to be the number of non-zero entries in $\lambda$, which we denote by $\operatorname{rank}(\lambda)$.  A positive operator $\sigma\in\calL(\calH_{\msf{A}\msf{B}})$ is said to be separable with respect to $\msf{A}\!:\!\msf{B}$ if it can be written as $\sigma=\sum_{i} \tau_i\otimes\omega_i$ for some positive operators $\tau_i\in\calL(\calH_{\msf{A}})$ and $\omega_i\in\calL(\calH_{\msf{B}})$. The set of separable density operators on $\calH_{\msf{A}\msf{B}}$ will be denoted $\calS(\calH_{\msf{A}:\msf{B}})$, or simply $\calS$ if the spaces $\msf{A}$ and $\msf{B}$ are clear from context.

The partial transpose of an operator $X\in\calL(\calH_{\msf{A}\msf{B}})$, denoted $X^{\Gamma_{\msf{A}}}$, is the operator that results from applying the transposition map $A\mapsto A^T$ (with respect to a given basis of $\calH_{\msf{A}}$) to system~$\msf{A}$. We simply write $X^{\Gamma}$ if the system being transposed is clear from context. A positive operator $\rho\in\calL(\calH_{\msf{A}\msf{B}})$ is positive under partial transpose (PPT) with respect to \mbox{$\msf{A}\!:\!\msf{B}$} if $\rho^\Gamma\geq0$, and it is otherwise said to be non-PPT (or NPT). (We note that the definition of PPT is independent of the basis chosen for the respective transposition.)  It is well-known that every separable state is PPT, with the converse holding true only in dimensions $2\otimes 2$ and $2\otimes 3$ \cite{Horodecki1996}.

For any integer $d\geq2$, the maximally entangled pure state on $\CC^d\otimes\CC^d$ is denoted
\[
 \ket{\phi^+_d} = \frac{1}{\sqrt{d}}\sum_{i=1}^d\ket{ii},
\]
and we write $\phi^+_d=\ketbra{\phi^+_d}{\phi^+_d}$ to denote the corresponding density operator. We denote the so-called `flip' operator on $\CC^d\otimes\CC^d$ by
\[
 F_d = \sum_{i,j=1}^d\ketbra{ij}{ji},
\]
which is precisely the partial transpose of $\phi^+_d$ scaled by $d$. Analogously, we write $\phi^+_{\msf{A}}$ to denote the (unnormalized) maximally entangled operator on the Hilbert space $\calH_{\msf{A}}\otimes\calH_{\msf{A}}$,
\[
 \phi^+_{\msf{A}} = \sum_{i,j=1}^{\dim(\calH_{\msf{A}})}\ketbra{i}{j}\otimes\ketbra{i}{j},
\]
where $\{\ket{i}\}$ is an orthonormal basis for $\calH_{\msf{A}}$. Note that we can write the maximally entangled operator of the four-partite space $\phi^+_{\msf{A}\msf{B}}\in\calL(\calH_{\msf{A}\msf{B}}\otimes\calH_{\msf{A}\msf{B}})$ as
\[
 \phi^+_{\msf{A}\msf{B}} = \phi^+_{\msf{A}}\otimes\phi^+_{\msf{B}}.
\]

Given a linear map $\Lambda:\calL(\calH_{\msf{A}_1})\rightarrow\calL(\calH_{\msf{A}_2})$, its Choi representation is the linear operator $J(\Lambda)\in\calL(\calH_{\msf{A}_2\msf{A}_1})$ defined by
\[
 J(\Lambda) = \Lambda\otimes\id_{\msf{A}_1} (\phi^+_{\msf{A}_1}).
\]
For any $X\in\calL(\calH_{\msf{A}_1})$, the action of the map $\Lambda$ on $X$ can be written in terms of its Choi representation as $\Lambda(X)= \Tr_{\msf{A}_1}(J(\Lambda)(\I_{\msf{A}_2}\otimes X^T))$. Recall that $\Lambda$ is completely positive (CP) if and only if its Choi representation is a positive semidefinite operator, and $\Lambda$ is trace preserving (TP) if and only if it holds that $\Tr_{\msf{A}_2}(J(\Lambda)) = \I_{\msf{A}_1}$. A linear map that is both completely positive and trace preserving (CPTP) is called a \emph{quantum channel}. Given two density operators $\rho_1,\rho_2\in\calD(\calH')$ and an operator $A\in\calL(\calH)$, the linear map $\Lambda:\calL(\calH)\rightarrow\calL(\calH')$ defined by
\begin{equation}\label{eq:LambdaA}
 \Lambda (X) = \Tr(AX)\rho_1 + \Tr((\I-A)X)\rho_2
\end{equation}
is CPTP as long as $0\leq A\leq \I$. Many of the maps constructed in this work take this form. Given a linear map $\Lambda:\calL(\calH)\rightarrow\calL(\calH')$, its dual map $\Lambda^*:\calL(\calH')\rightarrow\calL(\calH)$ is the unique linear map such that $\Tr(\Lambda(X)Y)=\Tr(X\Lambda^*(Y))$ holds for all operators $X\in\calL(\calH)$ and $Y\in\calL(\calH')$. It follows readily from the definition that the dual of a map of the form in \eqref{eq:LambdaA} is given by
\[
 \Lambda^*(Y) = \Tr(\rho_1Y)A + \Tr(\rho_2Y)(\I-A),
\]
while the Choi representation of a map of the form in \eqref{eq:LambdaA} is given by
\[
 J(\Lambda) = \rho_1\otimes A^T + \rho_2\otimes(\I-A)^T.
\]

\subsection{Definitions of classes of channels}

We now define all of the classes of operations that will be explored in this paper. We first recall the definitions of separable and PPT maps. Let $\Lambda:\calL(\calH_{\msf{A}_1\msf{B}_1})\rightarrow\calL(\calH_{\msf{A}_2\msf{B}_2})$ be a completely positive linear map. Then $\Lambda$ is said to be separable if its Choi representation $J(\Lambda)\in\calL(\calH_{\msf{A}_2\msf{B}_2}\otimes\calH_{\msf{A}_1\msf{B}_1})$ is \emph{separable} with respect to \mbox{$\msf{A}_2\msf{A}_1\!:\!\msf{B}_2\msf{B}_1$}.  To keep the bipartition clear, we will often denote $\calH_{\msf{A}_2\msf{B}_2}\otimes\calH_{\msf{A}_1\msf{B}_1}$ as $\calH_{\msf{A}_2\msf{A}_1:\msf{B}_2\sf{B}_1}$.  Note that $\Lambda$ is separable if and only if it can be written as
\[
 \Lambda = \sum_{j} \Phi_j \otimes \Psi_j
\]
for some CP maps $\Phi_j:\calL(\calH_{\msf{A}_1})\rightarrow\calL(\calH_{\msf{A}_2})$ and $\Psi_j:\calL(\calH_{\msf{B}_1})\rightarrow\calL(\calH_{\msf{B}_2})$. Analogously, the map $\Lambda$ is a \emph{PPT map} if the partial transpose of its Choi representation is positive semi-definite, i.e., if $J(\Lambda)^\Gamma=J(\Lambda)^{\Gamma_{\msf{A}_2\msf{A}_1}}\geq0$. It holds that $\Lambda$ is PPT if and only if the map $\Lambda^\Gamma$ defined by
\[
 \Lambda^\Gamma(X) = (\Lambda(X^{\Gamma_{\msf{A}_1}}))^{\Gamma_{\msf{A}_2}}
\]
is also completely positive. In particular, note that $J(\Lambda^\Gamma) = J(\Lambda)^\Gamma$. This definition of PPT maps is due to Rains \cite{Rains1999}.

In this paper, we are primarily concerned with the \emph{non-entangling} maps, which are defined explicitly as follows. We provide here an analogous definition for PPT-preserving maps.

\begin{definition}
Let $\Lambda:\calL(\calH_{\msf{A}_1\msf{B}_1})\rightarrow\calL(\calH_{\msf{A}_2\msf{B}_2})$ be a completely positive map. Then $\Lambda$ is \emphb{separable-preserving} (or \emphb{non-entangling}) if $\Lambda(\sigma)$ is separable (with respect to \mbox{$\msf{A}_2\!:\!\msf{B}_2$}) for any operator $\sigma$ that is separable (with respect to \mbox{$\msf{A}_1\!:\!\msf{B}_1$}). Analogously, $\Lambda$ is \emphb{PPT-preserving} if $\Lambda(\sigma)$ is PPT for any PPT operator $\sigma$.
\end{definition}

By definition, the non-entangling channels comprise the largest class of channels which cannot create entanglement, while the PPT-preserving channels are exactly those that cannot generate an NPT state from PPT input states. It is clear that every separable (resp.\ PPT) map is separable-preserving (resp.\ PPT-preserving). We remark that the classes of separable-preserving and PPT-preserving maps are distinct from the classes separable and PPT maps. Indeed, the bipartite swap operator is clearly both separable- and PPT-preserving, but its corresponding Choi representation is proportional to the maximally entangled state which is neither separable nor PPT.

We will also consider the following set of maps that is even more restricted than the non-entangling maps which nonetheless contains the set of LOCC channels.

\begin{definition}
  A completely positive map $\Lambda:\calL(\calH_{\msf{A}_1\msf{B}_1})\rightarrow\calL(\calH_{\msf{A}_2\msf{B}_2})$ is said to be \emphb{dually non-entangling} if both $\Lambda$ and its dual map $\Lambda^*$ are separable-preserving.
\end{definition}

Note that a map is completely positive if and only if its dual map is, but the dual of a trace-preserving map is not necessarily trace-preserving. Dually non-entangling channels form a subset of the non-entangling channels which still contains all of LOCC. Indeed, the dual of any separable map is clearly separable, hence separable maps are dually non-entangling (and thus LOCC channels are as well).

Any quantum channel is a linear map that must not only preserve positivity, but it must also preserve positivity when it acts on the systems tensored with any ancilla space for linear map to represent a physical manipulation of quantum states. Motivated by this requirement of complete positivity for a quantum channel, in the following we define a notion of \emph{complete non-entangling}. The analogous definition is provided for PPT-preserving maps. We note that completely non-entangling maps (resp.\ completely PPT-preserving maps) are exactly those maps that are separable (resp.\ PPT). This will be discussed and proven in Section \ref{sec:kresourcepreserving}.

\begin{definition}
 Let $\Lambda:\calL(\calH_{\msf{A}_1\msf{B}_1})\rightarrow\calL(\calH_{\msf{A}_2\msf{B}_2})$ be a completely positive map and let $k\geq1$ be an integer. Then $\Lambda$ is \emphb{$k$-separable-preserving} (or \emphb{$k$-non-entangling}) if, for all $k$-dimensional systems $\calH_{\msf{A}}=\calH_{\msf{B}}=\CC^k$, the map $\Lambda\otimes\id_{\msf{A}\msf{B}}$ is separable-preserving (with respect to $\msf{A}_1\msf{A}:\msf{B}_1\msf{B}$ and $\msf{A}_2\msf{A}:\msf{B}_2\msf{B}$). If $\Lambda$ is $k$-non-entangling for all integers $k$, then $\Lambda$ is said to be \emphb{completely non-entangling}.
\end{definition}

\begin{definition}
 Let $\Lambda:\calL(\calH_{\msf{A}_1\msf{B}_1})\rightarrow\calL(\calH_{\msf{A}_2\msf{B}_2})$ be a completely positive map and let $k\geq1$ be an integer. Then $\Lambda$ is \emphb{$k$-PPT-preserving} if, for all $k$-dimensional systems $\calH_{\msf{A}}$ and $\calH_{\msf{B}}$, the map $\Lambda\otimes\id_{\msf{A}\msf{B}}$ is PPT-preserving. If $\Lambda$ is $k$-PPT-preserving for all integers $k$, then it is said to be \emphb{completely PPT-preserving}.
\end{definition}


For the final family of channels that we will use, we recall the notion of entanglement witnesses.  A hermitian operator \mbox{$W\in\mc{L}(\mc{H}_{\msf{A}\msf{B}})$} is an entanglement witness with respect to \mbox{$\msf{A}:\msf{B}$} if it holds that $\text{Tr}(W\rho)\geq 0$ for all separable states $\rho\in\mc{S}(\mc{H}_{\msf{A}:\msf{B}})$.  Since the Choi operator for any separable map is separable, we can use entanglement witnesses on the level of Choi operators to detect non-separable maps.  Even stronger, it is known that entanglement witnesses provide a full characterization of separability in the following sense. A positive operator $X$ is separable if and only if $\text{Tr}(WX)\geq 0$ holds for all entanglement witnesses $W$ \cite{Horodecki1996}.

\begin{figure}
\centering
 \includegraphics[width=.6\textwidth]{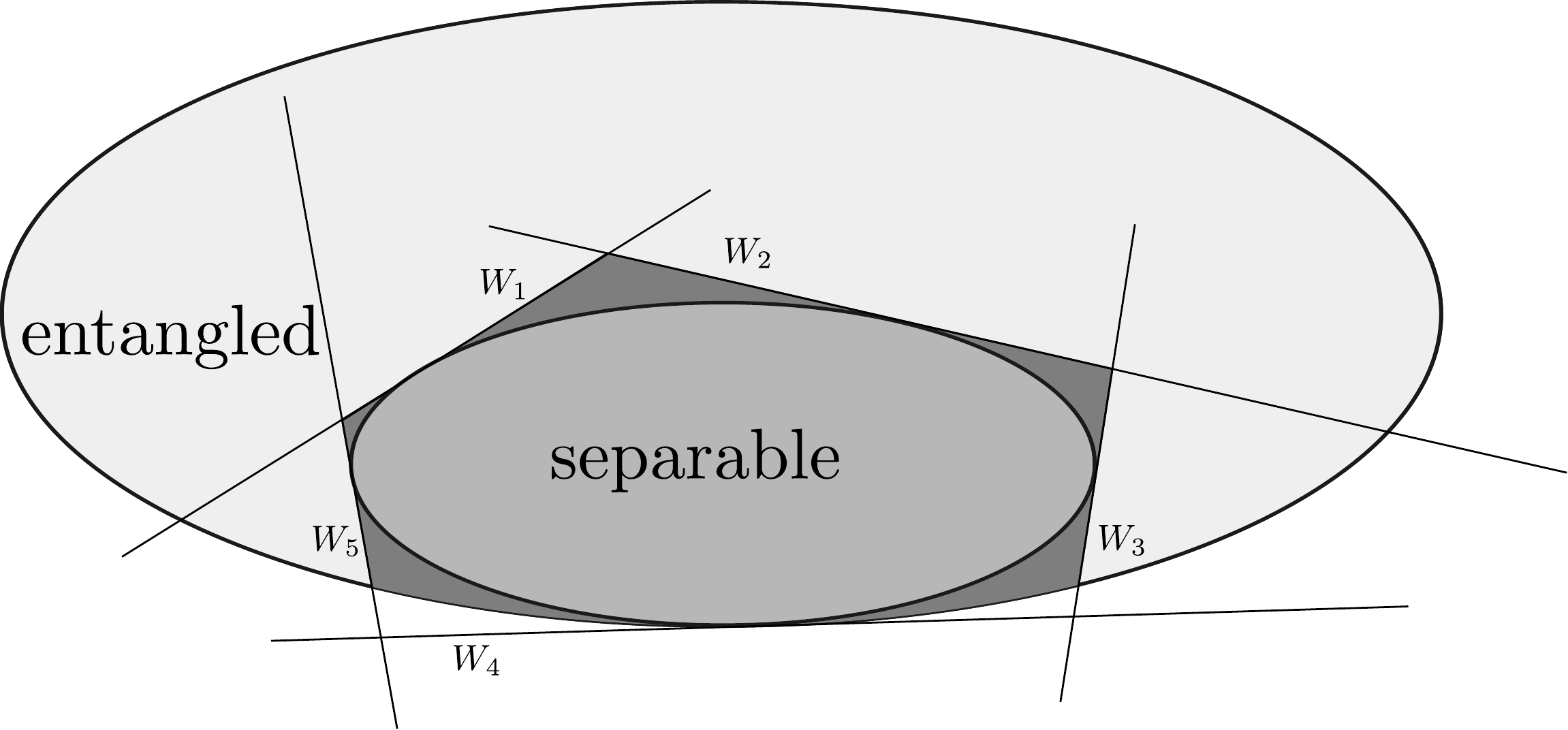}
 \caption{Given a finite number of entanglement witnesses $\{W_i\}_{i=1}^n$ on $\calL(\calH_{\msf{A}_2\msf{A}_1:\msf{B}_2\msf{B}_1})$, the set of entangling undetected channels have Choi representations that are not witnessed by any $W_i$. This polytope clearly contains all separable channels.}
 \label{fig:separablepolytope}
\end{figure}

Geometrically, every entanglement witness can be seen as a hyperplane in the space of hermitian operators that separates the positive convex cone of separable operators (see Fig.\ \ref{fig:separablepolytope}).  Furthermore, the set of separable operators is not a polytope \cite{Ioannou2006, Guhne2007}, and therefore, any finite set of entanglement witnesses is insufficient to determine separability.  Therefore, one obtains a strict relaxation of the separability constraint by requiring that $\text{Tr}(WX)\geq 0$ hold for not all witnesses, but instead for just some finite subset of witnesses.  This idea motivates the following type of operations.
\begin{definition}
Let $\{W_i\}_{i=1}^n\subset\calL(\calH_{\msf{A}_2\msf{B_2}\msf{A_1}\msf{B}_1})$ be a finite collection of entanglement witnesses with respect to \mbox{$\msf{A}_2\msf{A}_1\!:\!\msf{B}_2\msf{B}_1$}.  A completely positive map $\Lambda:\calL(\calH_{\msf{A}_1\msf{B}_1})\rightarrow\calL(\calH_{\msf{A}_2\msf{B}_2})$ is said to be \emphb{entangling undetected} by $\{W_i\}_{i=1}^n$ if it holds that $\text{Tr}(W_i J(\Lambda))\geq 0$ for all $i\in\{1,\cdots,n\}$.
\end{definition}
\noindent Note that for any convex set in $\mbb{R}^d$, a containing polytope can be constructed that approximates the set to arbitrary precision \cite{Bronstein2008}.  Thus for any $\epsilon>0$, there exists a collection of entanglement witnesses $\{W_i\}_{i=1}^n$ such that a non-separable $\Lambda$ is entangling undetected by $\{W_i\}_{i=1}^n$ only if $\lVert J(\Lambda)-\Omega\rVert_1<\epsilon$ for all operators $\Omega\in\mc{L}(\mc{H}_{\msf{A}_2\msf{B}_1\msf{A}_1\msf{B}_1})$ that are separable with respect to $\msf{A}_2\msf{A}_1:\msf{B}_2\msf{B}_1$, where $\lVert\cdot\rVert_1$ denotes the trace norm.


\section{Entanglement measures under maps beyond LOCC}
\label{sec:nonenttransformations}

In this section we explore what kinds of transformations among states are possible under non-entangling channels but not possible under LOCC. This is done by finding transformations under non-entangling channels that increase some entanglement measure. We examine which entanglement measures retain their monotonicity under non-entangling channels, dually non-entangling channels, PPT-preserving channels.

\subsection{Entanglement measures under non-entangling maps}

We first explore certain entanglement measures that can be increased with non-entangling channels, and show how certain other measures can yield conditions for conversion under such maps. Consider the relative entropy of entanglement \cite{Vedral1997,Vedral1998},
\[
 E_R (\rho): = \min_{\sigma\in\calS} S(\rho\Vert\sigma),
\]
where $S(\rho\Vert\sigma):=\Tr(\rho\log\rho)-\Tr(\rho\log\sigma)$ is the relative entropy of $\rho$ and $\sigma$. This is an entanglement measure that is clearly also monotonic under non-entangling channels, a fact that follows directly from the joint monotonicity of the relative entropy, i.e.\ $S(\Lambda(\rho)\Vert\Lambda(\sigma))\leq S(\rho\Vert\sigma)$ for every CPTP map $\Lambda$ \cite{Lindblad1975}. Another entanglement measure that retains its monotonicity under non-entangling channels is the \emph{robustness of entanglement} \cite{Vidal1998}, which is defined as
\begin{equation}
 R(\rho):=\min_{\sigma\in\calS} R(\rho\Vert\sigma),
\end{equation}
where, for any separable state $\sigma\in\calS$,
\begin{equation}
 R(\rho\Vert\sigma) := \min\{s \, :\, \rho + s\sigma \text{ is separable}\}.
\end{equation}
Indeed, if $\Lambda$ is a non-entangling channel, it is clearly seen that $R(\Lambda(\rho)\Vert\Lambda(\sigma))\leq R(\rho\Vert\sigma)$ holds for any separable $\sigma$. For pure states, the robustness reduces to \cite{Vidal1998}
\begin{equation}\label{eq:Rpsi}
 R(\psi) = \Bigl(\sum_i\sqrt{\lambda_i}\Bigr)^2 -1.
\end{equation}
In particular, it holds that $R(\phi^+_k)=k-1$ for any integer $k\geq2$.

Here we show that the robustness yields necessary and sufficient conditions for certain transformations among pure states under non-entangling channels. Whereas pure state transformation under LOCC is governed strictly by majorization of Schmidt coefficients \cite{Nielsen1999}, certain transformations among pure states under non-entangling channels are possible if and only if the robustness of one is greater than the robustness of the other. As Theorem \ref{thm:phik+topsi} shows, this holds when the input state is a maximally entangled pure state.

\begin{theorem}\label{thm:phik+topsi}
 Let $\psi$ be a pure state. There exists a non-entangling channel $\Lambda$ such that $\Lambda(\phi_k^+)=\psi$ if and only if $R(\psi)\leq R(\phi_k^+)$.
\end{theorem}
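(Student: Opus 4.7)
The plan is to prove the two directions separately. The forward implication is immediate from the paragraph preceding the theorem: since $R(\Lambda(\rho)\Vert\Lambda(\sigma))\leq R(\rho\Vert\sigma)$ holds for every non-entangling channel $\Lambda$ and every separable $\sigma$, minimizing over separable $\sigma$ (and noting that $\Lambda(\sigma)$ remains separable) yields $R(\Lambda(\rho))\leq R(\rho)$. Applying this to $\rho=\phi_k^+$ and using \eqref{eq:Rpsi} to evaluate $R(\phi_k^+)=k-1$ gives the necessary condition $R(\psi)\leq k-1$.

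For the nontrivial converse, I would construct an explicit non-entangling channel in the template of \eqref{eq:LambdaA} by setting $A=\phi_k^+$ and $\rho_1=\psi$, i.e.,
\[
 \Lambda(X) = \Tr(\phi_k^+ X)\,\psi + \Tr((\I - \phi_k^+)X)\,\sigma,
\]
where $\sigma\in\calS$ is a separable state to be chosen. Since $0\leq \phi_k^+\leq \I$ the map is automatically CPTP, and because $\phi_k^+$ is a rank-one projection one has $\Tr(\phi_k^+\phi_k^+)=1$, giving $\Lambda(\phi_k^+)=\psi$ as required. To see that an appropriate $\sigma$ exists making $\Lambda$ non-entangling, I would use two standard facts. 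First, the maximal overlap of the maximally entangled state with any separable state is $1/k$, so that $\Tr(\phi_k^+\tau)\leq 1/k$ for every separable $\tau$; consequently $\Lambda(\tau)=p\psi+(1-p)\sigma$ with $p\in[0,1/k]$, and by convexity of $\calS$ it suffices to verify separability of the two extreme mixtures, namely $\sigma$ itself and $\tfrac{1}{k}\psi+\tfrac{k-1}{k}\sigma$. Second, the hypothesis $R(\psi)\leq k-1$ means (by definition of the robustness) that there exists a separable $\sigma$ and an $s\leq k-1$ such that $\psi+s\sigma$ is separable as a positive operator; then
\[
 \psi+(k-1)\sigma \;=\; (\psi+s\sigma) + (k-1-s)\sigma
\]
is a sum of two separable positive operators, hence separable, and after normalization by $k$ yields the required separability of $\tfrac{1}{k}\psi+\tfrac{k-1}{k}\sigma$. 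With this choice of $\sigma$, $\Lambda$ is non-entangling.

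The non-routine step, and in my view the substantive content of the theorem, is recognizing that the factor $k-1$ appearing in the robustness bound matches exactly the weight forced upon $\sigma$ by the universal fidelity bound $\Tr(\phi_k^+\tau)\leq 1/k$; this is what makes the elementary template \eqref{eq:LambdaA} tight against the hypothesis. Everything else is bookkeeping: the CPTP check, the rank-one computation of $\Lambda(\phi_k^+)$, and the convex combination argument reducing separability on all of $\calS$ to separability at $p=1/k$.
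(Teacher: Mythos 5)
Your proposal is correct and takes essentially the same approach as the paper's proof: the identical channel $\Lambda(X)=\Tr(\phi_k^+X)\,\psi+\Tr((\I-\phi_k^+)X)\,\sigma$ with $\sigma$ a separable state witnessing the robustness of $\psi$, combined with the monotonicity of $R$ for the forward direction and the bound $\Tr(\phi_k^+\tau)\leq 1/k$ for separable $\tau$ for the converse. The only difference is that you spell out the convexity reduction to the extreme mixture at $p=1/k$ and the padding step $\psi+(k-1)\sigma=(\psi+s\sigma)+(k-1-s)\sigma$, details the paper compresses into the phrase ``by construction.''
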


\begin{proof}
Since we have already pointed out that $R$ cannot increase under non-entangling channels, it remains to explicitly construct a channel achieving the transformation whenever $R(\psi)\leq R(\phi_k^+)$. This is done by the channel defined as
 \begin{equation}
  \Lambda(X) = \Tr(\phi_k^+X)\psi + \Tr((\I-\phi_k^+)X)\sigma,
 \end{equation}
where $\sigma$ is the separable state for which $R(\psi) = R(\psi\Vert\sigma)$. To see that this channel is non-entangling, note that $\Tr(\phi_k^+\rho)\leq 1/k$ holds for all $\rho\in\mathcal{S}$. Therefore, for all $\rho\in\mathcal{S}$, it holds that $\Lambda(\rho)=p\psi + (1-p)\sigma$ for some $p\leq 1/k$. Note that $1/k=1/(1+R(\phi^+_k))$ and that $p\psi + (1-p)\sigma\in\mathcal{S}$ for any $p\leq1/(1+R(\psi))$ by construction. Hence the above channel is non-entangling whenever $R(\psi)\leq R(\phi_k^+)$.
\end{proof}

Theorem \ref{thm:phik+topsi} indicates that the Schmidt rank of a pure state (which cannot be increased under LOCC or separable channels \cite{Terhal2000a}) can in certain cases be increased under non-entangling channels. In fact, for any integers $k$ and $d$ with $k<d$, there exists a pure state with Schmidt rank $k$ that can be transformed to a state with Schmidt rank $d$. This fact is formalized in the following corollary, which follows directly from Theorem \ref{thm:phik+topsi}.

\begin{corollary}
 For all integers $k$ and $d$ with $2\leq k<d$, there exists a non-entangling channel $\Lambda$ such that $\Lambda(\phi^+_k)$ is a pure state with Schmidt rank $d$.
 \end{corollary}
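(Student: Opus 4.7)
The plan is to invoke Theorem \ref{thm:phik+topsi} directly: to establish the corollary it suffices to exhibit, for each pair $2 \leq k < d$, a pure state $\psi$ of Schmidt rank exactly $d$ whose robustness of entanglement satisfies $R(\psi) \leq R(\phi_k^+) = k - 1$. Since $R(\phi_k^+) = k - 1 \geq 1$ for $k \geq 2$, we have substantial room to work with, and the construction should be elementary.

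The key idea is to build such a $\psi$ by choosing Schmidt coefficients concentrated almost entirely on one index, with a tiny amount of weight spread across the remaining $d - 1$ indices to guarantee that the Schmidt rank equals $d$. Concretely, I would take
\[
 \lambda_1 = 1 - (d-1)\epsilon, \qquad \lambda_2 = \cdots = \lambda_d = \epsilon,
\]
for a small parameter $\epsilon > 0$, and let $\ket{\psi} = \sum_{i=1}^d \sqrt{\lambda_i}\ket{ii}$. All Schmidt coefficients are strictly positive, so the Schmidt rank is exactly $d$.

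Next I would apply the closed-form expression \eqref{eq:Rpsi} for pure-state robustness to compute
\[
 R(\psi) = \Bigl(\sqrt{1-(d-1)\epsilon} + (d-1)\sqrt{\epsilon}\,\Bigr)^2 - 1.
\]
As $\epsilon \to 0^+$ the quantity inside the square tends to $1$, so $R(\psi) \to 0$. In particular, for all sufficiently small $\epsilon > 0$ we have $R(\psi) \leq k - 1 = R(\phi_k^+)$. Theorem \ref{thm:phik+topsi} then supplies a non-entangling channel $\Lambda$ with $\Lambda(\phi_k^+) = \psi$, completing the argument.

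There is really no substantive obstacle here: the theorem does all the work, and the only task is to notice that the set of pure states $\psi$ of Schmidt rank $d$ with $R(\psi)$ arbitrarily small is nonempty, which follows from continuity of \eqref{eq:Rpsi} in the Schmidt coefficients together with the fact that $R$ vanishes on product states. Thus the entire proof is two lines once the continuity/limit observation is made explicit.
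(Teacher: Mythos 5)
Your proof is correct and follows essentially the same route as the paper: the paper derives this corollary directly from Theorem \ref{thm:phik+topsi}, leaving implicit the existence of a Schmidt-rank-$d$ state with $R(\psi)\leq k-1$, which your near-product construction $\lambda=(1-(d-1)\epsilon,\epsilon,\dots,\epsilon)$ together with the closed form \eqref{eq:Rpsi} makes explicit. The limit argument $R(\psi)\to 0$ as $\epsilon\to 0^+$ is sound (and for small $\epsilon$ the coefficients remain in decreasing order with all entries positive, so the Schmidt rank is indeed $d$), so nothing is missing.
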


The above result demonstrates that the Schmidt rank is not monotonic under non-entangling channels. It is natural to ask then which other entanglement measures remain monotonic under non-entangling channels and which ones do not. We now investigate other entanglement measures that can be increased under non-entangling channels. The R\'enyi entropy of entanglement of order~$\alpha$ is defined in terms of the Schmidt coefficients of pure states as
\begin{equation*}
 E_{\alpha}(\psi) = H_{\alpha}(\lambda)
\end{equation*}
for any $\alpha\in[0,+\infty]$, where $H_{\alpha}$ are the R\'enyi entropies
\[
 H_{\alpha}(\lambda) = \left\{\begin{array}{ll}
                               \frac{1}{1-\alpha}\log\sum_{i}\lambda_i^\alpha& \alpha\neq0,1,+\infty\\
                               -\sum_{i}\lambda_i\log\lambda_i & \alpha =1\\
                               \log\operatorname{rank}(\lambda) & \alpha=0\\
                               -\log\max_i\{\lambda_i\}& \alpha=+\infty.
                              \end{array}
\right.
\]
These quantities are Schur concave as functions of the Schmidt coefficients. Hence the function~$E_{\alpha}$ is an entanglement measure for pure states for any $\alpha$.  Notice that the Schmidt rank of a pure state is in direct correspondence with the R\'enyi entropy of order zero. Since we have seen that~$E_0$ can be increased under non-entangling channels, it is natural to consider whether $E_{\alpha}$ is not monotonic under non-entangling channels for other values of $\alpha$ as well. The R\'enyi entropy of order~1 reduces to the standard entropy of entanglement,
\[
 E_1(\psi) = E(\psi) = -\sum_{i}\lambda_i\log(\lambda_i),
\]
which is known to be monotonic under non-entangling channels since it coincides with the relative entropy of entanglement $E_R$ for pure states \cite{Vedral1998}. Furthermore, note from \eqref{eq:Rpsi} that the R\'enyi entropy of entanglement of order $1/2$ is a monotonic function of the robustness, since
\[
 E_{1/2}(\psi) = \log(R(\psi)+1),
\]
so $E_{1/2}$ is also monotonic under non-entangling channels by monotonicity of~$R$.

In the following, we show that $E_{\alpha}$ can be increased under non-entangling channels for any $\alpha\in[0,1/2)$, while it cannot be increased under non-entangling channels for any $\alpha\geq1/2$. The proof of the former is actually a simple corollary of Theorem \ref{thm:phik+topsi}. To show this, it suffices to find a pure state $\psi$ with $E_{1/2}(\psi)=E_{1/2}(\phi^+_2)$ but that $E_{\alpha}(\psi)>E_{\alpha}(\phi^+_2)$ for $\alpha<1/2$. Since $E_{1/2}(\phi^+_2)=1$, Theorem \ref{thm:phik+topsi} indicates that the transformation $\phi^+_2\rightarrow\psi$ is possible under non-entangling channels as long as $ E_{1/2}(\psi)\leq 1$.

\begin{corollary}\label{cor:Ealphaincreased}
 The measures $E_{\alpha}$ can be increased by non-entangling channels for any $\alpha\in[0,1/2)$.
\end{corollary}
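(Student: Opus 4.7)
The plan is to invoke Theorem \ref{thm:phik+topsi} with $k=2$. This immediately reduces the problem to producing, for each $\alpha\in[0,1/2)$, a pure state $\psi$ satisfying simultaneously
\[
 R(\psi)\leq R(\phi^+_2)=1 \qquad \text{(equivalently, } E_{1/2}(\psi)\leq 1\text{)}
\]
and $E_\alpha(\psi)>E_\alpha(\phi^+_2)=1$. Since $R$ and $E_\alpha$ on pure states depend only on the Schmidt coefficients, this becomes a pure optimization question over probability vectors $\lambda=(\lambda_1,\ldots,\lambda_d)$, and the existence of a non-entangling channel $\phi^+_2\to\psi$ is then automatic.

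For the construction, I would take the one-parameter family of Schmidt vectors
\[
 \lambda^{(d,\epsilon)}=\bigl(1-(d-1)\epsilon,\,\epsilon,\,\ldots,\,\epsilon\bigr)\in\RR^d, \qquad 0<\epsilon<\tfrac{1}{d-1},
\]
with one large coefficient and many small ones. The two quantities I need to control are
\[
 \sum_{i=1}^d\sqrt{\lambda^{(d,\epsilon)}_i}=\sqrt{1-(d-1)\epsilon}+(d-1)\sqrt{\epsilon}
\]
and, for $\alpha\in(0,1/2)$,
\[
 \sum_{i=1}^d\bigl(\lambda^{(d,\epsilon)}_i\bigr)^\alpha=(1-(d-1)\epsilon)^\alpha+(d-1)\epsilon^\alpha.
\]
Setting $\epsilon=c/(d-1)^2$ with a constant $c\in\bigl(0,(\sqrt{2}-1)^2\bigr)$, the first sum tends to $1+\sqrt{c}<\sqrt{2}$ as $d\to\infty$, so by \eqref{eq:Rpsi} one has $R(\psi)<1$ for all sufficiently large $d$. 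The second sum equals $(1-c/(d-1))^\alpha+c^\alpha(d-1)^{1-2\alpha}$, and since $1-2\alpha>0$ the factor $(d-1)^{1-2\alpha}$ diverges; thus for any fixed $\alpha\in(0,1/2)$ one can choose $d$ large enough that $\sum_i\lambda_i^\alpha>2^{1-\alpha}$, i.e.\ $E_\alpha(\psi)>1$. The boundary case $\alpha=0$ is handled separately, and even more easily: for $d=3$ and any small $\epsilon>0$ the Schmidt rank of $\psi$ is $3$, so $E_0(\psi)=\log 3>1$, while the square-root sum is made arbitrarily close to $1$ by taking $\epsilon$ small. In each case Theorem \ref{thm:phik+topsi} produces the desired non-entangling channel sending $\phi^+_2\mapsto\psi$, witnessing that $E_\alpha$ has strictly increased.

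There is no real obstacle: the argument is essentially an asymptotic estimate for two explicit sums, and the mechanism is transparent — for $\alpha<1/2$ the function $x^\alpha$ is steeper than $\sqrt{x}$ near $0$, so spreading a small total mass over many tiny Schmidt components is cheap in the $1/2$-th power sum but expensive in the $\alpha$-th power sum. The only minor care required is checking that the choice of $c$ keeps $\sum_i\sqrt{\lambda_i}\leq\sqrt{2}$ uniformly in $d$, which is guaranteed by $c<(\sqrt{2}-1)^2$.
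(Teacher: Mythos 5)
Your proof is correct, and it shares the paper's top-level reduction: both arguments invoke Theorem \ref{thm:phik+topsi} with $k=2$, so everything hinges on exhibiting a pure state $\psi$ with $R(\psi)\leq R(\phi^+_2)=1$ (equivalently $E_{1/2}(\psi)\leq 1$) but $E_{\alpha}(\psi)>1$. Where you genuinely diverge is in the witness construction. The paper uses a single fixed three-dimensional state with Schmidt vector $(1-2\epsilon,\epsilon,\epsilon)$, $\epsilon=1/18$, chosen to sit exactly on the boundary $E_{1/2}(\psi)=1$, and then argues via the strict negativity of $\left.\frac{dE_{\alpha}(\psi)}{d\alpha}\right\rvert_{\alpha=1/2}$, together with continuity and monotonicity of $E_{\alpha}(\psi)$ in $\alpha$, that $E_{\alpha}(\psi)>1$ for every $\alpha\in[0,1/2)$ simultaneously — one universal state covering all exponents, including $\alpha=0$, with no case split. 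You instead build, for each fixed $\alpha\in(0,1/2)$, a high-Schmidt-rank family with $\epsilon=c/(d-1)^2$, and your estimates check out: $\sum_i\sqrt{\lambda_i}=\sqrt{1-c/(d-1)}+\sqrt{c}\leq 1+\sqrt{c}<\sqrt{2}$ holds uniformly in $d$ (so indeed $R(\psi)<1$ for \emph{all} admissible $d$, not just large ones), while $\sum_i\lambda_i^{\alpha}\geq c^{\alpha}(d-1)^{1-2\alpha}\to\infty$ forces $E_{\alpha}(\psi)>1$ for $d$ large; the separate $\alpha=0$ case via Schmidt rank $3$ is also fine. What each approach buys: the paper's is shorter and yields one witness for all $\alpha$ at once, at the cost of the calculus fact about the $\alpha$-derivative of the R\'enyi entropy; yours is more elementary (two explicit sums, no differentiation in $\alpha$) and quantitatively stronger, since the divergence of $\sum_i\lambda_i^{\alpha}$ shows $E_{\alpha}$ can be increased by an \emph{unbounded} amount under non-entangling channels — at the price of a witness depending on $\alpha$, with $d$ necessarily growing as $\alpha\to 1/2^{-}$. (Incidentally, the paper's proof cites Theorem \ref{thm:SchmRnkWithDuallyNonentangling} at the final step, which is evidently a typo for Theorem \ref{thm:phik+topsi}; your citation is the correct one.)
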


\begin{proof}
Let $\psi$ be a pure state with Schmidt rank greater than 2 such that $E_{1/2}(\psi)=1$. States with this property certainly exist, since we may take for instance the state with Schmidt coefficients $\lambda=(1-2\epsilon, \epsilon,\epsilon)$ and $\epsilon=1/18$. Note that $E_{\alpha}(\psi)$ is a continuous and non-increasing function of $\alpha$, and that
 \[
 \left.\frac{dE_{\alpha}(\psi)}{d\alpha}\right\rvert_{\alpha=1/2}<0
 \]
for this choice of $\psi$. It follows that $E_{\alpha}(\psi)>1$ for all $\alpha<1/2$. Note however that $E_{\alpha}(\phi^+_2)=1$ holds for all $\alpha$. In particular, $E_{1/2}(\phi^+_2)=1$ so the transformation $\phi^+_2\rightarrow\psi$ is possible with non-entangling channels by Theorem \ref{thm:SchmRnkWithDuallyNonentangling}. This completes the proof.
\end{proof}

To prove that $E_{\alpha}$ is monotonic under non-entangling channels for all $\alpha\geq1/2$, we will show that the value of $E_{\alpha}$ coincides with that of another entanglement measure that is known to be monotonic under non-entangling channels. For every $\alpha\in[0,+\infty)$ and density operators $\rho$ and $\sigma$, the \emph{R\'enyi $\alpha$-relative entropy} of $\rho$ with respect to $\sigma$ is defined as
\[
 S_{\alpha}(\rho\Vert\sigma) = \left\{ \begin{array}{ll}
                                         \frac{1}{\alpha-1}\log\Tr(\rho^{\alpha}\sigma^{1-\alpha}) & \mathop{\mathrm{supp}}\rho\subseteq\mathop{\mathrm{supp}}\sigma \text{ or }\alpha\in[0,1)\\
                                         \Tr(\rho(\log\rho-\log\sigma)) & \mathop{\mathrm{supp}}\rho\subseteq\mathop{\mathrm{supp}}\sigma \text{ and }\alpha=1\\
                                         +\infty & \text{otherwise},
                                       \end{array}
 \right.
\]
where $\mathop{\mathrm{supp}}$ denotes the support of an operator. Note that $S_1=S$ is the standard relative entropy. The \emph{R\'enyi $\alpha$-relative entropy of entanglement} of a state $\rho$ is
\[
 E_{R,\alpha}(\rho):=\inf_{\sigma\in\mathcal{S}}S_{\alpha}(\rho\Vert\sigma),
\]
where the minimization is taken over all separable states $\sigma$. The R\'enyi $\alpha$-relative entropies are known to be \emph{contractive} under CPTP maps for all $\alpha\in[0,2]$, i.e.,
\[
 S_{\alpha}(\Lambda(\rho)\Vert\Lambda(\sigma))\leq S_{\alpha}(\rho\Vert\sigma)
\]
for all CPTP maps $\Lambda$ \cite{Petz1986a}. It readily follows that $E_{R,\alpha}$ is monotonic under non-entangling channels for $\alpha\in[0,2]$. In the case $\alpha=1$, this reduces to the well-known relative entropy of entanglement $E_R$. It is known that $E_{1}(\psi)=E(\psi)=E_{R}(\psi)$ for pure states \cite{Vedral1998}, which establishes that $E_1$ is monotonic under non-entangling channels. Here, we show that $E_{R,\alpha}(\psi) = E_{1/\alpha}(\psi)$ for all $\alpha\in[0,2]$.

\begin{theorem}\label{thm:EalphaisERalphainverse}
 For pure states $\psi$, it holds that $E_{R,\alpha}(\psi) = E_{1/\alpha}(\psi)$ for all $\alpha\in[0,2]$.
\end{theorem}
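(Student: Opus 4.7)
The plan is to prove the two inequalities $E_{R,\alpha}(\psi)\leq E_{1/\alpha}(\psi)$ and $E_{R,\alpha}(\psi)\geq E_{1/\alpha}(\psi)$ separately. The first is an explicit construction; the second is the main technical content.

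For the upper bound I would take as a trial separable state
\[
 \sigma^{\star}:= Z^{-1}\sum_i\lambda_i^{1/\alpha}\ketbra{ii}{ii}, \qquad Z := \sum_j\lambda_j^{1/\alpha},
\]
which is manifestly in $\calS$ as a convex combination of product pure states. Since $\psi$ is a rank-one projector, $\psi^{\alpha}=\psi$, and since $\sigma^{\star}$ is diagonal in a basis containing the Schmidt basis of $\psi$, a direct computation yields $\Tr(\psi(\sigma^{\star})^{1-\alpha}) = Z^{\alpha-1}\sum_i\lambda_i^{1/\alpha} = Z^{\alpha}$. Hence for $\alpha\in[0,2]\setminus\{1\}$,
\[
 S_\alpha(\psi\Vert\sigma^\star) = \tfrac{\alpha}{\alpha-1}\log Z = H_{1/\alpha}(\lambda) = E_{1/\alpha}(\psi),
\]
and the case $\alpha=1$ is obtained by continuity. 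This proves $E_{R,\alpha}(\psi)\leq E_{1/\alpha}(\psi)$.

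For the lower bound I would exploit the symmetry of $\psi$ under the group $G := \{V\otimes V^{*}\,:\,V$ diagonal in the Schmidt basis$\}$. Since $G$ acts by local unitaries, $G$-twirling preserves separability, and by joint convexity of $S_\alpha(\cdot\Vert\cdot)$ on $\alpha\in[0,2]$ it cannot increase $S_\alpha(\psi\Vert\sigma)$; we may therefore restrict the infimum in $E_{R,\alpha}(\psi)$ to $G$-invariant separable $\sigma$. Such a $\sigma$ is block-diagonal across the invariant subspaces of $G$, and because $\psi$ is entirely supported in $\calH_0 := \mathrm{span}\{\ket{ii}\}$, only the restriction $\sigma_0$ of $\sigma$ to $\calH_0$ contributes to $\Tr(\psi\sigma^{1-\alpha})$. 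Assuming further that $\sigma_0$ may be taken diagonal in the $\{\ket{ii}\}$ basis, the problem reduces to extremizing $\sum_i\lambda_i p_i^{1-\alpha}$ over probability vectors $p$, which by a routine Lagrange multiplier calculation is optimal at $p_i\propto\lambda_i^{1/\alpha}$ and delivers the matching value $H_{1/\alpha}(\lambda)$. The endpoint $\alpha=0$ follows from $\sup_{\sigma\in\calS}\bra{\psi}\sigma\ket{\psi} = \lambda_{\max}$, giving $E_{R,0}(\psi) = -\log\lambda_{\max} = E_\infty(\psi)$; the endpoint $\alpha=2$ follows by continuity; and $\alpha=1$ is the known pure-state identity $E_R(\psi)=E(\psi)$.

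The principal obstacle is justifying the diagonalization step for $\sigma_0$: diagonal phase unitaries $V\otimes V^{*}$ act trivially on every operator $\ketbra{ii}{jj}$, so the symmetry group $G$ alone does not eliminate off-diagonal entries of $\sigma_0$. Closing this gap requires combining the global separability constraint on $\sigma$ with operator concavity (resp.\ convexity) of $x\mapsto x^{1-\alpha}$ on $\alpha\in[0,1]$ (resp.\ $\alpha\in[1,2]$) — for instance via a Jensen-type inequality applied to a product-state decomposition of $\sigma$, or via a finer symmetrization within each Schmidt block. This is precisely the technical core of the argument and what motivates the dedicated Section~\ref{sec:Salphaproof}.
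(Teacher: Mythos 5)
Your achievability half is sound and is in fact identical to the paper's: the trial state $\sigma^\star$ is exactly the paper's optimizer $\sigma=\lVert\lambda\rVert_{1/\alpha}^{-1/\alpha}\sum_i\lambda_i^{1/\alpha}\ketbra{ii}{ii}$, with the same computation giving $S_\alpha(\psi\Vert\sigma^\star)=H_{1/\alpha}(\lambda)$, and your $\alpha=0$ endpoint argument is the paper's verbatim (via Lemma \ref{lem:Trpsisigmalambda1}). The gap is exactly where you flag it, and as proposed it is fatal. Twirling over $G=\{V\otimes V^*\}$ (note this uses contractivity of $S_\alpha$ under CPTP maps, not joint convexity, which fails on $(1,2]$) only restricts the feasible set to states supported on $\mathrm{span}\{\ketbra{ii}{jj}\}$ plus a diagonal remainder; the maximally correlated block is \emph{pointwise fixed} by every $V\otimes V^*$, so no further symmetrization is available there, and separable $G$-invariant states genuinely carry off-diagonal $\ketbra{ii}{jj}$ coherences (twirl any product pure state to see this). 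Worse, your proposed repair points the wrong way: for $\alpha\in(0,1)$, minimizing $S_\alpha$ means \emph{maximizing} $\bra{\psi}\sigma^{1-\alpha}\ket{\psi}$, so you need an \emph{upper} bound on this quantity over separable $\sigma$, but operator concavity of $x\mapsto x^{1-\alpha}$ applied to a decomposition $\sigma=\sum_k p_k\,\tau_k\otimes\omega_k$ yields $\sigma^{1-\alpha}\geq\sum_k p_k(\tau_k\otimes\omega_k)^{1-\alpha}$, a lower bound; symmetrically, for $\alpha\in(1,2]$ operator convexity of $x^{1-\alpha}$ gives an upper bound where a lower bound is needed. The true constraint — that off-diagonal weight $\bra{ii}\sigma\ket{jj}$ of a separable state must be paid for by diagonal weight outside $\mathrm{span}\{\ket{ii}\}$ — is invisible to a Jensen argument.

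The paper closes this by a different mechanism that never diagonalizes the optimizer: since $S_\alpha(\rho\Vert\cdot)$ is convex on $[0,2]$ \cite{Mosonyi2011}, global optimality of $\sigma^\star$ is equivalent to the first-order condition \eqref{eq:salphaderivative}, namely nonnegativity of the directional derivative $\frac{d}{dt}S_\alpha\bigl(\psi\Vert(1-t)\sigma^\star+t\sigma'\bigr)\big|_{t=0^+}$ for every pure product $\sigma'$ \cite{Girard2014}. Because $\sigma^\star$ is rank-deficient, computing this derivative requires the divided-difference calculus of \cite{Girard2018} (Section \ref{sec:frechet}), after which nonnegativity reduces, via Cauchy--Schwarz, to the scalar inequality $0\leq\frac{\sqrt{pq}}{1-\alpha}f_\alpha^{[1]}(p^{1/\alpha},q^{1/\alpha})\leq1$ of Lemma \ref{lem:technical}, itself proved with integral representations of $x^r$. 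Note also that $\alpha=2$ needs separate care in this framework, since $f_2(x)=x^{-1}$ violates the condition $\lim_{t\to0^+}tf(t)=0$ and only the derivative lower bound \eqref{eq:ddtfpAtBgeq} is available; your alternative of deducing $\alpha=2$ from $\alpha<2$ by monotonicity of $S_\alpha$ in $\alpha$ together with continuity of $H_{1/\alpha}$ would be legitimate, but only once the lower bound for $\alpha\in(0,2)$ — the step your sketch leaves open — is actually established.
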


The proof of Theorem \ref{thm:EalphaisERalphainverse}, which can be found in Section \ref{sec:Salphaproof}, uses similar methods to those in Ref.~\cite{Vedral1998} where it is proven that $E(\psi)=E_{R}(\psi)$. The fact that $E_{\alpha}$ on pure states is monotonic under non-entangling channels for any $\alpha\geq 1/2$ now follows directly from Theorem \ref{thm:EalphaisERalphainverse}.

\begin{corollary}
 For all $\alpha\in[1/2,+\infty]$, the measures $E_{\alpha}$ are monotonic for pure states under non-entangling channels.
\end{corollary}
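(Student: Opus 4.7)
The plan is to reduce the claim entirely to Theorem \ref{thm:EalphaisERalphainverse} combined with the already-noted contractivity of the R\'enyi relative entropies $S_\beta$ under CPTP maps for $\beta\in[0,2]$. Given $\alpha\in[1/2,+\infty]$, I would set $\beta=1/\alpha\in[0,2]$, with the convention that $\beta=0$ when $\alpha=+\infty$. Theorem \ref{thm:EalphaisERalphainverse} then identifies
\[
E_{\alpha}(\psi)=E_{R,\beta}(\psi)
\]
for every pure state $\psi$, so that the desired monotonicity of $E_\alpha$ on pure states translates into the corresponding question for the relative R\'enyi entropy of entanglement $E_{R,\beta}$.

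Next I would check that $E_{R,\beta}$ itself is non-increasing under any non-entangling channel $\Lambda$. For every separable $\sigma$ the image $\Lambda(\sigma)$ is again separable by the non-entangling assumption, and contractivity of $S_\beta$ on the pair $(\rho,\sigma)$ yields
\[
E_{R,\beta}(\Lambda(\rho))\;\le\;S_\beta(\Lambda(\rho)\Vert\Lambda(\sigma))\;\le\;S_\beta(\rho\Vert\sigma).
\]
Taking the infimum over separable $\sigma$ on the right gives $E_{R,\beta}(\Lambda(\rho))\le E_{R,\beta}(\rho)$ for every state $\rho$. Specialising to a pure input $\psi$ whose image $\widehat{\psi}=\Lambda(\psi)$ is also pure, and applying Theorem \ref{thm:EalphaisERalphainverse} at both ends, one obtains
\[
E_{\alpha}(\widehat{\psi})\;=\;E_{R,\beta}(\widehat{\psi})\;\le\;E_{R,\beta}(\psi)\;=\;E_{\alpha}(\psi),
\]
which is exactly the claimed monotonicity.

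I do not anticipate any real obstacle here: once Theorem \ref{thm:EalphaisERalphainverse} is granted, the content of the corollary is essentially the inheritance of contractivity of $S_\beta$ through the variational formula for $E_{R,\beta}$. The only mildly delicate points are the two endpoints, which map exactly to the endpoints of the contractivity range. At $\alpha=1/2$ one has $\beta=2$, sitting at the upper boundary of $[0,2]$, and this is precisely why the corollary's lower endpoint is $1/2$ rather than something smaller. At $\alpha=+\infty$ one has $\beta=0$, where $S_0(\rho\Vert\sigma)=-\log\Tr(\Pi_{\rho}\sigma)$ (with $\Pi_\rho$ the projector onto $\supp\rho$) is still contractive under CPTP maps, so the argument extends to the full closed interval $[1/2,+\infty]$ without modification.
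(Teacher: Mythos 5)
Your proposal is correct and takes essentially the same route as the paper, which likewise derives the corollary directly from Theorem \ref{thm:EalphaisERalphainverse} combined with the contractivity of $S_{\alpha}$ under CPTP maps for $\alpha\in[0,2]$ (which gives monotonicity of $E_{R,\alpha}$ under non-entangling channels exactly as in your display). Your explicit treatment of the endpoints $\alpha=1/2$ (i.e.\ $\beta=2$) and $\alpha=+\infty$ (i.e.\ $\beta=0$, which the paper's proof of the theorem handles separately via $E_{R,0}(\psi)=-\log\lambda_1=E_{+\infty}(\psi)$) merely spells out details the paper leaves implicit.
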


While majorization provides necessary and sufficient conditions for conversions of pure states under LOCC, whether one can find similar necessary and sufficient conditions for convertibility of arbitrary pure states under non-entangling channels remain unknown. In the following, we provide a sufficient condition for convertibility under non-entangling channels of arbitrary pure states that is independent of the majorization criterion. We first provide the following Lemma (which was first proved in Ref.\ \cite{Shimony1995}).

\begin{lemma}\label{lem:Trpsisigmalambda1}
 For every pure state $\psi$, it holds that $\max_{\sigma\in\mathcal{S}}\Tr(\psi\sigma)=\lambda_1$ where $\lambda_1$ is the largest Schmidt coefficient of $\psi$.
\end{lemma}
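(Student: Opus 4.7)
The plan is to reduce the optimization to pure product states and then apply Cauchy–Schwarz. Since $\Tr(\psi\sigma)$ is linear in $\sigma$ and the set $\calS$ of separable states is convex with extreme points given by pure product states $\ketbra{a}{a}\otimes\ketbra{b}{b}$, the maximum
\[
 \max_{\sigma\in\calS}\Tr(\psi\sigma) = \max_{\ket{a},\ket{b}} |\braket{\psi|a,b}|^2,
\]
where the maximum on the right is over unit vectors $\ket{a}\in\calH_\msf{A}$ and $\ket{b}\in\calH_\msf{B}$. This is the standard reduction and I would state it briefly without dwelling on it.

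Next I would expand $\ket{\psi}=\sum_i\sqrt{\lambda_i}\ket{ii}$ in the Schmidt basis and write $a_i = \braket{i|a}$ and $b_i = \braket{i|b}$, so that
\[
 \braket{\psi|a,b} = \sum_i\sqrt{\lambda_i}\,a_i^*\,b_i^*.
\]
Applying the Cauchy–Schwarz inequality to the vectors with components $\sqrt{\lambda_i}\,a_i^*$ and $b_i^*$ gives
\[
 |\braket{\psi|a,b}|^2 \leq \Bigl(\sum_i \lambda_i |a_i|^2\Bigr)\Bigl(\sum_i |b_i|^2\Bigr) \leq \lambda_1\Bigl(\sum_i |a_i|^2\Bigr)\Bigl(\sum_i |b_i|^2\Bigr) = \lambda_1,
\]
where I used that $\lambda_1$ is the largest Schmidt coefficient and that $\ket{a},\ket{b}$ are unit vectors.

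To establish that the bound is tight, I would exhibit the explicit separable state $\sigma = \ketbra{1}{1}\otimes\ketbra{1}{1}$, for which $\Tr(\psi\sigma) = |\braket{\psi|1,1}|^2 = \lambda_1$. Combining the upper bound and this witnessing state yields the claimed equality. There is no real obstacle here — the result is a short, self-contained calculation whose only non-mechanical ingredient is the observation that separable states are convex combinations of pure product states, so that extremizing $\Tr(\psi\sigma)$ over $\calS$ coincides with extremizing the pure-product-state overlap.
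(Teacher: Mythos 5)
Your proof is correct and takes essentially the same approach as the paper's: both reduce the optimization to pure product states, expand in the Schmidt basis, bound the overlap by Cauchy--Schwarz together with $\lambda_i\leq\lambda_1$, and establish tightness with the separable state $\ketbra{11}{11}$. The only cosmetic difference is the order of the two elementary inequalities (you fold the Schmidt weights into the Cauchy--Schwarz step, while the paper extracts $\lambda_1$ first and then applies Cauchy--Schwarz to $\sum_i\lvert u_i\rvert\lvert v_i\rvert$), which changes nothing of substance.
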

\begin{proof}
We may suppose without loss of generality that $\psi$ is in Schmidt form $\ket{\psi}=\sum_{i=1}^n\sqrt{\lambda_i}\ket{ii}$.  We first show that $\Tr(\psi\sigma)\leq\lambda_1$ holds for all $\sigma\in\mathcal{S}$. It will suffice to show that $\Tr(\psi\phi)\leq\lambda_1$ for all separable pure states $\ket{\phi}=\sum_{i,j}u_iv_j\ket{ij}$ with $\sum_{i}\lvert u_i\rvert^2 = \sum_{j}\lvert v_j\rvert^2=1$. Note that $\Tr(\psi\phi) = \lvert\braket{\psi|\phi}\rvert^2$ and
 \begin{align*}
  \lvert\braket{\psi|\phi}\rvert^2 & = \Bigl\lvert\sum_{i}\sqrt{\lambda_i}u_i v_i\Bigr\rvert^2\\
                                 & \leq \lambda_1\Bigl(\sum_{i}\lvert u_i\rvert\lvert v_i\rvert\Bigr)^2 \\
                                 &\leq \lambda_1,
 \end{align*}
 as desired.
 Since $\ket{11}$ is separable and $\lvert\braket{\psi|11}\rvert^2=\lambda_1$, the maximum is achieved. 
\end{proof}

\begin{theorem}\label{thm:Lambdapsi=phi}
 Let $\psi$ and $\phi$ be pure states with Schmidt coefficients $\lambda$ and $\mu$ respectively. If it holds that $1+R(\phi)\leq 1/\lambda_1$, then there exists a non-entangling channel $\Lambda$ such that $\Lambda(\psi)=\phi$.
\end{theorem}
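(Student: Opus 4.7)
The plan is to generalize the construction used in the proof of Theorem \ref{thm:phik+topsi}, replacing the projector onto $\phi_k^+$ (whose overlap with any separable state is bounded by $1/k$) with the projector onto $\psi$ (whose overlap with any separable state is bounded by $\lambda_1$, per Lemma \ref{lem:Trpsisigmalambda1}). The hypothesis $1+R(\phi)\leq 1/\lambda_1$ is exactly what is needed to compensate for the weaker bound coming from a non-maximally entangled input.

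Concretely, I would let $\sigma\in\calS$ be a separable state attaining the robustness minimum, so that $R(\phi)=R(\phi\Vert\sigma)$, and define the CPTP map
\[
 \Lambda(X) = \Tr(\psi X)\,\phi + \Tr((\I-\psi)X)\,\sigma,
\]
which is of the form \eqref{eq:LambdaA} with $A=\psi$ and hence automatically completely positive and trace preserving. Plugging in $X=\psi$ gives $\Tr(\psi^2)=1$ and $\Tr((\I-\psi)\psi)=0$, so $\Lambda(\psi)=\phi$, which handles the transformation itself.

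The main work is to verify that $\Lambda$ is non-entangling. For any $\rho\in\calS$, Lemma \ref{lem:Trpsisigmalambda1} gives $p:=\Tr(\psi\rho)\leq \lambda_1$, so $\Lambda(\rho)=p\phi+(1-p)\sigma$. By the definition of robustness, $\tfrac{1}{1+R(\phi)}\phi+\tfrac{R(\phi)}{1+R(\phi)}\sigma$ is separable, and so for any $q\leq \tfrac{1}{1+R(\phi)}$ one can write $q\phi+(1-q)\sigma$ as a convex combination of this separable state and $\sigma$, hence it is separable. The hypothesis $1+R(\phi)\leq 1/\lambda_1$ precisely guarantees $\lambda_1\leq \tfrac{1}{1+R(\phi)}$, and therefore $p\leq \tfrac{1}{1+R(\phi)}$ for every $\rho\in\calS$, so $\Lambda(\rho)\in\calS$.

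There is no real obstacle here; the only conceptual step is to recognize that the overlap $\Tr(\psi\rho)$ with separable $\rho$ is governed by the top Schmidt coefficient $\lambda_1$ rather than by the reciprocal of the Schmidt rank, which is why the result is phrased in terms of $1/\lambda_1$ rather than $k$. Everything else is a direct imitation of the argument already given for Theorem \ref{thm:phik+topsi}.
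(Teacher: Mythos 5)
Your proposal is correct and is essentially identical to the paper's own proof: the same channel $\Lambda(X) = \Tr(\psi X)\phi + \Tr((\I-\psi)X)\sigma$ with $\sigma$ attaining $R(\phi)=R(\phi\Vert\sigma)$, the same use of Lemma \ref{lem:Trpsisigmalambda1} to bound $\Tr(\psi\rho)\leq\lambda_1$ on separable inputs, and the same robustness argument for separability of the output. You simply spell out the convexity step that the paper leaves implicit, so there is nothing to add.
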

\begin{proof}
 The proof is very similar to that of Theorem \ref{thm:phik+topsi}. The desired channel is given by
 \begin{equation}
  \Lambda(X) = \Tr(\psi X)\phi + \Tr((\I-\psi)X)\sigma
 \end{equation}
where $\sigma\in\mathcal{S}$ is a separable state satisfying $R(\phi)=R(\phi\Vert\sigma)$. It is evident that $\Lambda(\psi)=\phi$, so it remains to show that $\Lambda$ is non-entangling. By construction, it holds that $p\phi + (1-p)\sigma\in\mathcal{S}$ whenever $p\leq 1/(1+R(\phi))$. By Lemma \ref{lem:Trpsisigmalambda1}, it holds that $\Tr(\psi\rho)\leq \lambda_1$ for any separable state $\rho$. If the condition that $1+R(\phi)\leq 1/\lambda_1$ is met, it follows that $\Lambda(\rho)\in\mathcal{S}$ whenever $\rho$ is separable and thus $\Lambda$ is non-entangling.
\end{proof}

\subsection{Entanglement measures under dually non-entangling maps}
\label{sec:duallynonentangling}

We now explore what transformations are possible under the more restrictive class of dually non-entangling channels. Recall that these are the channels $\Lambda$ such that both it and its dual $\Lambda^*$ do not generate entanglement (although $\Lambda^*$ is not necessarily a channel). In particular, we show that the Schmidt rank can be increased by this more restrictive class of transformations.

We first recall some conditions for certain operators to be separable.
For a pure state $\psi$ with Schmidt coefficients $\lambda=(\lambda_1,\dots,\lambda_d)$ (in decreasing order), it holds that $R(\psi\Vert\frac{1}{d^2}\I_{d^2})=d^2\sqrt{\lambda_1\lambda_2}$ (see Appendix B of \cite{Vidal1998}). Hence operators of the form $\psi + s\I_{d^2}$ for $s\geq0$ are separable if and only if $s\geq \sqrt{\lambda_1\lambda_2}$. Additionally, for any hermitian operator $\Delta$ on a bipartite space with local dimensions $d_1$ and $d_2$, the operator $\I_{d_1d_2} + \Delta$ is always separable whenever $\Vert\Delta\rVert\leq 1$, where $\lVert\cdot\rVert$ is the Frobenius norm (see Theorem 1 of Ref.\ \cite{Gurvits2002}). We now state a useful condition for certain types of channels to be dually non-entangling.

\begin{lemma} \label{lem:psiphiduallynonent}
 Let $\psi$ and $\phi$ be pure states with Schmidt coefficients (in decreasing order) $\lambda$ and $\mu$ respectively, and consider the CPTP map $\Lambda:\calL(\CC^k\otimes\CC^k)\to\calL(\CC^d\otimes\CC^d)$ defined by
 \begin{equation}\label{eq:lambdapsiphi}
  \Lambda(X) = \Tr(\psi X)\phi + \Tr((\I_{k^2}-\psi)X)\frac{1}{d^2}\I_{d^2}.
 \end{equation}
If it holds that
 \begin{equation}\label{eq:lambda1nonentangling}
  \lambda_1\leq \frac{1}{d^2\sqrt{\mu_1\mu_2}}
 \end{equation}
 then $\Lambda$ is non-entangling. If it furthermore holds that
 \begin{equation}\label{eq:mu1dualnonentangling}
  d^2\mu_1\leq 1 + \frac{1}{\sqrt{\lambda_1\lambda_2}}
 \end{equation}
 then $\Lambda$ is also dually non-entangling.
\end{lemma}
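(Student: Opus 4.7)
The plan is to verify both claims by directly computing the image of an arbitrary separable operator under $\Lambda$ (and under $\Lambda^*$ in the dual case), then appealing to the separability criterion recorded in the preliminaries: $\phi+s\I_{d^2}$ is separable iff $s\ge\sqrt{\mu_1\mu_2}$, and analogously $\psi+t\I_{k^2}$ is separable iff $t\ge\sqrt{\lambda_1\lambda_2}$. Lemma \ref{lem:Trpsisigmalambda1} supplies the key bounds $\Tr(\psi\rho)\le\lambda_1$ and $\Tr(\phi\rho')\le\mu_1$ for separable $\rho$ and $\rho'$.

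For the non-entangling claim, I take $\rho\in\calS$ and set $p=\Tr(\psi\rho)\in[0,\lambda_1]$. The image
\[
\Lambda(\rho) = p\phi + \tfrac{1-p}{d^2}\I_{d^2}
\]
is, for $p>0$, a positive scalar multiple of $\phi+\tfrac{1-p}{pd^2}\I_{d^2}$. Assumption \eqref{eq:lambda1nonentangling} is the inequality required to force $\tfrac{1-p}{pd^2}\ge\sqrt{\mu_1\mu_2}$ uniformly over $p\in(0,\lambda_1]$, and the preliminary criterion then delivers $\Lambda(\rho)\in\calS$; the case $p=0$ is immediate.

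For the dually non-entangling claim, the dual formula from Section~\ref{sec:prelims} gives $\Lambda^*(Y)=\Tr(\phi Y)\psi+\tfrac{\Tr(Y)}{d^2}(\I_{k^2}-\psi)$. For a separable density operator $\rho'$ I put $q=\Tr(\phi\rho')\in[0,\mu_1]$ and rewrite
\[
\Lambda^*(\rho') = \tfrac{1}{d^2}\I_{k^2} + \bigl(q-\tfrac{1}{d^2}\bigr)\psi.
\]
When $q\ge 1/d^2$ this has the form $\alpha\psi+\beta\I_{k^2}$ with $\alpha,\beta\ge 0$, and condition \eqref{eq:mu1dualnonentangling} is exactly what is needed (in the worst case $q=\mu_1$, via the $\psi+t\I_{k^2}$ criterion) to conclude separability. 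When $q<1/d^2$ the coefficient on $\psi$ is negative, but $\Lambda^*(\rho')$ can be written as the non-negative combination $(1-qd^2)\cdot\tfrac{1}{d^2}(\I_{k^2}-\psi)+qd^2\cdot\tfrac{1}{d^2}\I_{k^2}$; since the second summand is manifestly separable, separability of $\Lambda^*(\rho')$ reduces to separability of $\I_{k^2}-\psi$, which I would establish by decomposing $\I_{k^2}-\psi=\sum_{i\ne j}\ketbra{ij}{ij}+M$ in the Schmidt basis of $\psi$ (so the off-diagonal sum is manifestly separable) and handling the diagonal piece $M$, supported on $\operatorname{span}\{\ket{ii}\}_{i=1}^{k}$, via an isotropic-state/twirling argument.

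The main obstacle is this last step: separability of $\I_{k^2}-\psi$ for arbitrary pure $\psi$ is straightforward only in dimension $k=2$ (where PPT suffices) and in the maximally entangled case (where $\I_{k^2}-\psi$ is proportional to an isotropic state with negative mixing parameter). Extending the argument to general $\psi$ in dimension $k\ge 3$ will require the finer analysis of the diagonal piece $M$ sketched above, together with the structural constraint \eqref{eq:lambda1nonentangling} to rule out borderline cases. Once that piece is secured, gluing the $q<1/d^2$ and $q\ge 1/d^2$ subcases completes the proof that $\Lambda^*$ is separable-preserving, and the lemma follows.
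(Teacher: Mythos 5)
Your argument for the non-entangling claim and for the $q\ge 1/d^2$ branch of the dual claim coincides with the paper's proof (same decomposition of $\Lambda^*(\rho')$, same worst-case $q=\mu_1$ via the $\psi+t\,\I_{k^2}$ criterion). The genuine gap is exactly where you flagged it: the $q<1/d^2$ branch. But the missing ingredient is already recorded in the paragraph immediately preceding the lemma: Theorem 1 of Ref.\ \cite{Gurvits2002} (the Gurvits--Barnum separable ball), stating that $\I+\Delta$ is separable for \emph{any} hermitian $\Delta$ with Frobenius norm $\lVert\Delta\rVert_2\le 1$. The paper applies this directly to $\Lambda^*(\rho')\propto \I_{k^2}+(d^2q-1)\psi$: since $\lVert\psi\rVert_2=1$ and $\lvert d^2q-1\rvert\le 1$ whenever $0\le q\le 1/d^2$, separability is immediate for every $k$ and every pure $\psi$, with no case analysis on the Schmidt rank and no need for your convex split. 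In particular your reduction to separability of $\I_{k^2}-\psi$ terminates in one line (take $\Delta=-\psi$), so your ``main obstacle'' is a non-obstacle. Moreover, your fallback sketch for the diagonal block is problematic in direction: twirling can only certify \emph{entanglement}, not separability --- separability of the averaged state does not imply separability of the original --- and for non-maximally-entangled $\psi$ there is no natural twirl group fixing $\psi$ in the first place. As written, then, the proof is incomplete at precisely the step you identified; citing the Gurvits--Barnum ball closes it.

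One smaller point, which you share with the paper's own prose: the uniform bound $\tfrac{1-p}{pd^2}\ge\sqrt{\mu_1\mu_2}$ over $p\in(0,\lambda_1]$ is equivalent to $\lambda_1\le 1/(1+d^2\sqrt{\mu_1\mu_2})$, which is strictly stronger than \eqref{eq:lambda1nonentangling} at the boundary, so your claim that \eqref{eq:lambda1nonentangling} ``is the inequality required'' is not literally correct. The downstream application in Theorem \ref{thm:SchmRnkWithDuallyNonentangling} uses the tight form (note the ``$1+$'' in the denominator of \eqref{eq:epsdelta1}), so nothing breaks there, but a careful write-up should state the condition with the $1+d^2\sqrt{\mu_1\mu_2}$ denominator or note the slack explicitly.
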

\begin{proof} Outputs of the channel in \eqref{eq:lambdapsiphi} are of the form $\Lambda(\rho)= p\phi + (1-p)\I_{d^2}/d^2$ for any state $\rho$, where $p=\Tr(\psi\rho)$. From the observation above, this is separable if and only if $p\leq 1/(1+d^2\sqrt{\mu_1\mu_2})$ since $\lVert \phi\rVert =1$. Since $\Tr(\psi\rho)\leq \lambda_1$ for any separable state $\rho$ (see Lemma \ref{lem:Trpsisigmalambda1}), we obtain the desired result that $\Lambda$ is non-entangling whenever the condition in \eqref{eq:lambda1nonentangling} holds.  On the other hand, for any density operator $\rho$, its output under the dual map of $\Lambda$ is given by
\[
 \Lambda^*(\rho) = \left(\Tr(\phi\rho)-\frac{1}{d^2}\right)\psi + \frac{1}{d^2}\I_{k^2}.
\]
Let $\rho$ be a separable state. If it holds that $\Tr(\phi\rho)\leq1/d^2$ then the operator $(d^2\Tr(\phi\rho)-1)\psi+\I_{k^2}$ is separable by Theorem 1 of \cite{Gurvits2002}, since $\lVert\psi\rVert=1$ and $\lvert d^2\Tr(\phi\rho)-1\rvert\leq1$. We may therefore assume that $\Tr(\phi\rho)>1/d^2$, in which case $\Lambda^*(\rho)$ is separable if and only if $d^2\Tr(\phi\rho)\leq 1+1/\sqrt{\lambda_1\lambda_2}$. Since $\Tr(\phi\rho)\leq \mu_1$ holds for any separable state $\rho$, it follows that $\Lambda^*(\rho)$ is separable for all separable states $\rho$, as long as the condition in \eqref{eq:mu1dualnonentangling} holds.
\end{proof}

Now that we have conditions for a channel of the form in \eqref{eq:lambdapsiphi} to be dually non-entangling, we may construct such dually non-entangling channels that increase the Schmidt rank of pure states arbitrarily. The trick is to construct pure states $\psi$ and $\phi$ with Schmidt coefficients $\lambda$ and $\mu$ that satisfy both \eqref{eq:lambda1nonentangling} and \eqref{eq:mu1dualnonentangling}. This is done in the proof of Theorem \ref{thm:SchmRnkWithDuallyNonentangling}.

\begin{theorem}\label{thm:SchmRnkWithDuallyNonentangling}
 For all integers $d,k\geq2$, there exists pure states $\psi$ and $\phi$ with Schmidt ranks equal to $k$ and $d$, respectively, and a dually non-entangling channel $\Lambda$ such that $\Lambda(\psi)=\phi$.
\end{theorem}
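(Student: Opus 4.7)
The plan is to exhibit the target channel directly from Lemma~\ref{lem:psiphiduallynonent}. Namely, for any candidate pure states $\psi,\phi$ we take
\[
 \Lambda(X) = \Tr(\psi X)\phi + \Tr((\I_{k^2}-\psi)X)\tfrac{1}{d^2}\I_{d^2},
\]
which sends $\psi\mapsto\phi$ automatically because $\Tr(\psi^2)=1$ and $\Tr((\I_{k^2}-\psi)\psi)=0$. The entire problem therefore reduces to producing Schmidt spectra $\lambda\in\RR^k$ and $\mu\in\RR^d$, both with strictly positive entries so that the Schmidt ranks are exactly $k$ and $d$, which simultaneously verify the two hypotheses \eqref{eq:lambda1nonentangling} and \eqref{eq:mu1dualnonentangling} of the lemma.

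A naive choice of the uniform spectra $\lambda_i=1/k$ and $\mu_j=1/d$ satisfies both conditions only when $k\geq d$, so the interesting case is $d>k$. To handle it I would use two small independent parameters $\delta,\epsilon>0$ and take peaked distributions
\begin{align*}
 \lambda &= \bigl(1-(k-1)\delta,\;\delta,\;\delta,\;\dots,\;\delta\bigr),\\
 \mu &= \bigl(1-(d-1)\epsilon,\;\epsilon,\;\epsilon,\;\dots,\;\epsilon\bigr),
\end{align*}
which are valid, ordered Schmidt spectra of ranks $k$ and $d$ for any $\delta\in(0,1/(k-1))$ and $\epsilon\in(0,1/(d-1))$. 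The key observation is that the two conditions of the lemma are controlled by the two parameters in an essentially decoupled way: shrinking $\epsilon$ makes $d^2\sqrt{\mu_1\mu_2}\approx d^2\sqrt{\epsilon}$ small, which makes \eqref{eq:lambda1nonentangling} easy since $\lambda_1\leq 1$; shrinking $\delta$ makes $1/\sqrt{\lambda_1\lambda_2}\approx 1/\sqrt{\delta}$ large, which handles \eqref{eq:mu1dualnonentangling} even though the left-hand side $d^2\mu_1$ is close to $d^2$.

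Concretely I would first fix $\epsilon$, say $\epsilon = 1/(2d^4)$, so that $d^2\sqrt{\mu_1\mu_2}\leq d^2\sqrt{\epsilon}=1/\sqrt{2}<1\leq 1/\lambda_1$, verifying \eqref{eq:lambda1nonentangling}. With that $\epsilon$ fixed, the left-hand side of \eqref{eq:mu1dualnonentangling} is at most $d^2$, so the inequality holds provided $\sqrt{(1-(k-1)\delta)\delta}\leq 1/(d^2-1)$, which is evidently satisfied for every sufficiently small $\delta>0$. Both hypotheses of Lemma~\ref{lem:psiphiduallynonent} then apply, and the channel $\Lambda$ above is dually non-entangling. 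The only place where one might fear an obstruction is the apparent tension between the two conditions (\eqref{eq:lambda1nonentangling} pushes $\mu$ toward a product-like distribution, while \eqref{eq:mu1dualnonentangling} pushes it toward uniform), but this tension dissolves once one realizes that the dual condition can be paid for in the $\lambda$ variable rather than in the $\mu$ variable, so the two parameters can be tuned independently.
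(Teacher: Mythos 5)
Your construction reproduces the paper's skeleton exactly (same channel, same peaked spectra, same appeal to Lemma \ref{lem:psiphiduallynonent}), but the parameter tuning fails, and the failure is real, not cosmetic. The inequality \eqref{eq:lambda1nonentangling} that you verify is weaker than what the lemma's own proof actually establishes: there, the output on a separable input $\rho$ is $p\phi+(1-p)\I_{d^2}/d^2$ with $p=\Tr(\psi\rho)$, and this is separable \emph{if and only if} $p\leq 1/(1+d^2\sqrt{\mu_1\mu_2})$, while by Lemma \ref{lem:Trpsisigmalambda1} the value $p=\lambda_1$ is attained on the product input $\ketbra{11}{11}$. So the condition genuinely required is $\lambda_1\leq 1/(1+d^2\sqrt{\mu_1\mu_2})$, with the ``$1+$'' in the denominator; equation \eqref{eq:lambda1nonentangling} as printed drops it (a typo in the lemma statement --- note that the paper's own proof of the theorem enforces the correct version through \eqref{eq:epsdelta1}, which for the chosen spectra is precisely $\lambda_1\leq 1/(1+d^2\sqrt{\mu_1\mu_2})$). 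With your parameters the corrected condition is badly violated: fixing $\epsilon=1/(2d^4)$ gives $d^2\sqrt{\mu_1\mu_2}\approx 1/\sqrt{2}$, so you would need $\lambda_1\leq 1/(1+1/\sqrt{2})\approx 0.59$, whereas you then send $\delta\to 0$ so that $\lambda_1=1-(k-1)\delta\to 1$. Concretely, your channel sends the separable state $\ketbra{11}{11}$ to $\lambda_1\phi+(1-\lambda_1)\I_{d^2}/d^2$, which for small $\delta$ is essentially the entangled pure state $\phi$; so your $\Lambda$ is not even non-entangling, let alone dually non-entangling.

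The deeper point is that the decoupling you assert in the last sentence is exactly what breaks. The true non-entangling condition reads $d^2\sqrt{\mu_1\mu_2}\leq(1-\lambda_1)/\lambda_1\approx(k-1)\delta$, so $\epsilon$ must be chosen small \emph{relative to} $\delta$ (on the order of $\delta^2/d^4$), not independently of it; the tension you dismiss as illusory is real, only its resolution is sequential rather than impossible. The correct tuning order is the reverse of yours: first pick $\delta$ small in terms of $d$ alone (this secures the dual condition \eqref{eq:mu1dualnonentangling}, and your treatment of that half is fine), and only then pick $\epsilon$ much smaller in terms of $\delta$. This is what the paper does with $\delta=d^{-4}$ and $\epsilon=d^{-12}$, verifying \eqref{eq:epsdelta1} and \eqref{eq:epsdelta2} in the appendix. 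Your argument can be repaired by adopting this ordering and checking the strengthened inequality $\lambda_1\leq 1/(1+d^2\sqrt{\mu_1\mu_2})$ in place of \eqref{eq:lambda1nonentangling}; as written, however, the proof is invalid.
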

\begin{proof}
Let $d,k\geq2$ be integers and choose positive real numbers $\delta$ and $\epsilon$ small enough so that
\begin{align}
 1-\delta           &\leq \frac{1}{1+\frac{d^2}{\sqrt{d-1}}\sqrt{(1-\epsilon)\epsilon}}
 \label{eq:epsdelta1}\\
 \text{and}\qquad
 d^2(1-\epsilon)    &\leq 1+\frac{\sqrt{k-1}}{\sqrt{(1-\delta)\delta}}
 \label{eq:epsdelta2}
\end{align}
are both satisfied. This can be done, since, for example, the values $\delta=d^{-4}$ and $\epsilon=d^{-12}$ satisfy both \eqref{eq:epsdelta1} and \eqref{eq:epsdelta2} (see Appendix \ref{appendix:epsdeltaproof} for proof). Let $\psi$ and $\phi$ be pure states with Schmidt coefficients $\lambda$ and $\mu$ given by
\begin{align}
 \lambda &= \left(1-\delta, \frac{\delta}{k-1}, \dots,\frac{\delta}{k-1}\right)\\
 \text{and}\quad  \mu &= \left(1-\epsilon, \frac{\epsilon}{d-1},\dots,\frac{\epsilon}{d-1}\right)
\end{align}
such that $\psi$ and $\phi$ have Schmidt rank $k$ and $d$, respectively. The desired channel is given by
  \begin{equation}
  \Lambda(X) = \Tr(\psi X)\phi + \Tr((\I_{k^2}-\psi)X)\frac{1}{d^2}\I_{d^2},
 \end{equation}
 which is dually non-entangling by Lemma \ref{lem:psiphiduallynonent} and performs the transformation $\Lambda(\psi)=\phi$.
\end{proof}

There is nothing remarkable about the statement in Theorem \ref{thm:SchmRnkWithDuallyNonentangling} in the case when $k\geq d$. However, if one chooses $2\leq k < d$, Theorem \ref{thm:SchmRnkWithDuallyNonentangling} indicates that it is possible to increase the Schmidt rank of a pure state from $k$ to $d$ by a non-entangling channel. We remark that it does not seem possible to strengthen the above results by considering maps whose outputs are mixtures of $\phi$ and some separable state $\sigma$ other than the identity, as we did in Theorems \ref{thm:phik+topsi} and \ref{thm:Lambdapsi=phi} for the case of non-entangling channels. The reason is that $\sigma$ can be rank-deficient \cite{Gurvits2002} so that there exists some other separable state $\rho\in\mathcal{S}$ such that $\Tr(\sigma\rho) = 0$. With this state, the output of the dual map $\Lambda^*(\rho)$ is entangled and thus $\Lambda$ cannot be dually non-entangling.

Note that a completely positive map $\Lambda$ is separable if and only if its dual map $\Lambda^*$ is separable. As we have seen here, the same is not necessarily true of separability preserving maps, a fact that lead us to explore the class of maps that are dually non-entangling. Similarly, a map $\Lambda$ is PPT if and only if its dual map is PPT. Indeed, for any map $\Lambda$ it is clear that $(\Lambda^*)^{\Gamma}=(\Lambda^{\Gamma})^*$. Since a map is completely positive if and only if its dual is, it follows that $\Lambda^{\Gamma}$ is completely positive if and only if $(\Lambda^*)^{\Gamma}$ is. The condition that a map is PPT-preserving, however, is not equivalent to the condition that its dual map be PPT-preserving. The class of ``dually PPT-preserving'' maps may also be studied in principle, but are not investigated in this paper.

\subsection{Negativity under PPT-preserving channels}
\label{sec:PPTpreservingtransformations}

We now investigate the difference between the class of PPT maps (whose Choi representation is PPT) defined by Rains \cite{Rains1999} and the class of PPT-preserving maps defined in this paper. Despite some confusion in the literature \cite{Audenaert2003,Plenio2007}, these classes are not equivalent.\footnote{Note that a comparison of the distillable entanglement to the entanglement cost under PPT channels has been carried out in Ref.\ \cite{Audenaert2003}, whose authors use the terminology ``operations preserving the positivity of partial transpose'' for what we call PPT maps.} (This distinction goes away, however, if we require that a map be PPT-preserving when tensored with the identity on any ancilla space.) We have seen that these two classes of maps are indeed distinct, as, for example, the swap operation is clearly PPT-preserving but its Choi representation is not PPT. To further distinguish between these classes of maps, we present in this section examples of transformations under PPT-preserving channels that is not possible under PPT channels. This will be done by investigating the \emph{negativity}, a quantity defined as
\[
 N(\rho) = \frac{\Tr\lvert\rho^\Gamma\rvert -1}{2}
\]
for bipartite states $\rho$. It is well-known that the negativity is monotonic under PPT channels (and thus under LOCC as well), so the negativity is an entanglement measure. Here, however, we show that the negativity can be \emph{increased} by PPT-preserving channels. This indicates that the class of PPT-preserving channels is far more powerful for entanglement manipulation than the class of channels whose Choi representations are PPT.

Before proceeding, we first note that states of the form $a\I_{d^2}/d^2+(1-a)\phi^+_d$ are PPT if and only if $a\geq d/(d+1)$. Indeed, note that
\[
 \left(\frac{a}{d^2}\I_{d^2}+(1-a)\phi^+_d\right)^\Gamma = \frac{a}{d^2}\left(\I_{d^2} + \frac{d(1-a)}{a}F_d\right)
\]
which is positive exactly when $a\geq d/(d+1)$, since the eigenvalues of $F_d$ are $\pm1$. We now show that the negativity can be increased by PPT-preserving channels.

\begin{theorem}
The negativity is not a monotone under PPT-preserving channels.
\end{theorem}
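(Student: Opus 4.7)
The plan is to exhibit an explicit PPT-preserving channel $\Lambda$ together with an NPT state $\rho^\star$ for which $N(\Lambda(\rho^\star))>N(\rho^\star)$. The guiding intuition is that the PPT-preserving condition only restricts $\Lambda$ on PPT inputs, so we have the freedom to design $\Lambda$ to aggressively amplify negativity on a chosen NPT direction as long as every PPT input still maps to a PPT output. The most convenient template is \eqref{eq:LambdaA} with $\rho_1=\phi^+_d$ and $\rho_2=\I_{d^2}/d^2$, in which the output is always of the family $p\phi^+_d+(1-p)\I_{d^2}/d^2$ whose PPT/NPT status is governed by the single observation immediately preceding the theorem.

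For each integer $d\geq 2$ I would take $A=\frac{2}{d+1}P_-$, where $P_-=(\I_{d^2}-F_d)/2$ is the projector onto the antisymmetric subspace of $\CC^d\otimes\CC^d$, and define
\[
 \Lambda(X) = \frac{2}{d+1}\Tr(P_-X)\,\phi^+_d + \Tr\!\Bigl(\Bigl(\I_{d^2}-\tfrac{2}{d+1}P_-\Bigr)X\Bigr)\frac{\I_{d^2}}{d^2}.
\]
The map is CPTP because $0\leq A\leq \I_{d^2}$. To verify that $\Lambda$ is PPT-preserving, the key step is to show $\Tr(P_-\rho)\leq 1/2$ for every PPT $\rho$, which follows from the identity $P_-^\Gamma=(\I_{d^2}-d\phi^+_d)/2$ and positivity of $\rho^\Gamma$: indeed $\Tr(P_-\rho)=\Tr(P_-^\Gamma\rho^\Gamma)=\frac{1}{2}(1-d\Tr(\phi^+_d\rho^\Gamma))\leq \frac{1}{2}$. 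Consequently $\Lambda(\rho)=p\phi^+_d+(1-p)\I_{d^2}/d^2$ with $p\leq 1/(d+1)$, and this is PPT by the observation stated just before the theorem.

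For the violation, I would apply $\Lambda$ to the normalized antisymmetric state $\rho^\star=\frac{2}{d(d-1)}P_-$. A short calculation from $\rho^{\star\Gamma}=(\I_{d^2}-d\phi^+_d)/(d(d-1))$ shows that its only negative eigenvalue is $-1/d$, giving $N(\rho^\star)=1/d$. Since $\Tr(P_-\rho^\star)=1$, the output is $\Lambda(\rho^\star)=\frac{2}{d+1}\phi^+_d+\frac{d-1}{d+1}\cdot\frac{\I_{d^2}}{d^2}$, and a routine eigenvalue computation in the symmetric/antisymmetric decomposition of its partial transpose yields $N(\Lambda(\rho^\star))=(d-1)/(2d)$. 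Hence $N(\Lambda(\rho^\star))/N(\rho^\star)=(d-1)/2$, which exceeds $1$ for every $d\geq 4$ and diverges as $d\to\infty$, simultaneously proving non-monotonicity and that the fractional increase is unbounded.

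The main obstacle is identifying the correct operator $A$. The naive choices $A=\alpha\phi^+_d$ or $A=\alpha|\psi^-\rangle\langle\psi^-|$ both turn out to strictly contract the negativity: the PPT constraint forces $\alpha\leq 2/(d+1)$ or similar, and the NPT inputs that saturate $\Tr(A\cdot)$ are exactly the states (namely $\phi^+_d$ or $|\psi^-\rangle$) whose negativity is already maximal, so no amplification is possible. The antisymmetric projector $P_-$ breaks the deadlock because it obeys the same PPT cap $1/2$ as the rank-one projector $|\psi^-\rangle\langle\psi^-|$, yet is attained by the maximally mixed antisymmetric state whose negativity $1/d$ is far smaller than that of any pure antisymmetric vector — it is this mismatch between the input having tiny negativity while still being a full eigenstate of the ``detector'' $P_-$ that produces the widening gap between $N(\rho^\star)$ and $N(\Lambda(\rho^\star))$.
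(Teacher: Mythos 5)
Your proof is correct and is essentially the paper's own argument: your channel with $A=\frac{2}{d+1}P_-$ is literally the same map as the paper's (the paper writes it with $A=\frac{1}{d+1}(d\I_{d^2}+F_d)=\I_{d^2}-\frac{2}{d+1}P_-$ and the roles of $\phi^+_d$ and $\I_{d^2}/d^2$ swapped), and your test state $\rho^\star=\frac{2}{d(d-1)}P_-$ is exactly the most entangled Werner state used there. All the quantitative conclusions, $N(\rho^\star)=1/d$, $N(\Lambda(\rho^\star))=(d-1)/(2d)$, and the unbounded ratio $(d-1)/2$, coincide with the paper's proof and its subsequent remark.
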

\begin{proof} Let $d>3$ be an integer. We construct a PPT-preserving channel $\Lambda$ satisfying $N(\Lambda(\rho))>N(\rho)$ for some bipartite state~$\rho\in\calD(\CC^d\otimes\CC^d)$. Consider the operator
\begin{align*}
 A&=\frac{1}{d+1}\left(d\I_{d^2}+F_d\right) \\&= \frac{d}{d+1}\left(\I_{d^2}+\phi^+_d\right)^\Gamma
\end{align*}
and the corresponding channel $\Lambda:\calL(\CC^d\otimes\CC^d)\rightarrow\calL(\CC^d\otimes\CC^d)$ defined by
\begin{equation}\label{eq:lambdappt}
 \Lambda(X) = \Tr(AX)\frac{1}{d^2}\I_{d^2} +  \Tr((\I_{d^2}-A)X)\phi^+_d.
\end{equation}
By the observation above, it holds that $\Lambda(\rho)$ is PPT whenever $\Tr(A\rho)\geq d/(d+1)$, which certainly holds for any PPT state $\rho$. We conclude that $\Lambda$ is PPT-preserving. Now consider the state
\[
    \rho=\frac{1}{d(d-1)}(\I_{d^2}-F_d),
\]
which is the most entangled Werner state and has negativity equal to $N(\rho)=1/d$. However
\[
 \Lambda(\rho) = \frac{d-1}{d+1}\frac{1}{d^2}\I_{d^2} + \frac{2}{d+1}\phi^+_d,
\]
which has negativity
\[
N(\Lambda(\rho)) = \frac{d-1}{2d}.
\]
As $d>3$, it follows that $N(\Lambda(\rho))> N(\rho)$ for this state.
\end{proof}

Note that for $\rho$ and $\Lambda$ as in the proof above, we have $N(\Lambda(\rho))/N(\rho)=(d-1)/2$. Therefore, this ratio can be made arbitrarily big by taking a large enough $d$. Moreover, a PPT-preserving channel that is not PPT can be constructed from any NPT state~$\rho$ and a corresponding pure state vector $\ket{\psi}$ corresponding to a negative eigenvalue of $\rho^\Gamma$. Indeed, for such a $\rho$ and $\ket{\psi}$, define the operator
\[
 A = \frac{1}{d+1} (d\I_{d^2} +\psi).
\]
The corresponding channel $\Lambda$ of the form in \eqref{eq:lambdappt} is clearly PPT-preserving since it maps every state to a PPT state, which results from the fact that $\Tr(A\sigma)\geq d/(d+1)$ for every PPT state $\sigma$. Nonetheless, this channel is not PPT, since the operator $J(\Lambda)^\Gamma$ is not positive semidefinite. Indeed, note that $(\I_{d^2}-F_d)\otimes\rho^T\geq0$ but that
\begin{align*}
 \Tr\bigl(J(\Lambda)^\Gamma \bigl((\I_{d^2}-F_d)\otimes\rho^T\bigr)\bigr)
 & = \Tr\bigl(J(\Lambda) \bigl((\I_{d^2}-d\phi^+_d)\otimes(\rho^\Gamma)^T\bigr)\bigr)\\
 & =  \left(1-\frac{1}{d}\right) \Tr(A\rho^\Gamma)+ (1-d)(1-\Tr(A\rho^\Gamma))\\
 & <0
\end{align*}
since $\Tr(A\rho^\Gamma)<d/(d+1)$. 


\section{Distillability beyond LOCC}
\label{sec:distillbeyondLOCC}

In order to explore the possible existence of NPT bound entanglement under LOCC, we address the question of distillable entanglement under classes of channels that are strictly larger than LOCC.  As mentioned in the introduction, it has been shown that all NPT states are distillable by PPT channels \cite{Eggeling2001}, but non-entangling channels are independent from this class of maps.  Reference \cite{Brandao2010} has shown that that all entangled states are not only distillable under asymptotically non-entangling channels but also under non-entangling channels.

A state $\rho$ is said to be \emph{distillable} under some class of channels if there exists a sequence of channels $\{\Lambda_n\}$ in this class such that $\lim_{n\to\infty}\lVert\Lambda_n(\rho^{\otimes n})-\phi^+_2\rVert_1=0$, where $\lVert\cdot\rVert_1$ denotes the trace norm. It has been shown that every entangled two-qubit state is distillable by LOCC channels \cite{Horodecki1997a}.  It is tempting to argue that if some class of operations containing LOCC can convert $\rho$ to a two-qubit entangled state, then $\rho$ is distillable using these operations.  However, this argument only holds if the class of operations is closed under tensor products.  More precisely, one requires that $\Lambda^{\otimes n}$ belongs to the class whenever $\Lambda$ does. Indeed, if $\Lambda$ is in a class of operations that is closed under tensor products and it is the case that $\Lambda(\rho)$ is a two-qubit entangled state, then one can use this class of operations to generate the $n$-copy state $\left(\Lambda(\rho)\right)^{\otimes n}=\Lambda^{\otimes n}(\rho^{\otimes n})$ for any $n$. For large enough $n$, this can be further converted to $\phi_2^+$ with arbitrary precision using the LOCC distillation protocol of Ref.~\cite{Horodecki1997a}.  It is true that $\Lambda^{\otimes n}$ is LOCC or PPT whenever $\Lambda$ is LOCC or PPT.  However, for other classes of mappings this will not be true (see below).  We must therefore distinguish between the ability of a class of operations to transform $\rho$ into an LOCC-distillable state on the single-copy level versus its ability to distill $\rho$ into a maximally entangled state in the asymptotic limit.

Although obtaining an LOCC-distillable state is weaker than obtaining the desired state $\phi_2^+$, we believe studying this question is operationally justified.  LOCC operations are generally regarded as being less resource intensive since they do not require the transmission of quantum information.  Thus our approach can be seen as partitioning the distillation of $\phi_2^+$ into two steps: the LOCC part and the non-LOCC part.  In this way we can minimize the quantum communication resources needed to perform the distillation by completing as much of the task as possible using LOCC.  An analogous rationale is applied in other quantum resource theories such as magic-state quantum computation, in which a quantum computation is divided into two parts: the generation of so-called ``magic states,'' and then the manipulation of these states using much simpler stabilizer operations \cite{Bravyi-2005a}.

It is easy to see that not all non-entangling maps are completely non-entangling. A simple example is given by the flip map $\Lambda(\cdot)=F_d \cdot F_d$ (see also Sec.\ \ref{sec:kresourcepreserving} below). However, this does not forbid in principle that $\Lambda^{\otimes n}$ belongs to the class of non-entangling maps whenever $\Lambda$ does. We start this section by answering this question providing an explicit example of a non-entangling map that becomes entangling when taken in two-copy form.

\begin{theorem}\label{thm:superactivationnonentangling}
Non-entangling maps can be superactivated. That is, there exists non-entangling maps $\Lambda$ such that $\Lambda^{\otimes2}$ is not non-entangling.
\end{theorem}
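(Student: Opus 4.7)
The plan is to exhibit an explicit non-entangling channel $\Lambda$ together with a separable input $\tau$ on the doubled bipartition $\mathsf{A}_1\mathsf{A}_1' : \mathsf{B}_1\mathsf{B}_1'$ such that $(\Lambda \otimes \Lambda)(\tau)$ is entangled on the output bipartition $\mathsf{A}_2\mathsf{A}_2' : \mathsf{B}_2\mathsf{B}_2'$. Following the constructive pattern used elsewhere in the paper, I would look for $\Lambda$ of the two-outcome mixing form of equation \eqref{eq:LambdaA}, namely
\[
 \Lambda(X) = \Tr(AX)\rho_E + \Tr((\I - A)X)\rho_S,
\]
with a positive operator $A$ satisfying $0 \leq A \leq \I$, an entangled state $\rho_E$, and a separable state $\rho_S$ (for example $\rho_E = \phi^+_d$ and $\rho_S = \I_{d^2}/d^2$). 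The attraction of this family is that $\Lambda^{\otimes 2}(\tau)$ sits in the four-dimensional cone spanned by $\{\rho_E\otimes\rho_E,\; \rho_E\otimes\rho_S,\; \rho_S\otimes\rho_E,\; \rho_S\otimes\rho_S\}$ with weights given explicitly by the overlaps $\Tr((A\otimes A)\tau)$, $\Tr((A\otimes(\I-A))\tau)$, etc.

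Verifying that $\Lambda$ is single-copy non-entangling splits into two elementary checks: first, $p^* := \max_{\sigma \in \calS}\Tr(A\sigma)$ is bounded using Lemma \ref{lem:Trpsisigmalambda1} (together with the spectral structure of $A$); second, $p\rho_E + (1-p)\rho_S \in \calS$ for every $p \in [0,p^*]$. For the isotropic output $(\rho_E,\rho_S) = (\phi^+_d, \I_{d^2}/d^2)$ the latter reduces to the textbook condition $p \leq 1/(d+1)$, which fixes the scalar normalization of $A$. These two conditions together guarantee that any separable input is mapped to a separable state.

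The second half of the argument consists of identifying a specific separable input $\tau$ certifying entanglement of $\Lambda^{\otimes 2}(\tau)$. Natural candidates are $\tau = \phi^+_{\mathsf{A}_1\mathsf{A}_1'} \otimes \phi^+_{\mathsf{B}_1\mathsf{B}_1'}$ (local entangled ancillas on each side, hence separable across Alice vs.\ Bob) and its variants built from mixtures of product states. Entanglement of the output would ideally be certified by showing that the partial transpose $(\Lambda^{\otimes 2}(\tau))^{\Gamma_{\mathsf{B}_2\mathsf{B}_2'}}$ has a negative eigenvalue; this reduces to a finite linear-algebra computation because the four outputs and their partial transposes can be expanded in the eigenbasis of the flip operator. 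If the NPT route fails, one would fall back on an explicit entanglement witness, constructed for example from an asymmetric combination of projectors in the output space.

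The principal obstacle is the tension between the two requirements: the single-copy non-entangling constraint forces the $\rho_E$-weight in the output to be small, and by tensor multiplicativity the $\rho_E\otimes\rho_E$-weight in the two-copy output is often bounded by $(p^*)^2$, which by itself is not enough to push the mixture outside the separable cone. Overcoming this demands that one exploit a super-multiplicative feature---either a choice of $A$ for which the maximum of $\Tr((A\otimes A)\tau)$ over the doubled separable cone strictly exceeds $(p^*)^2$, or else a choice of $\rho_E$ for which the doubled state $\rho_E \otimes \rho_E$ is ``much more entangled'' than a single copy so that a modest weight already certifies entanglement of the mixture. Locating such parameters (most plausibly in low dimension) and completing the explicit negativity or witness computation is where the real work of the proof resides.
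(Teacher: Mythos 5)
You have the right ansatz---the paper's proof also uses a two-outcome map of the form \eqref{eq:LambdaA} with an entangled branch $\rho_E$ and a separable branch $\rho_S$---but your proposal stops exactly at the step that constitutes the proof: no concrete $(A,\rho_E,\rho_S,\tau)$ is ever fixed, and your leading candidate demonstrably fails. With $\rho_E=\phi^+_d$ and $\rho_S=\I_{d^2}/d^2$, single-copy non-entanglement forces the output weight on $\phi^+_d$ to obey the isotropic threshold $p\leq 1/(d+1)$, so $A=c\,\phi^+_d$ must have $c\leq d/(d+1)<1$ (since $\max_{\sigma\in\calS}\Tr(\phi^+_d\sigma)=1/d$ by Lemma \ref{lem:Trpsisigmalambda1}). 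Feeding in your input $\tau=\phi^+_{\msf{A}_1\msf{A}_1'}\otimes\phi^+_{\msf{B}_1\msf{B}_1'}$ then gives weights $w_{EE}=c^2/d^2$, $w_{ES}=w_{SE}=(c-c^2)/d^2$, $w_{SS}=1-(2c-c^2)/d^2$, and in the eigenbasis of $F_d\otimes F_d$ the cross terms cancel on the negative branch, so the relevant eigenvalue of the partial transpose is $-w_{EE}/d^2+w_{SS}/d^4$, which one checks is nonnegative for every admissible $c$ and every $d\geq2$. The output is PPT, your proposed negativity certificate closes, and no fallback witness is supplied---so superactivation is never actually certified. You correctly flag this tension in your final paragraph, but flagging the obstacle is not overcoming it.

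The paper's resolution is \emph{not} super-multiplicativity of the overlap (its two-copy weight on $\rho_E\otimes\rho_E$ is exactly $(p^*)^2=1/4$); the decisive missing idea is the choice of $\rho_S$. Take $d=2$, $A=\rho_E=\phi^+_2$, and $\rho_S=\frac{1}{2}(\psi^+_2+\psi^-_2)=\frac{1}{2}\bigl(\ketbra{01}{01}+\ketbra{10}{10}\bigr)$. A Bell-diagonal state is entangled only if some eigenvalue exceeds $1/2$, so $p\,\phi^+_2+(1-p)\rho_S$ stays separable all the way up to $p=p^*=1/2$: the single-copy constraint is \emph{saturated}, allowing $c=1$, rather than paying the isotropic penalty $1/(d+1)$ that crippled your candidate. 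With $c=1$ your own input $\tau$ already yields cross weights $w_{ES}=w_{SE}=0$ and $\Lambda^{\otimes2}(\tau)=\frac{1}{4}\phi^+_2\otimes\phi^+_2+\frac{3}{4}\rho_S\otimes\rho_S$ (the paper uses the Smolin state, which is separable across $AA'\!:\!BB'$ and produces the same output). Because $\rho_S$ has no support on $\ket{00}$ and $\rho_S^{\Gamma}=\rho_S$, the term $\rho_S\otimes\rho_S$ cannot mask the negativity of $(\phi^+_2\otimes\phi^+_2)^{\Gamma}=\frac{1}{4}F_2\otimes F_2$, giving $\bra{00}\bra{\psi^-_2}\Lambda^{\otimes2}(\tau)^{\Gamma}\ket{00}\ket{\psi^-_2}=-1/16<0$. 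In short: the gap in your plan is the joint spectral/support tuning of the separable branch---it must simultaneously tolerate the maximal weight $p^*$ on $\phi^+_2$ and leave the two-copy negativity direction exposed---and without it the construction you sketch lands in the PPT (indeed plausibly separable) regime you were trying to escape.
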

\begin{proof}
We provide an explicit example. Consider the map $\Lambda:\mc{L}(\mbb{C}^2\otimes\mbb{C}^2)\to\mc{L}(\mbb{C}^2\otimes\mbb{C}^2)$ given by
\begin{equation}
\Lambda(X)=\tr[\phi_2^+ X]\phi_2^++\tr[(\mbb{I}-\phi_2^+)X]\frac{1}{2}(\psi^+_2+\psi_2^-),
\end{equation}
where $\psi^\pm_2=(|01\rangle\pm|10\rangle)/\sqrt{2}$ are elements of the Bell basis. This map is non-entangling since $\tr[\phi_2^+X]\leq\frac{1}{2}$ for all separable $X$, and states of the form $p\psi_2^++(1-p)\frac{1}{2}(\psi^+_2+\psi^-_2)$ are separable whenever $p\leq 1/2$, as an entangled Bell-diagonal state requires one eigenvalue greater than $1/2$ \cite{bennett1996,horodecki1996b}. However, the map $\Lambda^{\otimes 2}$ is entangling. To see this, consider the so-called Smolin state \cite{Smolin2001},
\begin{equation}
\rho^{AA':BB'}=\frac{1}{4}\left(\phi_2^{+,AB}\otimes \phi_2^{+,A'B'}+\phi_2^{-,AB}\otimes \phi_2^{-,A'B'}+\psi_2^{+,AB}\otimes \psi_2^{+,A'B'}+\psi_2^{-,AB}\otimes \phi_2^{-,A'B'}\right).
\end{equation}
Despite its appearance, this state is separable in the cut $AA':BB'$ \cite{Smolin2001}.  The action of $\Lambda^{\otimes 2}$ on this state is
\begin{align}
\Lambda^{\otimes 2}(\rho)&=\frac{1}{4}\phi_2^{+,AB}\otimes \phi_2^{+,A'B'}+\frac{3}{16}(\psi^+_2+\psi_2^-)^{AB}\otimes (\psi^+_2+\psi_2^-)^{A'B'}\notag\\
&=\frac{1}{4}\phi_2^{+,AB}\otimes \phi_2^{+,A'B'}+\frac{3}{16}(\ketbra{01}{01}+\ketbra{10}{10})^{AB}\otimes (\ketbra{01}{01}+\ketbra{10}{10})^{A'B'}.
\end{align}
Taking a partial transpose on system $BB'$ yields
\begin{align}
\Lambda^{\otimes 2}(\rho)^\Gamma=\frac{1}{16}F_2^{AB}\otimes F_2^{A'B'}+\frac{3}{16}(\ketbra{01}{01}+\ketbra{10}{10})^{AB}\otimes (\ketbra{01}{01}+\ketbra{10}{10})^{A'B'},
\end{align}
Note that $F_2$ has an eigenvalue $-1$ with eigenvector $\ket{\psi^-_2}$.  Hence,
\begin{align}
\bra{00}^{AB}\bra{\psi^-_2}^{A'B'}\left(\Lambda^{\otimes 2}(\rho)^\Gamma\right)\ket{00}^{AB}\ket{\psi^-_2}^{A'B'}=-\frac{1}{16},
\end{align}
which implies that $\Lambda^{\otimes 2}(\rho)$ is entangled.
\end{proof}

Given the above result, we now turn to the problem of obtaining LOCC-distillable states on the single-copy level. Our techniques will involve maps whose outputs are always mixtures of some operator with the maximally entangled state of two qubits. Note that the state $p\I_4/4+(1-p)\phi^+_2$ is separable if and only if $p\geq2/3$ \cite{Vidal1999}. We first show how any entangled state can be transformed to an LOCC-distillable state using dually non-entangling channels.

\begin{lemma}\label{lem:LambdaWnonentangling}
 Let $W$ be an entanglement witness $W$ such that $\lVert W\rVert_2\leq 1$ (where $\lVert\cdot\rVert_2$ is the Frobenious norm), and let $A=(W+2\I)/3$. Then the channel $\Lambda$ defined by
 \begin{equation}\label{eq:LambdaWnonentangling}
  \Lambda(X)= \Tr(AX)\frac{1}{4}\I_4 + \Tr((\I-A)X)\phi^+_2
 \end{equation}
is dually non-entangling.
\end{lemma}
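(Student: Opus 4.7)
The plan is to verify in turn that $\Lambda$ is CPTP, that $\Lambda$ is non-entangling, and that $\Lambda^{*}$ is non-entangling; the last of these is the heart of the argument and the only step that uses the Gurvits-Barnum bound.

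First I would establish that $A$ satisfies $0\le A\le\I$ so that $\Lambda$ is automatically CPTP by the general form in \eqref{eq:LambdaA}. Since the Frobenius norm dominates the operator norm, $\lVert W\rVert_\infty\le\lVert W\rVert_2\le 1$, and so the eigenvalues of $W$ lie in $[-1,1]$; hence those of $A=(W+2\I)/3$ lie in $[1/3,1]$.

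Next, to show $\Lambda$ is non-entangling, I would use the witness property of $W$: for any separable state $\rho$, $\Tr(W\rho)\ge 0$, so $p:=\Tr(A\rho)\ge 2/3$. Then $\Lambda(\rho)=p\,\I_4/4+(1-p)\phi^+_2$, and as noted at the start of this section, such a state is separable exactly when $p\ge 2/3$.

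The main task is to show $\Lambda^{*}$ is non-entangling, and this is the step I expect to require the most care. Using the general dual formula from the preliminaries and the identity $\I-A=(\I-W)/3$, I would compute, for any separable state $\sigma$ and with $q:=\Tr(\phi^+_2\sigma)$,
\[
 \Lambda^{*}(\sigma)=\frac{1}{4}A+q(\I-A)=\beta\,\I+\alpha\,W,\qquad
 \alpha=\frac{1}{12}-\frac{q}{3},\quad \beta=\frac{1}{6}+\frac{q}{3}.
\]
By Lemma \ref{lem:Trpsisigmalambda1} applied to $\phi^+_2$, we have $q\le 1/2$, which yields $|\alpha|\le 1/12$ and $\beta\ge 1/6$, so $|\alpha/\beta|\le 1/2$. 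Writing $\Lambda^{*}(\sigma)=\beta(\I+(\alpha/\beta)W)$, the hypothesis $\lVert W\rVert_2\le 1$ gives $\lVert(\alpha/\beta)W\rVert_2\le 1/2<1$, and the Gurvits-Barnum criterion (Theorem 1 of \cite{Gurvits2002}) recalled in Section \ref{sec:duallynonentangling} then implies that $\I+(\alpha/\beta)W$, and therefore also the positive scalar multiple $\Lambda^{*}(\sigma)$, is separable. This establishes that $\Lambda^{*}$ is separable-preserving, completing the proof. The only subtlety to keep an eye on is making sure the coefficient $\beta$ really is positive so that one may legitimately factor it out before invoking Gurvits-Barnum, which is immediate since $q\ge 0$.
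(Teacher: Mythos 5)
Your proof is correct and takes essentially the same route as the paper's: CPTP via $0\leq A\leq\I$ from $\lVert W\rVert_2\leq1$, non-entangling via the witness property giving $\Tr(A\rho)\geq2/3$ together with the $2/3$ separability threshold for $p\,\I_4/4+(1-p)\phi^+_2$, and separability of $\Lambda^*(\sigma)=\beta\bigl(\I+(\alpha/\beta)W\bigr)$ via the Gurvits--Barnum bound with $\lvert\alpha/\beta\rvert\leq1/2$. The only cosmetic difference is that you invoke Lemma \ref{lem:Trpsisigmalambda1} to get $\Tr(\phi^+_2\sigma)\leq1/2$ for separable $\sigma$, whereas the paper only uses $0\leq\Tr(\phi^+_2\sigma)\leq1$, so its version of the dual-map argument shows $\Lambda^*$ sends \emph{every} two-qubit state (not just separable ones) to a separable output.
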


\begin{proof}
First note that $\lVert W\rVert_2\leq 1$ implies that $-\I\leq W\leq \I$ and thus $0\leq A\leq \I$, so $\Lambda$ is CPTP. Since $W$ is an entanglement witness, it holds that $\Tr(W\sigma)\geq0$ for all separable states $\sigma$. Hence $\Tr(A\sigma)\geq2/3$ holds for any separable state $\sigma$ and therefore, as discussed above, the state $\Lambda(\sigma)$ is separable whenever $\sigma$ is separable.
Thus $\Lambda$ is non-entangling. To conclude that $\Lambda$ is also dually non-entangling, we show that $\Lambda^*(\sigma)$ is separable for all two-qubit states $\sigma$. Note that
\[
 \Lambda^*(\sigma)\propto \I + \frac{\frac{1}{4}-\Tr(\phi^+_2\sigma)}{\frac{1}{2}+\Tr(\phi^+_2\sigma)}W
\]
and that $0\leq\Tr(\phi^+_2\sigma)$ holds for any two-qubit state $\sigma$. Since
\[
 -\frac{1}{2}\leq  \frac{\frac{1}{4}-\Tr(\phi^+_2\sigma)}{\frac{1}{2}+\Tr(\phi^+_2\sigma)} \leq \frac{1}{2}
\]
and $\lVert W\rVert_2\leq1$, it holds that $\Lambda^*(\sigma)$ is separable by Theorem 1 of Ref.\ \cite{Gurvits2002}.
\end{proof}

\begin{theorem}\label{thm:distillnonentangling}
 All entangled states may be converted by dually non-entangling channels to a state that is distillable under LOCC.
\end{theorem}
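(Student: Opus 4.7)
The plan is to exhibit, for each entangled state $\rho$, a dually non-entangling channel of the form in Lemma~\ref{lem:LambdaWnonentangling} whose output is a two-qubit entangled state, and then invoke the LOCC-distillability of every entangled two-qubit state \cite{Horodecki1997a}. The key is to choose the entanglement witness $W$ used in that lemma so that it detects $\rho$, which forces the output to land in the entangled (i.e., non-separable) region of the $\phi^+_2$--$\I_4/4$ line segment.

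First, I would invoke the witness characterization of separability: since $\rho$ is entangled, there exists a hermitian operator $W_0$ on $\calH_{\msf{A}_1\msf{B}_1}$ that is an entanglement witness with respect to $\msf{A}_1\!:\!\msf{B}_1$ and satisfies $\Tr(W_0\rho)<0$ \cite{Horodecki1996}. Since the witness condition $\Tr(W\sigma)\geq 0$ for all $\sigma\in\calS$ is preserved under multiplication by a positive scalar, I may rescale and set $W := W_0/\lVert W_0\rVert_2$, so that $\lVert W\rVert_2=1$ and still $\Tr(W\rho)<0$.

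Next, apply Lemma~\ref{lem:LambdaWnonentangling} to this $W$ to obtain a dually non-entangling channel $\Lambda:\calL(\calH_{\msf{A}_1\msf{B}_1})\to\calL(\CC^2\otimes\CC^2)$ of the form in \eqref{eq:LambdaWnonentangling}, with $A=(W+2\I)/3$. By construction, every output of $\Lambda$ is a mixture of the form $p\,\I_4/4+(1-p)\phi^+_2$ with $p=\Tr(A\rho)=\bigl(\Tr(W\rho)+2\bigr)/3$. Since $\Tr(W\rho)<0$, we get $p<2/3$, and it was noted just before the lemma that such a mixture is separable if and only if $p\geq 2/3$. Hence $\Lambda(\rho)$ is an entangled two-qubit state.

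Finally, invoke the Horodecki result \cite{Horodecki1997a} that every entangled two-qubit state is distillable by LOCC, giving the desired conclusion. I do not anticipate a serious obstacle: the witness characterization supplies $W$ abstractly, the Frobenius normalization is cost-free (it preserves both the witness property and the sign of $\Tr(W\rho)$), and Lemma~\ref{lem:LambdaWnonentangling} has already done the heavy lifting of verifying dual non-entangling-ness. The only conceptual subtlety worth flagging is that single-copy convertibility to an LOCC-distillable state is strictly weaker than distillability under dually non-entangling channels themselves, in view of the superactivation phenomenon of Theorem~\ref{thm:superactivationnonentangling}; the theorem as stated only asserts the former.
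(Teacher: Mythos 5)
Your proposal is correct and follows essentially the same route as the paper's proof: pick a witness $W$ detecting $\rho$, normalize it in Frobenius norm, feed it into Lemma~\ref{lem:LambdaWnonentangling} with $A=(W+2\I)/3$, and conclude via two-qubit LOCC distillability. The only cosmetic difference is in certifying that $\Lambda(\rho)$ is entangled: you invoke the isotropic-state separability threshold $p\geq 2/3$ (a fact the paper states just before the lemma), whereas the paper notes $\Tr(\Lambda(\rho)\phi^+_2)>1/2$ and applies Lemma~\ref{lem:Trpsisigmalambda1} --- these are equivalent.
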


\begin{proof}
 Let $\rho$ be an arbitrary entangled state. There exists an entanglement witness $W$ such that $\Tr(W\rho)<0$ but that $\Tr(W\sigma)\geq0$ for all separable states $\sigma\in\mathcal{S}$. We may suppose without loss of generality $\lVert W\rVert_2 \leq1$. Indeed, we may otherwise take the witness $W'=W/\lVert W\rVert_2$. Consider the channel $\Lambda$ in \eqref{eq:LambdaWnonentangling} with $A=(W+2\I)/3$, which is dually non-entangling by Lemma \ref{lem:LambdaWnonentangling} and satisfies $\Tr(\Lambda(\rho)\phi^+_2)>1/2$. It follows from Lemma \ref{lem:Trpsisigmalambda1} that $\Lambda(\rho)$ is an entangled state of two qubits and is therefore distillable under LOCC.
\end{proof}

Lastly, while it is known that all non-PPT entangled states are distillable under PPT channels, we next show that all such states can be converted to a state that is distillable under LOCC by only using channels in the more restrictive class of maps that are both PPT and dually non-entangling.

\begin{theorem}\label{thm:distillnonentangling+ppt}
 All NPT states may be converted by channels that are both dually non-entangling and PPT to a state that is distillable under LOCC.  
\end{theorem}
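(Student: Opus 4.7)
The plan is to adapt the construction from Theorem \ref{thm:distillnonentangling} by choosing the entanglement witness to be a \emph{decomposable} one, so that the resulting channel automatically inherits the PPT property. Concretely, given an NPT state $\rho$ on $\calH_{\msf{A}_1\msf{B}_1}$, let $\ket{\psi}$ be a unit eigenvector of $\rho^{\Gamma}$ corresponding to a negative eigenvalue, and set $W = (\ket{\psi}\bra{\psi})^{\Gamma_{\msf{A}_1}}$. Then $W$ is Hermitian, $W^{\Gamma} = \ket{\psi}\bra{\psi}\geq 0$ (so $W$ is an entanglement witness that is positive on every PPT, and hence every separable, state), $\lVert W\rVert_2 = \lVert \ket{\psi}\bra{\psi}\rVert_2 = 1$ since partial transpose preserves the Frobenius norm, and $\Tr(W\rho) = \bra{\psi}\rho^{\Gamma}\ket{\psi} < 0$. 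Feeding this $W$ into Lemma \ref{lem:LambdaWnonentangling} produces a dually non-entangling CPTP map
$$\Lambda(X) = \Tr(AX)\tfrac{1}{4}\I_4 + \Tr((\I-A)X)\phi^+_2, \qquad A = (W+2\I)/3,$$
and the same reasoning as in the proof of Theorem \ref{thm:distillnonentangling} (using $\Tr(W\rho)<0$ together with Lemma \ref{lem:Trpsisigmalambda1}) shows that $\Lambda(\rho)$ is a two-qubit entangled state, hence LOCC-distillable by \cite{Horodecki1997a}.

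The genuinely new step, and the main obstacle, is verifying that this $\Lambda$ is PPT. Partial-transposing the Choi operator across $\msf{A}_2\msf{A}_1\!:\!\msf{B}_2\msf{B}_1$ gives
$$J(\Lambda)^{\Gamma} = \tfrac{1}{4}\I_4\otimes(A^{\Gamma_{\msf{A}_1}})^T + \tfrac{1}{2}F_2\otimes((\I-A)^{\Gamma_{\msf{A}_1}})^T,$$
where the factor $\tfrac{1}{2}F_2 = (\phi^+_2)^{\Gamma_{\msf{A}_2}}$ is indefinite. The decomposable form of $W$ is exactly what is needed to control it: $A^{\Gamma_{\msf{A}_1}} = (\ket{\psi}\bra{\psi} + 2\I)/3 \geq (2/3)\I$, while $(\I - A)^{\Gamma_{\msf{A}_1}} = (\I - \ket{\psi}\bra{\psi})/3$ is a positive operator of operator norm $1/3$. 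The first term is therefore bounded below by $\tfrac{1}{6}\I$, while the second has operator norm at most $\tfrac{1}{2}\cdot\tfrac{1}{3} = \tfrac{1}{6}$ and is thus bounded below by $-\tfrac{1}{6}\I$. Adding the two bounds yields $J(\Lambda)^{\Gamma}\geq 0$, so $\Lambda$ is PPT.

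The only subtlety I anticipate is that both ingredients are indispensable in the choice of witness: $W$ must be decomposable (not merely an entanglement witness) to ensure $A^{\Gamma_{\msf{A}_1}}\geq0$ and $(\I-A)^{\Gamma_{\msf{A}_1}}\geq0$, and the rank-one structure $W^{\Gamma} = \ket{\psi}\bra{\psi}$ keeps $\lVert(\I-A)^{\Gamma_{\msf{A}_1}}\rVert_\infty$ small enough that the indefinite piece $\tfrac{1}{2}F_2\otimes(\cdot)$ is dominated by the identity term. Every NPT state admits such a rank-one decomposable witness (take $\ket{\psi}$ to be any negative eigenvector of $\rho^{\Gamma}$), so the construction is universal over NPT inputs and delivers the theorem.
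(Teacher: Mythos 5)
Your proposal is correct and follows essentially the same route as the paper: the same decomposable rank-one witness $W=\ketbra{\eta}{\eta}^{\Gamma}$ built from a negative eigenvector of $\rho^{\Gamma}$, the same channel via Lemma \ref{lem:LambdaWnonentangling}, and a direct verification that $J(\Lambda)^{\Gamma}\geq0$. The only (cosmetic) difference is in that last step, where the paper regroups $J(\Lambda)^{\Gamma}$ into the manifestly positive sum $\frac{1}{4}\I_4\otimes\ketbra{\overline\eta}{\overline\eta}+\frac{1}{6}(\I_4+F_2)\otimes(\I-\ketbra{\overline\eta}{\overline\eta})$, whereas you bound the two tensor terms by $\pm\frac{1}{6}\I$ — an equivalent estimate.
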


\begin{proof}
 Let $\rho$ denote an arbitrary $d_1\times d_2$ NPT state and let $\ket{\eta}$ denote a normalized eigenvector of $\rho^{\Gamma}$ corresponding to a negative eigenvalue. Then $W=\ketbra{\eta}{\eta}{}^{\Gamma}$ is an entanglement witness that detects $\rho$, since $\Tr(W\rho)<0$ but $\Tr(W\sigma)\geq 0$ for all separable states $\sigma$. Furthermore, note that the Frobenious norm of this witness is $\lVert W\rVert_2=\lVert\ketbra{\eta}{\eta}\rVert_2=1$. It follows from Lemma \ref{lem:LambdaWnonentangling} that the channel $\Lambda$ in \eqref{eq:LambdaWnonentangling} is dually non-entangling. To show that this channel is also PPT, note that the Choi representation of this map is
 \[
  J(\Lambda) = \frac{1}{4}\I_4\otimes A^T + \phi^+_2\otimes(\I_{d_1d_2}-A^T),
 \]
where $A=(W+2\I_{d_1d_2})/3$. The partial transpose of this operator is
\[
 J(\Lambda)^{\Gamma} = \frac{1}{4}\I_4\otimes\ketbra{\overline\eta}{\overline\eta} + \frac{1}{6}(\I_4 + F_2)\otimes(\I_{d_1d_2} - \ketbra{\overline\eta}{\overline\eta}),
\]
where $F_2 = 2(\phi^+_2)^\Gamma$ is the flip operator. This is clearly positive since $-\I_4\leq F_2$. Lastly, note that $\Lambda(\rho)$ is an entangled two-qubit state, so it is distillable under LOCC.
\end{proof}

Finally, we show that all states (even separable ones) can be converted to an LOCC-distillable state under any class of entangling undetected channels.
\begin{theorem}
\label{thm:finite-witness-distill}
 Let $\{W_i\}_{i=1}^n\subseteq\calL(\calH_{\msf{A}_2\msf{B}_2\msf{A}_2\msf{B}_1})$ be a finite number of entanglement witnesses with respect to \mbox{$\msf{A}_2\msf{A}_1\!:\!\msf{B}_2\msf{B}_1$}. Every state can be converted to an LOCC-distillable state using channels that are entangling undetected by $\{W_i\}_{i=1}^n$.
\end{theorem}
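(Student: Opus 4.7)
The plan is to take $\Lambda$ to be a constant channel of the form $\Lambda_\tau(X) := \Tr(X)\tau$ for a carefully chosen state $\tau \in \calD(\calH_{\msf{A}_2\msf{B}_2})$, since this automatically gives $\Lambda_\tau(\rho) = \tau$ for every input $\rho$. A direct calculation yields $J(\Lambda_\tau) = \tau \otimes \I_{\msf{A}_1\msf{B}_1}$, so the entangling-undetected condition reduces to
\[
 \Tr(M_i \tau) \geq 0 \qquad \text{for all } i=1,\dots,n,
\]
where $M_i := \Tr_{\msf{A}_1\msf{B}_1}(W_i) \in \calL(\calH_{\msf{A}_2\msf{B}_2})$. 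Moreover, each $M_i$ is itself an entanglement witness (or positive operator) on $\calH_{\msf{A}_2:\msf{B}_2}$: for any $\sigma \in \calS(\calH_{\msf{A}_2:\msf{B}_2})$, the operator $\sigma \otimes \I_{\msf{A}_1\msf{B}_1}$ is separable with respect to $\msf{A}_2\msf{A}_1:\msf{B}_2\msf{B}_1$, and hence $\Tr(W_i(\sigma \otimes \I)) = \Tr(M_i \sigma) \geq 0$. So the problem reduces to exhibiting an LOCC-distillable $\tau$ satisfying $\Tr(M_i \tau) \geq 0$ for every $i$.

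To produce such a $\tau$, I would restrict attention to a two-qubit subspace of $\calH_{\msf{A}_2\msf{B}_2}$. Fix local isometries $V_A:\CC^2 \to \calH_{\msf{A}_2}$ and $V_B:\CC^2 \to \calH_{\msf{B}_2}$, set $V := V_A \otimes V_B$, and define $\tilde M_i := V^\dagger M_i V$ on $\CC^2 \otimes \CC^2$. These pulled-back operators are still witnesses, since $V \sigma_0 V^\dagger$ is separable on $\calH_{\msf{A}_2:\msf{B}_2}$ whenever $\sigma_0$ is separable on $\CC^2 \otimes \CC^2$. The polytope
\[
 \mathcal{P} := \{\tau_0 \in \calD(\CC^2 \otimes \CC^2) : \Tr(\tilde M_i \tau_0) \geq 0 \text{ for all } i\}
\]
therefore contains the set $\calS(\CC^2:\CC^2)$ of separable two-qubit states. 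Since $\calS(\CC^2:\CC^2)$ is not a polytope (Section~\ref{sec:prelims}, following \cite{Ioannou2006,Guhne2007}), the containment is strict, so there exists an entangled $\tau_0 \in \mathcal{P}$. By \cite{Horodecki1997a} every entangled two-qubit state is LOCC-distillable, and setting $\tau := V \tau_0 V^\dagger$ yields an LOCC-distillable state on $\calH_{\msf{A}_2\msf{B}_2}$ (LOCC applies the local projections $V_A V_A^\dagger \otimes V_B V_B^\dagger$ to pass to the two-qubit block and then runs the standard two-qubit distillation protocol) which satisfies $\Tr(M_i \tau) = \Tr(\tilde M_i \tau_0) \geq 0$ for every $i$.

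The main conceptual hurdle is the strict containment $\calS(\CC^2:\CC^2) \subsetneq \mathcal{P}$: a finite collection of witness inequalities carves out a polytope with flat facets, whereas the boundary of the separable set is curved, so the two cannot coincide. Everything else—the Choi computation, the two-qubit embedding, and invoking two-qubit distillability—is routine. A minor implicit assumption is that $\dim \calH_{\msf{A}_2},\dim \calH_{\msf{B}_2} \geq 2$, without which no LOCC-distillable state on $\calH_{\msf{A}_2\msf{B}_2}$ could exist in the first place.
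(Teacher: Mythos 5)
Your overall route is the same as the paper's: a constant channel $\Lambda_\tau(X)=\Tr(X)\tau$ with Choi operator $\tau\otimes\I_{\msf{A}_1\msf{B}_1}$, reduction of the undetectedness condition to $\Tr(M_i\tau)\geq 0$ for the partial-traced witnesses $M_i=\Tr_{\msf{A}_1\msf{B}_1}(W_i)$, and restriction to a two-qubit block (the paper sandwiches the $M_i$ with a projector $\Pi\cdot\Pi$ where you pull back by local isometries $V^\dagger\cdot V$ --- a cosmetic difference). The Choi computation, the verification that the $M_i$ and $\tilde M_i$ remain entanglement witnesses, and the LOCC-distillability of the embedded entangled two-qubit state via \cite{Horodecki1997a} are all correct.

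There is, however, a genuine gap at precisely the step you flag as the main conceptual hurdle. You argue: $\calS(\CC^2\!:\!\CC^2)\subseteq\mathcal{P}$ and $\calS$ is not a polytope, hence the containment is strict. That inference would be valid if $\mathcal{P}$ were a polytope, but it is not: by your own definition $\mathcal{P}$ is the intersection of the set $\calD(\CC^2\otimes\CC^2)$ of density operators --- whose boundary is curved --- with finitely many half-spaces, so $\mathcal{P}$ has non-flat boundary portions inherited from $\calD$ (in two dimensions, a disk cut by a half-plane is not a polygon). Consequently ``$\calS$ is not a polytope'' does not by itself exclude $\calS=\mathcal{P}$: a convex body can fail to be a polytope while still having some flat codimension-one boundary pieces contributed by cutting hyperplanes. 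What is actually needed is the strictly stronger fact that the paper invokes: the two-qubit separable set has \emph{no facets} whatsoever \cite{Guhne2007}. The paper's argument then splits into cases: either $\calD\subseteq\mathcal{P}$, in which case any entangled two-qubit state is undetected; or some hyperplane genuinely cuts $\calD$, in which case $\calD\cap\mathcal{P}$ acquires a facet \cite{Barvinok2002} and therefore cannot coincide with the facet-free set $\calS$, so an entangled $\tau_0\in\mathcal{P}$ exists. Substituting this no-facet argument (with the case distinction) for your ``not a polytope'' step closes the gap, after which your proof coincides with the paper's.
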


\begin{proof}
Let $\Pi=\sum_{i,j=0}^1\ketbra{ij}{ij}$ be the projector onto an arbitrary two-qubit subspace of $\mc{H}_{\msf{A}_2\msf{B}_2}$.  We first note that the projected partial traces of the witnesses $V_i:=\Pi\left[\Tr_{\msf{A}_1\msf{B}_1}(W_i)\right]\Pi$ are again entanglement witnesses for two-qubit states with support on the span of $\{\ket{ij}\}_{i,j=0}^1\subset \calH_{\msf{A}_2:\msf{B}_2}$. Indeed, let $\sigma$ be a separable state with support on $\text{span}\{\ket{ij}\}_{i,j=0}^1$.  Then $\sigma\otimes\I_{\msf{A}_1\msf{B}_1}$ is separable with respect to $\msf{A}_2\msf{A}_1:\msf{B}_2\msf{B}_1$ and thus
 \[
  \Tr(V_i\sigma) = \Tr(W_i (\Pi\sigma\Pi\otimes\I_{\msf{A}_1\msf{B}_1})) =\Tr(W_i (\sigma\otimes\I_{\msf{A}_1\msf{B}_1}))\geq 0.
 \]
Next consider the set of two-qubit density operators which are not detected by the $V_i$.  This is the intersection of a convex set ($\mc{D}$, the set of all two-qubit density operators) with a polyhedron ($\mc{P}$, the set of all hermitian operators satisfying $\text{Tr}(V_iX)\geq 0$ for all $i=1,\cdots n$).  If $\mc{D}$ is contained entirely in the polyhedron, then there obviously exists an entangled two-qubit state $\rho$ not detected by any of the $V_i$.  If $\mc{D}$ is not contained entirely in the polyhedron, then $\mc{D}\cap\mc{P}$ has a facet \cite{Barvinok2002}.  However, the set of two-qubit separable states does not have a facet \cite{Guhne2007}, and thus an entangled state $\rho$ exists in $\mc{D}\cap\mc{P}$.  We then define the map $\Lambda:\calL(\calH_{\msf{A}_1\msf{B}_1})\rightarrow\calL(\calH_{\msf{A}_2\msf{B}_2})$ by $\Lambda(X) = \Tr(X)\rho$. This map has Choi representation $J(\Lambda)=\rho\otimes\I_{\msf{A}_1\msf{B}_1}$ which clearly satisfies $\Tr(W_iJ(\Lambda))\geq0$ for all $i$. The desired result follows since any two-qubit entangled state $\rho$ is distillable under LOCC channels.
\end{proof}


\section{Stochastic Convertibility of Pure States}

\label{sec:stochastic}

In this section we relax the condition that the quantum operation is trace-preserving.  Let $\Lambda_1$ be an arbitrary non-trace-preserving CP map, which by appropriate scaling, we can assume is trace non-increasing; i.e.\ $\text{Tr}(X)\geq\text{Tr}(\Lambda_1(X))\geq 0$ for all $X\geq 0$.  The rescaled map $\Lambda_1$ can always be ``completed'' to a channel by another CP map $\Lambda_2$ so that $\Lambda(\cdot)=\Lambda_1(\cdot)\otimes\ketbra{1}{1}+\Lambda_2(\cdot)\otimes\ketbra{2}{2}$ is CPTP map.  After performing the channel $\Lambda$ on $\rho$, the classical register $\ketbra{i}{i}$ can then be measured so that the post-measurement state $\Lambda_i(\rho)/\text{Tr}[\Lambda_i(\rho)]$ is obtained with probability $\text{Tr}[\Lambda_i(\rho)]$.  Thus, any non-trace-preserving CP map $\Lambda_1$ can be \textit{stochastically} implemented in this way.

While all the discussion thus far has focused on bipartite entanglement, here we consider multipartite systems.  To fix notation, let $\msf{S}_1\cdots\msf{S}_N$ be $N$ systems with joint state space $\mc{H}_{\msf{S}_1\cdots\msf{S}_N}$.  Convertibility between any two states by stochastic LOCC (SLOCC) provides a natural way to classify entanglement.  Under this partitioning of states, two states $\rho$ and $\hat{\rho}$ are said to be equivalent in terms of their entanglement iff $\rho\to\hat{\rho}$ and $\hat{\rho}\to\rho$ by SLOCC \cite{Dur2000}.  When considering pure states, it is well-known that two states $\ket{\psi}$ and $\ket{\hat{\psi}}$ belong to the same entanglement class iff there exists invertible linear operators $A_i$ such that $\ket{\hat{\psi}}=A_1\otimes A_2\otimes\cdots A_N\ket{\psi}$.  A paradigmatic example of two inequivalent states is the three-qubit GHZ state, $\ket{GHZ}=\sqrt{1/2}(\ket{000}+\ket{111})$, and W state, $\ket{W}=\sqrt{1/3}(\ket{100}+\ket{010}+\ket{001})$, which cannot be converted from one to the other using SLOCC \cite{Dur2000}.  In fact, for systems with more than three parties, or for tripartite systems with dimensions greater than $2\otimes 3\otimes 6$, there are an infinite number of inequivalent SLOCC entanglement classes \cite{Chitambar2010}.

We will now show that the situation is dramatically different if any family of entangling undetected operations is considered.  This is similar in spirit to Theorem \ref{thm:finite-witness-distill}, except our proof is simpler in that it does not rely on the convex structure of the set of separable states.  For a CP map $\Lambda:\mc{L}(\mc{H}_{\msf{S}_1\cdots\msf{S}_N})\to\mc{L}(\mc{H}_{\hat{\msf{S}}_1\cdots\hat{\msf{S}}_N})$, its Choi operator is an $N$-partite positive semi-definite operator
\[J(\Lambda)=\Lambda\otimes \text{id}_{\msf{S}_1\cdots\msf{S}_N}(\phi^+_{\msf{S}_1\cdots\msf{S}_N})\in\mc{L}(\mc{H}_{\hat{\msf{S}}_1\msf{S}_1:\cdots:\hat{\msf{S}}_N\msf{S}_N}),\]
where $\phi^+_{\msf{S}_1\cdots\msf{S}_N}=\phi^+_{\msf{S}_1}\otimes\cdots\otimes\phi^+_{\msf{S}_N}$.  Conversely, for any positive operator $\Omega\in\mc{L}(\mc{H}_{\hat{\msf{S}}_1\msf{S}_1:\cdots:\hat{\msf{S}}_N\msf{S}_N})$, the map $\Lambda_\Omega(X)=\text{Tr}_{\msf{S}_1\cdots\msf{S}_N}\left(\Omega[\I_{\hat{\msf{S}}_1\cdots\hat{\msf{S}}_{N}}\otimes X^T]\right)$ is CP.  Direct calculation shows that $J(\Omega)$ is a fully separable operator whenever $\Lambda$ is a separable map, and conversely $\Lambda_\Omega$ is a separable map whenever $\Omega$ is fully separable.  By fully separable, it is meant that $\Omega$ can be expressed as a convex combination of product operators on $\mc{H}_{\hat{\msf{S}}_1\msf{S}_1:\cdots:\hat{\msf{S}}_N\msf{S}_N}$.  Hence, we have a one-to-one correspondence between separable operators and separable maps.

Like in the bipartite case, we say that a hermitian operator $W$ on $\mc{H}_{\hat{\msf{S}}_1\msf{S}_1:\cdots:\hat{\msf{S}}_N\msf{S}_N}$ is an entanglement witness if $\text{Tr}(W\rho)\geq 0$ for all fully separable $N$-partite states $\rho\in\mc{S}(\mc{H}_{\hat{\msf{S}}_1\msf{S}_1:\cdots:\hat{\msf{S}}_N\msf{S}_N})$.
\begin{theorem}
\label{Thm:Stochastic}
Let $\{W_i\}_{i=1}^n$ be a finite number of entanglement witnesses for operators on $\mc{H}_{\hat{\msf{S}}_1\msf{S}_1:\cdots:\hat{\msf{S}}_N\msf{S}_N}$, and let $\ket{\psi}\in\mc{H}_{\msf{S}_1\cdots\msf{S}_N}$ and $\ket{\hat{\psi}}\in\mc{H}_{\hat{\msf{S}}_1\cdots\hat{\msf{S}}_N}$ be any two states.  Then there exists a CP map $\Lambda$ that is entangling undetected by $\{W_i\}_{i=1}^n$ and such that $\Lambda\left(\ketbra{\psi}{\psi}\right)=\ketbra{\hat{\psi}}{\hat{\psi}}$.
\end{theorem}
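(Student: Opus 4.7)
The plan is to construct the required CP map in the form
\begin{equation*}
\Lambda_t(X) = \Tr(\psi X)\,\hat\psi + t\,\Tr(PX)\,\sigma,
\end{equation*}
where $t>0$, $\sigma$ is a fully $N$-separable state on $\mc{H}_{\hat{\msf{S}}_1\cdots\hat{\msf{S}}_N}$, and $P\geq 0$ is a fully $N$-separable operator on $\mc{H}_{\msf{S}_1\cdots\msf{S}_N}$ satisfying $\braket{\psi|P|\psi}=0$. Such a $\Lambda_t$ is manifestly CP, and because the second term annihilates $\psi$ we get $\Lambda_t(\ketbra{\psi}{\psi})=\ketbra{\hat\psi}{\hat\psi}$. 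A direct computation gives the Choi operator $J(\Lambda_t) = \hat\psi\otimes\psi^T + t\,\sigma\otimes P^T$. Since $\sigma$ and $P^T$ are each fully $N$-separable on their respective systems, $\sigma\otimes P^T$ is fully separable across the grouping $\hat{\msf{S}}_1\msf{S}_1{:}\cdots{:}\hat{\msf{S}}_N\msf{S}_N$, so every listed witness automatically satisfies $\Tr(W_i[\sigma\otimes P^T])\geq 0$.

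Next I would split the witnesses according to the sign of $a_i:=\Tr(W_i[\hat\psi\otimes\psi^T])$. Harmless witnesses with $a_i\geq 0$ give $\Tr(W_i J(\Lambda_t))\geq 0$ automatically (in particular, any PSD $W_i$ is harmless). For dangerous witnesses with $a_i<0$ one needs the strictly positive contribution $\Tr(W_i[\sigma\otimes P^T])>0$, because then any $t$ at least $\max_i(-a_i)/\Tr(W_i[\sigma\otimes P^T])$ ranging over dangerous $i$ makes every witness inequality hold. The theorem therefore reduces to producing a single pair $(\sigma,P)$ for which the strict positivity holds simultaneously for all finitely many dangerous witnesses.

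I would produce such a pair as follows. Take $P=\ketbra{b}{b}$ for a product vector $\ket{b}\in\mc{H}_{\msf{S}_1\cdots\msf{S}_N}$ orthogonal to $\ket{\psi}$; such a $\ket{b}$ exists because the Segre variety of product vectors has positive dimension and therefore meets the hyperplane $\ket{\psi}^{\perp}$. With this choice, $\Tr(W_i[\sigma\otimes\ketbra{b^{*}}{b^{*}}])=\Tr(M_i^{(b)}\sigma)$, where the partial matrix element $M_i^{(b)}:=\bra{b^{*}}W_i\ket{b^{*}}\in\mc{L}(\mc{H}_{\hat{\msf{S}}_1\cdots\hat{\msf{S}}_N})$ is itself a multipartite entanglement witness on $\mc{H}_{\hat{\msf{S}}_1\cdots\hat{\msf{S}}_N}$, since $\sigma\otimes\ketbra{b^{*}}{b^{*}}$ is fully separable whenever $\sigma$ is. Whenever $M_i^{(b)}\neq 0$, the real span of fully separable states covers all Hermitian operators, so one finds a fully separable $\sigma_i$ with $\Tr(M_i^{(b)}\sigma_i)>0$; the uniform average $\sigma=\tfrac{1}{n}\sum_i\sigma_i$ is then fully separable and satisfies $\Tr(M_i^{(b)}\sigma)>0$ for every dangerous $i$, because the remaining summands are nonnegative.

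The hard part is guaranteeing $M_i^{(b)}\neq 0$ for every dangerous $W_i$ with a single common $\ket{b}$. For each dangerous (and hence non-PSD) $W_i$ the map $\ket{b}\mapsto M_i^{(b)}$ is a quadratic polynomial whose zero locus is a proper algebraic subvariety of the intersection of the Segre variety with $\ket{\psi}^{\perp}$, so a generic $\ket{b}$ in this positive-dimensional intersection lies outside the finite union of bad subvarieties; should a degenerate configuration of witnesses force one of these subvarieties to fill the whole intersection, one may replace the rank-one $P$ by a convex combination $\sum_k p_k\ketbra{b_k}{b_k}$ of several product projectors in $\ket{\psi}^{\perp}$ to extract the needed additional linear freedom. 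With $(\sigma,P)$ in hand and $t$ chosen sufficiently large, the resulting $\Lambda_t$ is entangling undetected by $\{W_i\}_{i=1}^n$ and sends $\ketbra{\psi}{\psi}$ to $\ketbra{\hat\psi}{\hat\psi}$.
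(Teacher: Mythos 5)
Your construction is essentially the paper's own, in lightly disguised form: the paper takes the Choi operator $\Omega=\frac{|b|}{a}\,\omega\otimes\ketbra{\psi^{*\perp}}{\psi^{*\perp}}+\ketbra{\hat{\psi}}{\hat{\psi}}\otimes(\I-\ketbra{\psi^{*\perp}}{\psi^{*\perp}})$ with $\ket{\psi^{*\perp}}$ a product vector orthogonal to $\ket{\psi^*}$, and your partial matrix elements $M_i^{(b)}$, your averaged state $\sigma=\frac{1}{n}\sum_i\sigma_i$, and your large parameter $t$ correspond exactly to the paper's witnesses $R_i$, its state $\omega=\sum_i\rho_i$, and its weight $|b|/a$. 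Everything up through the averaging argument is correct (CP-ness, the Choi computation, full separability of $\sigma\otimes P^T$ across the grouping $\hat{\msf{S}}_k\msf{S}_k$, the harmless/dangerous split, and the spanning argument, which is valid \emph{provided} $M_i^{(b)}\neq0$).

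The genuine gap is precisely the one you flag yourself, and neither of your two patches closes it. First, the genericity claim is unsubstantiated, and in fact it can fail outright: take $\ket{\psi}$ a product state, $\ket{\hat{\psi}}$ entangled, $A$ a witness on $\mc{H}_{\hat{\msf{S}}_1\cdots\hat{\msf{S}}_N}$ with $\bra{\hat{\psi}}A\ket{\hat{\psi}}<0$, and $W=A\otimes\ketbra{\psi^*}{\psi^*}$. Taking partial matrix elements with product vectors on the unhatted systems shows $W$ is a legitimate witness for the grouping $\hat{\msf{S}}_1\msf{S}_1{:}\cdots{:}\hat{\msf{S}}_N\msf{S}_N$ (block-positivity of $A$ does the work), yet $M^{(b)}=\bra{b^*}W\ket{b^*}=A\,\lvert\braket{\psi^*|b^*}\rvert^2=0$ for \emph{every} $\ket{b}\perp\ket{\psi}$, product or not — the zero locus is the entire intersection. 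Second, your fallback of replacing $\ketbra{b}{b}$ by $P=\sum_k p_k\ketbra{b_k}{b_k}$ buys nothing: by linearity, $\Tr(W_i[\sigma\otimes P^T])=\sum_k p_k\Tr(M_i^{(b_k)}\sigma)$, so if every $M_i^{(b_k)}$ vanishes the combination vanishes too; there is no ``additional linear freedom'' to extract. Worse, in the example above no ansatz of any kind can succeed: since $\bra{\psi^*}J(\Lambda)\ket{\psi^*}=\Lambda(\ketbra{\psi}{\psi})$ is pinned to $\ketbra{\hat{\psi}}{\hat{\psi}}$ for every CP map with the required action, one always gets $\Tr(W J(\Lambda))=\bra{\hat{\psi}}A\ket{\hat{\psi}}<0$, so the statement itself requires a nondegeneracy hypothesis on the witness family. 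You should know that the paper's proof silently rests on the same assumption — its spanning argument producing $\rho_i$ with $\Tr(R_i\rho_i)>0$ presupposes $R_i\neq0$ (which fails for the $W$ above, where $a=0$) — so your honesty about the hard step exposes a gap shared by the published argument; but as written, your proof does not go through at exactly the point you single out.
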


\begin{proof}
Let $\ket{\psi^{*\perp}}=\ket{a_1}\cdots\ket{a_N}$ be any product state that is orthogonal to $\ket{\psi^*}$.  Define the hermitian operators
\begin{align}
R_i&=\text{Tr}_{\msf{S}_1\cdots\msf{S}_N}\left(W_i[\I_{\hat{\msf{S}}_1\cdots\hat{\msf{S}}_{N}}\otimes\ketbra{\psi^{*\perp}}{\psi^{*\perp}}]\right),&S_i&=\text{Tr}_{\msf{S}_1\cdots\msf{S}_N}\left(W_i[\I_{\hat{\msf{S}}_1\cdots\hat{\msf{S}}_{N}}\otimes(\I-\ketbra{\psi^{*\perp}}{\psi^{*\perp}})]\right),
\end{align}
which by construction are entanglement witnesses on $\mc{H}_{\hat{\msf{S}}_1\cdots\hat{\msf{S}}_N}$.  There must exist a separable state $\omega$ such that $\text{Tr}(\omega R_i)>0$ for all $i=1,\cdots, n$.  Indeed such a state can be constructed as follows.  For each $R_i$, let $\rho_i$ be a separable state such that $\text{Tr}(R_i\rho_i)>0$; since the set of separable states span the space of hermitian operators, we are assured that such a $\rho_i$ exists.  Then take $\omega=\sum_{i=1}^n\rho_i$, and the fact that each $R_i$ is an entanglement witness guarantees that $\text{Tr}(R_i\omega)>0$ for all $i$.  Define
\begin{align}
a&=\min_{i=1,\cdots,n}\text{Tr}(R_i\omega)>0, &b&=\min_{i=1,\cdots,n}\bra{\hat{\psi}}S_i\ket{\hat{\psi}},
\end{align}
and the positive semi-definite operator
\begin{align}
\Omega=\frac{|b|}{a}\omega\otimes\ketbra{\psi^{*\perp}}{\psi^{*\perp}}+\ketbra{\hat{\psi}}{\hat{\psi}}\otimes (\I-\ketbra{\psi^{*\perp}}{\psi^{*\perp}}).
\end{align}
It can be directly seen that $\text{Tr}(W_i\Omega)\geq 0$ for all $i$, and the CP map given by
\[\Lambda(\rho)=\text{Tr}_{\msf{S}_1\cdots\msf{S}_N}\left(\Omega[\I_{\hat{\msf{S}}_1\cdots\hat{\msf{S}}_{N}}\otimes\rho^T]\right)\]
 satisfies $\Lambda\left(\ketbra{\psi}{\psi}\right)=\ketbra{\hat{\psi}}{\hat{\psi}}$.
\end{proof}

Theorem \ref{Thm:Stochastic} has interesting consequences for the problem of tensor rank calculation.  The tensor rank of an $N$-partite state $\ket{\psi}$ is the minimum number of product states whose linear span contains $\ket{\psi}$; i.e.
\begin{equation}
\label{Eq:tensorDefn1}
\text{Tensor rank}\left(\ket{\psi}\right)=\min\left\{r:\ket{\psi}=\sum_{i=1}^r\ket{\varphi^{(\msf{S}_1)}_i}\otimes\cdots\otimes\ket{\varphi^{(\msf{S}_N)}_i}\right\}.
\end{equation}
It is easy to see that the tensor rank of a state can be equivalently characterized as an SLOCC convertibility problem,
\begin{equation}
\label{Eq:tensorSLOCC1}
\text{Tensor rank}\left(\ket{\psi}\right)=\min\left\{r:\ket{GHZ_r^{(N)}}\to \ket{\psi}\text{ by SLOCC}\right\},
\end{equation}
where $\ket{GHZ_r^{(N)}}=\sqrt{1/r}\sum_{i=1}^r\ket{i\cdots i}_{\msf{S}_1\cdots\msf{S}_N}$ \cite{Chitambar2008}.  Given the one-to-one correspondence between separable maps and separable operators, Eq.\ \eqref{Eq:tensorSLOCC1} can be expressed in terms of entanglement witnesses as
\begin{align}
\label{Eq:tensorSLOCC}
\text{Tensor rank}\left(\ket{\psi}\right)=\min\bigg\{r:\text{Tr}_{\msf{S}_1\cdots\msf{S}_N}&\left(\Omega[\I_{\hat{\msf{S}}_1\cdots\hat{\msf{S}}_{N}}\otimes\Phi^{(N)}_r]\right)=\ketbra{\psi}{\psi},\;\;\Omega\geq 0,\notag\\
& \text{Tr}[\Omega W]\geq 0\;\;\forall W\in\mathfrak{W}_{\hat{\msf{S}}_1\msf{S}_1:\cdots:\hat{\msf{S}}_N\msf{S}_N}\bigg\},
\end{align}
where $\Phi^{(N)}_r=\ketbra{GHZ_r^{(N)}}{GHZ_r^{(N)}}$ and $\mathfrak{W}_{\hat{\msf{S}}_1\msf{S}_1:\cdots:\hat{\msf{S}}_N\msf{S}_N}$ is the collection of all entanglement witnesses on $\mc{H}_{\hat{\msf{S}}_1\msf{S}_1:\cdots,\hat{\msf{S}}_N\msf{S}_N}$.

From \eqref{Eq:tensorDefn1}, the tensor rank can be seen as a multipartite generalization of the Schmidt rank.  However, unlike the Schmidt rank, the tensor rank is in general very difficult to compute; in fact already in tripartite systems the problem is NP-Complete \cite{Hastad1990}.  One way to tackle the tensor rank problem is to start from the characterization \eqref{Eq:tensorSLOCC} and move ``beyond'' SLOCC.  Specifically, instead of considering all entanglement witnesses in the minimization of \eqref{Eq:tensorSLOCC}, one could relax the problem and work with some finite subset.  A lower bound on this relaxed problem would also be a lower bound on the tensor rank of $\ket{\psi}$.  However, Theorem \ref{Thm:Stochastic} implies that such a strategy will fail since the minimum for any finite set of witnesses will always be one.  This finding reflects the general difficulty even in computing non-trivial lower bounds for the tensor rank.


\section{\texorpdfstring{$k$}{k}-resource-non-generating maps}
\label{sec:kresourcepreserving}

We now examine the classes of $k$-non-entangling maps and $k$-PPT-preserving maps. As Proposition \ref{prop:completespepreserving} below indicates, the class of completely non-entangling maps coincides with the class of separable maps. In particular, a map $\Lambda$ is completely non-entangling if and only if it is $d$ non-entangling, where $d$ is the dimension of Alice's and Bob's input spaces.  Since a map is separable if and only if its Choi representation is a separable operator, this parallels the notion that complete positivity of a map is equivalent to positivity of its Choi representation. The same holds for maps that are PPT and completely PPT-preserving.

\begin{proposition}\label{prop:completespepreserving}
 Let $\calH_{\msf{A}_1}$ and $\calH_{\msf{B}_1}$ be $d$-dimensional systems and consider a completely positive map $\Lambda:\calL(\calH_{\msf{A}_1\msf{B}_1})\rightarrow\calL(\calH_{\msf{A}_2\msf{B}_2})$. The following are equivalent:
 \begin{enumerate}
  \item[\textup{(i)}] $\Lambda$ is separable (respectively PPT);
  \item[\textup{(ii)}] $\Lambda$ is $d$-non-entangling (respectively $d$-PPT-preserving);
  \item[\textup{(iii)}] $\Lambda$ is completely non-entangling (respectively completely PPT-preserving).
 \end{enumerate}
\end{proposition}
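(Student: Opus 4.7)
The plan is to close the loop (i) $\Rightarrow$ (iii) $\Rightarrow$ (ii) $\Rightarrow$ (i), treating the non-entangling and PPT versions in parallel. Only the implication (ii) $\Rightarrow$ (i) contains real content; the other two are essentially immediate.

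First I would dispatch (i) $\Rightarrow$ (iii). In the separable case, write $\Lambda = \sum_j \Phi_j \otimes \Psi_j$ and tensor with the product identity $\id_{\msf{A}\msf{B}} = \id_{\msf{A}} \otimes \id_{\msf{B}}$ to obtain
\[
 \Lambda \otimes \id_{\msf{A}\msf{B}} = \sum_j (\Phi_j \otimes \id_{\msf{A}}) \otimes (\Psi_j \otimes \id_{\msf{B}}),
\]
which is manifestly separable, hence non-entangling, for every ancilla dimension $k$. For the PPT case, the equivalence ``$\Lambda$ is PPT $\iff$ $\Lambda^\Gamma$ is CP'' (recalled in Section~\ref{sec:prelims}), applied to $\Lambda \otimes \id_{\msf{A}\msf{B}}$, yields $(\Lambda \otimes \id_{\msf{A}\msf{B}})^\Gamma = \Lambda^\Gamma \otimes \id_{\msf{A}\msf{B}}$, which is CP whenever $\Lambda^\Gamma$ is. Thus $\Lambda \otimes \id_{\msf{A}\msf{B}}$ is PPT, and in particular PPT-preserving. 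The implication (iii) $\Rightarrow$ (ii) is just the specialization $k = d$.

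The crux is (ii) $\Rightarrow$ (i). I would set the ancilla dimensions to match the inputs, $\calH_{\msf{A}} = \calH_{\msf{A}_1}$ and $\calH_{\msf{B}} = \calH_{\msf{B}_1}$, and feed $\Lambda \otimes \id_{\msf{A}\msf{B}}$ the normalized state
\[
 \rho = \frac{1}{d^2}\, \phi^+_{\msf{A}_1 \msf{A}} \otimes \phi^+_{\msf{B}_1 \msf{B}},
\]
where $\phi^+_{\msf{A}_1 \msf{A}}$ denotes the unnormalized maximally entangled operator on $\calH_{\msf{A}_1} \otimes \calH_{\msf{A}}$, and similarly for the $\msf{B}$ side. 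With respect to the bipartition $\msf{A}_1 \msf{A} : \msf{B}_1 \msf{B}$, the state $\rho$ is a product of two operators supported on disjoint subsystems, so it is both separable and PPT across this cut. Applying $\Lambda$ to the registers $\msf{A}_1 \msf{B}_1$ and the identity to $\msf{A}\msf{B}$ reproduces, by definition of the Choi representation, the operator $\frac{1}{d^2} J(\Lambda)$, with $\msf{A}$ and $\msf{B}$ now playing the role of the Choi ancillas. The hypothesis that $\Lambda$ is $d$-non-entangling (resp.\ $d$-PPT-preserving) then forces $J(\Lambda)$ to be separable (resp.\ PPT) with respect to $\msf{A}_2 \msf{A}_1 : \msf{B}_2 \msf{B}_1$, which is exactly the definition of $\Lambda$ being separable (resp.\ PPT).

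There is no genuine obstacle in this argument; the only real decision is the choice of input state. The key observation is that $\phi^+_{\msf{A}_1 \msf{A}} \otimes \phi^+_{\msf{B}_1 \msf{B}}$ is maximally entangled within each of the local cuts $\msf{A}_1 : \msf{A}$ and $\msf{B}_1 : \msf{B}$, yet it is a \emph{product} across $\msf{A}_1 \msf{A} : \msf{B}_1 \msf{B}$. This simultaneously makes it a legitimate input for a $d$-non-entangling (resp.\ $d$-PPT-preserving) map and, upon applying $\Lambda \otimes \id$, recovers the full Choi representation of $\Lambda$. It also explains why $k = d$ is both necessary and sufficient: an ancilla of smaller dimension cannot accommodate the defining input of the Choi construction.
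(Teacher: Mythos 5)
Your proof is correct and takes essentially the same route as the paper: the cycle (i)$\Rightarrow$(iii)$\Rightarrow$(ii)$\Rightarrow$(i), with (i)$\Rightarrow$(iii) established exactly as in the paper by tensoring the separable decomposition with the identity (resp.\ using $(\Lambda\otimes\id_{\msf{A}\msf{B}})^\Gamma = \Lambda^\Gamma\otimes\id_{\msf{A}\msf{B}}$ for the PPT case). The only difference is that you make explicit the (ii)$\Rightarrow$(i) step --- feeding in the state $\frac{1}{d^2}\,\phi^+_{\msf{A}_1\msf{A}}\otimes\phi^+_{\msf{B}_1\msf{B}}$, which is separable and PPT across $\msf{A}_1\msf{A}\!:\!\msf{B}_1\msf{B}$ yet recovers $\frac{1}{d^2}J(\Lambda)$ --- whereas the paper treats this as following directly from the definitions.
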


\begin{proof}
The implications (iii)$\Rightarrow$(ii)$\Rightarrow$(i) follow from the definitions. It remains to prove the implication (i)$\Rightarrow$(iii), which follows from the following observations. Note that the tensor product of separable maps is separable, since if $\Lambda:\calL(\calH_{\msf{A}_1\msf{B}_1})\rightarrow\calL(\calH_{\msf{A}_2\msf{B}_2})$ and $\Lambda':\calL(\calH_{\msf{A}'_1\msf{B}'_1})\rightarrow\calL(\calH_{\msf{A}'_2\msf{B}'_2})$ are both separable, then $\Lambda\otimes\Lambda':\calL(\calH_{\msf{A}_1\msf{A}'_1}\otimes\calH_{\msf{B}_1\msf{B}_1'})\rightarrow\calL(\calH_{\msf{A}_2\msf{A}'_2}\otimes\calH_{\msf{B}_2\msf{B}_2'})$ can be written as
\[
 \sum_{i,j}\Phi_i\otimes\Phi_i'\otimes\Psi_j\otimes\Psi_j'
\]
for some CP maps
\begin{align*}
 \Phi_i&:\calL(\calH_{\msf{A}_1})\rightarrow\calL(\calH_{\msf{A}_2})  &
 \Psi_i&:\calL(\calH_{\msf{B}_1})\rightarrow\calL(\calH_{\msf{B}_2}) \\
 \Phi'_i&:\calL(\calH_{\msf{A}'_1})\rightarrow\calL(\calH_{\msf{A}'_2})  &
 \Psi'_i&:\calL(\calH_{\msf{B}'_1})\rightarrow\calL(\calH_{\msf{B}'_2})
\end{align*}
such that $\Lambda=\sum_i\Phi_i\otimes\Psi_i$ and $\Lambda'=\sum_j\Phi'_j\otimes\Psi'_j$, and thus $\Lambda\otimes\Lambda'$ is separable. Analogously, the tensor product of any PPT maps is PPT, which can be clearly seen from the fact that $(\Lambda\otimes\Lambda')^\Gamma = \Lambda^\Gamma\otimes\Lambda'^\Gamma$. In particular, the identity map $\id_{\msf{A}\msf{B}}=\id_{\msf{A}}\otimes\id_{\msf{B}}$ is separable (resp.\ PPT) for any systems $\msf{A}$ and $\msf{B}$, hence the map $\Lambda\otimes\id_{\msf{A}\msf{B}}$ is separable (resp.\ PPT) whenever $\Lambda$ is separable (resp.\ PPT). The desired result follows from the fact that every separable map is non-entangling (and every PPT map is PPT-preserving).
\end{proof}

Note that $1$-non-entangling is the same as non-entangling. (Similarly, $1$-PPT-preserving is the same as PPT-preserving). While it is evident that non-entangling is a strictly larger set than completely non-entangling channels, it is not evident from the definitions that there must exist maps that are $k$-non-entangling for some $k<d$ that are not also completely non-entangling. In the following, we show that there exist $k$-non-entangling maps which are not $(k+1)$-non-entangling for every $k<d$. This shows that the structure of these maps is as rich as it could be.

We first examine a useful condition for determining when a map on $\calL(\CC^d\otimes\CC^d)$ is $k$-non-entangling. Note that any pure state $\ket{u}\in\CC^d\otimes\CC^k$ can be obtained from $\ket{\phi^+_d}$ by some rank-$k$ operator $X$ such that
\[
 \ket{u} = \I_d\otimes X\ket{\phi^+_d}.
\]
A map $\Lambda$ on $\calL(\CC^d\otimes\CC^d)$ is $k$-non-entangling if and only if
\[
 \Lambda\otimes\id_{k^2}(\ketbra{u}{u}\otimes\ketbra{v}{v})
\]
is separable for all pure states $\ket{u},\ket{v}\in\CC^d\otimes\CC^k$. To show that $\Lambda$ is $k$-non-entangling, it suffices to check only when $X$ is a $k$-dimensional projection. A $k$-dimensional projection is a linear operator $P:\CC^d\rightarrow\CC^k$ such that $PP^\dagger=\I_k$. In particular, for such an operator it holds that $\lVert P\rVert = \sqrt{\text{Tr}(\I_k)}=\sqrt{k}$, where $\lVert\cdot\rVert_2$ is the Frobenius norm. The above observations allow us to state the following characterization of $k$-non-entangling maps in Lemma \ref{lem:knonent}.
\begin{lemma}\label{lem:knonent}
 A map $\Lambda$ on $\calL(\CC^d\otimes\CC^d)$ is $k$-non-entangling if and only if the operator
\begin{equation}\label{eq:IXYJLIXY}
 \bigl(\I_{d^2}\otimes P\otimes Q\bigr) J(\Lambda) \bigl(\I_{d^2}\otimes P\otimes Q\bigr)^\dagger
\end{equation}
is separable for all $k$-dimensional projections $P$ and $Q$.
\end{lemma}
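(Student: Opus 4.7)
The plan is to translate condition~\eqref{eq:IXYJLIXY} into a statement about $\Lambda\otimes\id_{k^2}$ applied to a distinguished family of separable pure inputs, and then to show that every pure product state in $(\CC^d\otimes\CC^k)\otimes(\CC^d\otimes\CC^k)$ can be obtained from one of these inputs by a local operation on the ancilla.

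First I would compute that for any operators $M,N:\CC^d\to\CC^k$,
\[
(\I_{d^2}\otimes M\otimes N)\,J(\Lambda)\,(\I_{d^2}\otimes M\otimes N)^\dagger
= (\Lambda\otimes\id_{k^2})\bigl(\rho_M\otimes\sigma_N\bigr),
\]
where $\rho_M=(\I_d\otimes M)\phi^+_{\msf{A}_1}(\I_d\otimes M)^\dagger$ and $\sigma_N$ is defined analogously on the $\msf{B}$ side. The identity follows because $M\otimes N$ acts on the input side of the Choi representation while $\Lambda$ acts on the output side, and these tensor factors are disjoint. For $M=P$ and $N=Q$ with $P,Q$ being $k$-dimensional projections, one verifies that $\rho_P=k\ket{u_P}\bra{u_P}$ where $\ket{u_P}:=\tfrac{1}{\sqrt{k}}(\I_d\otimes P)\sum_i\ket{ii}$ is a normalized maximally entangled state of Schmidt rank $k$ on $\CC^d\otimes\CC^k$ (the normalization coming from $\Tr(P^\dagger P)=\Tr(PP^\dagger)=k$), and similarly $\sigma_Q=k\ket{v_Q}\bra{v_Q}$. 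Hence the operator in~\eqref{eq:IXYJLIXY} equals $k^2(\Lambda\otimes\id_{k^2})(\ket{u_P}\bra{u_P}\otimes\ket{v_Q}\bra{v_Q})$, and the ``only if'' direction of the lemma follows immediately since $\ket{u_P}\bra{u_P}\otimes\ket{v_Q}\bra{v_Q}$ is a pure separable input and positive scaling preserves separability.

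For the ``if'' direction, by convexity and linearity of $\Lambda\otimes\id_{k^2}$ it suffices to verify separability of $(\Lambda\otimes\id_{k^2})(\ket{u}\bra{u}\otimes\ket{v}\bra{v})$ for every pair of pure states $\ket{u},\ket{v}\in\CC^d\otimes\CC^k$. Parametrizing $\ket{u}=(\I_d\otimes M)\sum_i\ket{ii}$ for some $M:\CC^d\to\CC^k$, we have $\operatorname{rank}(M)\leq k$, so a singular value decomposition of $M$ together with an extension of its right singular vectors to a $d\times k$ matrix with orthonormal columns yields a factorization $M=SP$ where $P:\CC^d\to\CC^k$ is a $k$-dimensional projection and $S\in\calL(\CC^k)$. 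Factoring $N=TQ$ analogously and substituting into the formula above gives
\[
(\Lambda\otimes\id_{k^2})(\ket{u}\bra{u}\otimes\ket{v}\bra{v})
= k^2\,(\I_{\msf{A}_2}\otimes S\otimes\I_{\msf{B}_2}\otimes T)\,
(\Lambda\otimes\id_{k^2})(\ket{u_P}\bra{u_P}\otimes\ket{v_Q}\bra{v_Q})\,
(\I_{\msf{A}_2}\otimes S\otimes\I_{\msf{B}_2}\otimes T)^\dagger,
\]
since $S$ and $T$ act on the ancilla systems and therefore commute past $\Lambda$. By hypothesis the middle factor is separable across $\msf{A}_2\msf{A}:\msf{B}_2\msf{B}$, and the outer conjugation by $\I_{\msf{A}_2}\otimes S$ on $\msf{A}_2\msf{A}$ together with $\I_{\msf{B}_2}\otimes T$ on $\msf{B}_2\msf{B}$ is a local operation with respect to this bipartition, so separability is preserved.

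The only nontrivial step is the factorization $M=SP$, which is a routine consequence of the singular value decomposition under the implicit assumption $k\leq d$; concretely one may take $P=V^\dagger$, where $V:\CC^k\to\CC^d$ is an isometry whose range contains the range of $M^\dagger$, and set $S=MV$. Everything else is bookkeeping about where the ancilla operators act relative to the separability bipartition.
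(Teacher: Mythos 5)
Your proof is correct and follows essentially the same route the paper sketches before stating the lemma: parametrize pure states $\ket{u}\in\CC^d\otimes\CC^k$ as $(\I_d\otimes M)\ket{\phi^+_d}$ (up to normalization), reduce the general operator $M$ to a $k$-dimensional projection via the factorization $M=SP$, and absorb $S$, $T$ as local operations on the ancillas across the $\msf{A}_2\msf{A}:\msf{B}_2\msf{B}$ cut, which preserves separability. The paper leaves the $M=SP$ step and the Choi-operator identity implicit, and your write-up correctly fills in exactly those details (including the normalization $\Tr(PP^\dagger)=k$ and the implicit assumption $k\leq d$).
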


The construction of $k$-non-entangling maps that are not $(k+1)$-non-entangling is based on the well known Werner states. The Werner states are a family of symmetric states on $\CC^d\otimes\CC^d$ that are defined by
\begin{equation}
 \rho_{d}(\beta) = \frac{1}{d^2-(\beta+1)}\left(\I_{d^2}-\frac{\beta+1}{d}F_d\right)
\end{equation}
for $-(d+1)\leq \beta\leq d-1$, where $F_d=d (\phi^+_d)^\Gamma$. These states are entangled for $\beta>0$, and  furthermore they are PPT if and only if they are separable. The following lemma presents a fact about Werner states on $\CC^d\otimes\CC^d$ that will be employed in our construction of $k$-non-entangling maps in Theorem \ref{thm:knonentnotdnonent}.

\begin{lemma}\label{lem:kprojectseparable}
Let $k$ and $d$ be integers with $2\leq k<d$ and let $-(d+1)\leq\beta\leq d-1$.  The operator $(P\otimes Q)\rho_{d}(\beta)(P\otimes Q)^\dagger$ is separable for all $k$-dimensional projections $P$ and $Q$ if and only if it holds that $\beta\leq(d-k)/k$.
\end{lemma}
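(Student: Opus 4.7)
The plan is to exploit closure of separability on $\CC^k\otimes\CC^k$ under partial transposition: $(P\otimes Q)\rho_d(\beta)(P\otimes Q)^\dagger$ is separable iff its partial transpose is. I would compute that partial transpose explicitly and observe that it has a rank-one form, after which Theorem~1 of \cite{Gurvits2002} (already invoked in Lemma~\ref{lem:psiphiduallynonent}) delivers separability at once. The main obstacle is establishing this rank-one structure, since otherwise one would face an \emph{a priori} uncountable family of separability problems parameterized by the pair $(P,Q)$.

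Setting $M:=QP^\dagger$, a direct index calculation in the computational basis gives
\[
 \bigl((P\otimes Q)F_d(P\otimes Q)^\dagger\bigr)_{ab,cd}=\overline{M_{da}}\,M_{bc},
\]
so partial transposition on the first factor yields the rank-one operator
\[
 \bigl((P\otimes Q)F_d(P\otimes Q)^\dagger\bigr)^\Gamma=\ketbra{\mu}{\mu},\qquad\ket{\mu}:=\sum_{a,b=1}^{k}M_{ba}\ket{ab},
\]
whose squared norm $\braket{\mu|\mu}=\Tr(M^\dagger M)=\Tr(P^\dagger P\,Q^\dagger Q)$ is at most $k$, being the trace of a product of two rank-$k$ projectors on $\CC^d$. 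Writing $c:=(\beta+1)/d$, I therefore find
\[
 \bigl((P\otimes Q)\rho_d(\beta)(P\otimes Q)^\dagger\bigr)^\Gamma=\frac{1}{d^2-(\beta+1)}\bigl(\I_{k^2}-c\,\ketbra{\mu}{\mu}\bigr).
\]
When $\beta\leq(d-k)/k$ we have $ck\leq 1$, hence $c\braket{\mu|\mu}\leq 1$ for every pair $P,Q$, so Theorem~1 of \cite{Gurvits2002} makes the right-hand side separable; since separability is preserved by partial transposition, the ``if'' direction follows.

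For the converse, I would specialize to $P=Q$. Then $M=PP^\dagger=\I_k$, and a direct check gives $(P\otimes P)F_d(P\otimes P)^\dagger=F_k$, so $(P\otimes P)\rho_d(\beta)(P\otimes P)^\dagger$ is a positive multiple of $\I_{k^2}-cF_k$. The latter is proportional to the Werner state $\rho_k(\beta')$ on $\CC^k\otimes\CC^k$ with $(\beta'+1)/k=c$; and Werner states are separable iff $\beta'\leq 0$, which rearranges to $\beta\leq(d-k)/k$. Thus separability for all $P,Q$ already forces the stated bound, and the worst case $\braket{\mu|\mu}=k$ used in the Gurvits step is attained precisely on this $P=Q$ family, making that bound tight.
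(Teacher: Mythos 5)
Your proof is correct and takes essentially the same route as the paper's: your $\ketbra{\mu}{\mu}$ is precisely the partial transpose of $(P\otimes Q)F_d(P\otimes Q)^\dagger=\braket{\mu|\mu}\,\psi^\Gamma$ appearing there, so both arguments reduce to the identical application of Theorem~1 of \cite{Gurvits2002} via $\braket{\mu|\mu}=\Tr(P^\dagger P\,Q^\dagger Q)\le k$, and your converse via the $k$-dimensional Werner threshold at $P=Q$ is the same computation as the paper's direct NPT check of $\I_{k^2}-\tfrac{k}{d}(\beta+1)\phi^+_k$. The one caveat—that for $\beta<-1$ the Gurvits hypothesis requires $|c|\braket{\mu|\mu}\le1$ rather than $c\braket{\mu|\mu}\le1$—is shared verbatim by the paper's proof and is harmless, since for $\beta\le0$ the Werner state $\rho_d(\beta)$ is itself separable and conjugation by $P\otimes Q$ trivially preserves separability.
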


\begin{proof}
 First assume that $\beta\leq(d-k)/k$ and let $P$ and $Q$ be $k$-dimensional projections. Let $\overline{P}$ denote the matrix whose entries are complex conjugate of those of $P$, and let $\ket{\psi}=\overline{P}\otimes Q\ket{\phi^+_d}/\sqrt{c}$ denote the normalized pure state with Schmidt rank $k$ and normalization constant $c>0$ given by
 \begin{align*}
  c &= \Tr(\overline{P}\otimes Q \phi^+_d P^T\otimes Q^\dagger)\\
    &=\frac{1}{d}\Tr(QP^\dagger PQ^\dagger)\\
    &\leq \frac{k}{d},
 \end{align*}
 where the inequality follows from the fact that $QP^\dagger PQ^\dagger\leq\I_k$ with equality if and only if $P=Q$. Hence $c\leq k/d$ with equality if and only if $P=Q$. Since $\lVert\psi^\Gamma\rVert_2=1$, note that
 \begin{equation}\label{eq:PQrhoPR}
  (P\otimes Q)\rho_{d}(\beta)(P\otimes Q)^\dagger\propto \I_{k^2}-c(\beta+1)\psi^\Gamma
 \end{equation}
is separable as long as $c(\beta+1)\leq1$ (by Theorem 1 of \cite{Gurvits2002}). Then it is clearly separable since $\beta\leq(d-k)/k$ by assumption. On the other hand, if $\beta>(d-k)/k$ we may choose $P=Q$ such that
\begin{equation}\label{eq:PQrhoPRGamma}
  \bigl((P\otimes Q)\rho_{d}(\beta)(P\otimes Q)^\dagger\bigr)^\Gamma \propto \I_{k^2}-\frac{k}{d}(\beta+1)\phi^+_k \not\geq0.
 \end{equation}
 We conclude that $(P\otimes Q)\rho_{d}(\beta)(P\otimes Q)^\dagger$ is NPT and therefore entangled.
\end{proof}

We are now ready to present the construction of $k$-non-entangling maps in Theorem \ref{thm:knonentnotdnonent}. The maps presented here are trivially non-entangling since they map every state to a separable state. Nonetheless, the maps defined in Theorem \ref{thm:knonentnotdnonent} are not completely non-entangling and showcase the hierarchy of $k$-non-entangling maps.

\begin{theorem}\label{thm:knonentnotdnonent}
 For all integers $k$ and $d$ with $2\leq k <d$, there exists a $k$-non-entangling map $\Lambda:\calL(\CC^d\otimes\CC^d)\rightarrow\calL(\CC^2\otimes\CC^2)$ that is not $(k+1)$-non-entangling.
\end{theorem}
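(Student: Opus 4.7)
The plan is to construct $\Lambda$ explicitly via
\begin{equation*}
\Lambda(X) = \Tr(\rho_d(\beta)X)\,\sigma_1 + \Tr\bigl((\I_{d^2}-\rho_d(\beta))X\bigr)\,\sigma_2,
\end{equation*}
with the critical Werner parameter $\beta = (d-k)/k$ from Lemma \ref{lem:kprojectseparable} and the separable two-qubit states $\sigma_1 = \tfrac14\I_4$ and $\sigma_2 = \ketbra{00}{00}$. This is manifestly CPTP because $0\leq\rho_d(\beta)\leq\I_{d^2}$ (the eigenvalues of $\rho_d(\beta)$ are bounded by $1$), and since $\rho_d(\beta)^T=\rho_d(\beta)$ its Choi matrix is $J(\Lambda) = \sigma_1\otimes\rho_d(\beta) + \sigma_2\otimes(\I_{d^2}-\rho_d(\beta))$. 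Although every output is a separable mixture of $\sigma_1$ and $\sigma_2$, the Werner state embedded in the input register is what makes the behaviour under ancilla tensoring nontrivial.

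I verify $k$-non-entangling via the natural extension of Lemma \ref{lem:knonent} to rectangular output dimension: for every pair of $k$-dimensional projections $P,Q:\CC^d\to\CC^k$, the sandwich
\begin{equation*}
(\I_4\otimes P\otimes Q)\,J(\Lambda)\,(\I_4\otimes P\otimes Q)^\dagger \;=\; \sigma_1\otimes R + \sigma_2\otimes(\I_{k^2}-R),
\end{equation*}
with $R = (P\otimes Q)\rho_d(\beta)(P\otimes Q)^\dagger$, must be separable across $A:B$. Lemma \ref{lem:kprojectseparable} directly gives separability of $R$. For the complement $\I_{k^2}-R$ I estimate $\|R\|_\infty \leq \|\rho_d(\beta)\|_\infty = (k+1)/(d(dk-1))$ (the largest eigenvalue of the Werner state at $\beta=(d-k)/k$), so that $\|R\|_2 \leq k\|R\|_\infty < 1$ for $d>k\geq 2$, and Gurvits' theorem (invoked in the text before Lemma \ref{lem:psiphiduallynonent}) produces a separable decomposition of $\I_{k^2}-R$. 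Combining this with the separability of $\sigma_1,\sigma_2$ across the two output qubits yields separability of the whole sandwich.

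To exhibit failure of $(k+1)$-non-entangling, I set $P=Q$ to be the canonical $(k+1)$-dimensional projection onto the first $k+1$ basis vectors of $\CC^d$. By equation~\eqref{eq:PQrhoPRGamma}, the partial transpose $R'{}^\Gamma$ of the corresponding sandwich factor is proportional to $\I_{(k+1)^2} - \tfrac{k+1}{k}\phi^+_{k+1}$, which has negative eigenvalue $-1/(kd^2-d)$ on $\ket{\phi^+_{k+1}}$. I then take the partial transpose of the full sandwich on the $B$ side and evaluate its quadratic form on the product vector $\ket{01}\otimes\ket{\phi^+_{k+1}}$: the $\sigma_2$ contribution vanishes because $\langle 01|\sigma_2^\Gamma|01\rangle = |\langle 01|00\rangle|^2 = 0$, while the $\sigma_1$ contribution is $\tfrac14\cdot(-1/(kd^2-d)) < 0$. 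Therefore the sandwich is NPT, hence entangled, and $\Lambda$ is not $(k+1)$-non-entangling. The main technical obstacle is that separability is not closed under subtraction, so the separability of $\I_{k^2}-R$ does not follow from the separability of $R$; what makes the argument go through is that at the precise value $\beta=(d-k)/k$ the Werner state is both separable after $k$-dimensional compression and has small enough Frobenius norm after compression for Gurvits' criterion to apply. The negativity witness at level $k+1$ then relies on choosing $\sigma_2$ as a product state whose partial transpose leaves a direction in its kernel where the $\sigma_1$ part can expose the negativity of $R'{}^\Gamma$.
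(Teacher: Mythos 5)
Your proof is correct, and at its core it is the same construction as the paper's: a measure-and-prepare map whose POVM element is the Werner state $\rho_d(\beta)$ with $\beta$ in the critical window (you fix the boundary value $\beta=(d-k)/k$; the paper allows any $\beta$ with $(d-k-1)/(k+1)<\beta\leq(d-k)/k$, and indeed $(d-k)/k>(d-k-1)/(k+1)$, so your choice lies in the failing range for $k+1$), combined with the same two ingredients, Lemma~\ref{lem:knonent} and Lemma~\ref{lem:kprojectseparable}. Where you genuinely diverge is the choice of flag states, and this changes both halves of the verification. The paper prepares the \emph{orthogonal} product states $\ketbra{00}{00}$ and $\ketbra{11}{11}$, which makes the sandwiched Choi operator block-diagonal in a local basis; separability of the sandwich is then \emph{equivalent} to separability of the compressed Werner block, so the failure at level $k+1$ follows from Lemma~\ref{lem:kprojectseparable} with no extra work, and the complement block is handled by noting that $\I_{d^2}-\rho_d(\beta)$ is separable by Gurvits and that local compressions preserve separability. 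Your flags $\tfrac14\I_4$ and $\ketbra{00}{00}$ overlap, so entanglement of $R'$ does not by itself imply entanglement of $\sigma_1\otimes R'+\sigma_2\otimes(\I-R')$ --- you correctly identified this and supplied the missing witness: the test vector $\ket{01}\otimes\ket{\phi^+_{k+1}}$, chosen so that $\sigma_2^{\Gamma}$ annihilates $\ket{01}$, isolates the negativity of $R'^{\Gamma}$, and your numerics check out ($\lVert\rho_d(\beta)\rVert_\infty=(k+1)/(d(dk-1))$ at $\beta=(d-k)/k$, negative eigenvalue $-1/(kd^2-d)$, total expectation $-1/(4(kd^2-d))<0$). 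One simplification worth noting: your spectral-norm/rank estimate $\lVert R\rVert_2\leq k\lVert R\rVert_\infty<1$ for applying Gurvits to $\I_{k^2}-R$ is valid but unnecessary, since $\I_{k^2}-R=(P\otimes Q)\bigl(\I_{d^2}-\rho_d(\beta)\bigr)(P\otimes Q)^\dagger$ and $\I_{d^2}-\rho_d(\beta)$ is separable globally (Gurvits applies because $\lVert\rho_d(\beta)\rVert_2\leq1$ for any state), so its compression is automatically separable. In summary: same skeleton as the paper, but your non-orthogonal flag choice forces an explicit NPT witness at level $k+1$ where the paper gets entanglement for free from block-orthogonality; the paper's variant is cleaner, while yours shows the construction is robust to the choice of output ensemble provided one flag's partial transpose has a usable kernel direction.
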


\begin{proof}
 Let $k$ and $d$ be integers with $2\leq k <d$, and let $\beta>0$ such that
 \begin{equation}\label{eq:betaknonent}
  \frac{d-(k+1)}{k+1}<\beta\leq \frac{d-k}{k}.
 \end{equation}
 Consider the map $\Lambda_\beta$ defined by
 \begin{equation}\label{eq:lambdabetaknonent}
  \Lambda_\beta (X) = \Tr(\rho_d(\beta)X)\ketbra{00}{00} + \Tr((\I_{d^2}-\rho_d(\beta))X)\ketbra{11}{11}.
 \end{equation}
 Note that this map is trivially non-entangling since every output is separable. The corresponding Choi operator of this map is given by
\begin{equation}
 J(\Lambda_\beta) = \ketbra{00}{00}\otimes\rho_{d}(\beta) + \ketbra{11}{11}\otimes (\I_{d^2}-\rho_d(\beta)).
\end{equation}
From Lemma \ref{lem:knonent}, we have that $\Lambda_\beta$ is $k$-non-entangling but not $(k+1)$-non-entangling iff
\begin{equation}\label{eq:IPQJIPQ}
 \bigl(\I_{4}\otimes (P\otimes Q)\bigr) J(\Lambda_\beta) \bigl(\I_{4}\otimes (P\otimes Q)^\dagger\bigr)
\end{equation}
is separable for all $k$-dimensional projections $P$ and $Q$ on $\CC^d$ but entangled for some $(k+1)$-dimensional projections. Note that the operator $\I_{d^2}-\rho_d(\beta)$ is always separable for any $\beta$ (by \cite{Gurvits2002}), and moreover, the operator in Eq.\ (\ref{eq:IPQJIPQ}) is separable iff so is $(P\otimes Q)\rho_{d}(\beta)(P\otimes Q)^\dagger$. Thus, the use of Lemma \ref{lem:kprojectseparable} together with the assumption in Eq.\ (\ref{eq:betaknonent}) completes the proof. 
\end{proof}

Having seen that $k$-non-entangling maps for $1<k<d$ is a strict superset of separable maps and a strict subset of non-entangling maps, this leads one to consider whether these classes of maps allow for bound entanglement. This question remains unanswered, as does the version as to whether all entangled states can be transformed to an LOCC-distillable state under these classes of maps. This is due to the interesting fact that the non-entangling maps constructed used for this task in Sec.\ \ref{sec:distillbeyondLOCC} are not necessarily $k$-non-entangling for every $k<d$.

\begin{proposition}
The dually non-entangling and PPT maps used in the proof of Theorem \ref{thm:distillnonentangling+ppt} that transform the states $\rho_d(\beta)$ with $(d-3)/3<\beta\leq(d-2)/2$ into an LOCC-distillable state are not 3-non-entangling.
\end{proposition}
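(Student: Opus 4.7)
For $\rho=\rho_d(\beta)$ with $\beta>0$, the partial transpose
$\rho_d(\beta)^\Gamma = (\I_{d^2}-(\beta+1)\phi^+_d)/(d^2-\beta-1)$
has a unique negative-eigenvalue direction, namely $\ket{\phi^+_d}$.
Hence the witness used in the proof of Theorem~\ref{thm:distillnonentangling+ppt} is
$W=\ketbra{\phi^+_d}{\phi^+_d}^\Gamma = F_d/d$, giving $A=(F_d/d+2\I_{d^2})/3$ and
\[
 \Lambda(X) = \Tr(AX)\tfrac{1}{4}\I_4 + \Tr\bigl((\I-A)X\bigr)\phi^+_2.
\]
Note that the map itself does not depend on $\beta$; the role of the hypothesis $(d-3)/3<\beta$ is to guarantee, via Lemma~\ref{lem:kprojectseparable}, the existence of a $3$-dimensional projection $P$ for which $(P\otimes P)\rho_d(\beta)(P\otimes P)^\dagger$ is entangled, which motivates the choice of projections below.

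The plan is to invoke the natural extension of Lemma~\ref{lem:knonent} to maps with non-square input and output: $\Lambda$ is $3$-non-entangling iff
$M(P,Q):=(\I_4\otimes P\otimes Q)\,J(\Lambda)\,(\I_4\otimes P\otimes Q)^\dagger$
is separable across $\msf{A}_2\msf{A}_1':\msf{B}_2\msf{B}_1'$ for every pair of $3$-dimensional projections $P,Q\colon\CC^d\to\CC^3$. So it suffices to exhibit one pair $(P,Q)$ for which $M(P,Q)$ is entangled. Taking $P=Q=[\I_3|0]$ (the coordinate projection onto the first three basis vectors) and computing directly yields
\[
 M := M(P,P) = \tfrac{1}{12}\I_4\otimes(F_3/d+2\I_9) + \tfrac{1}{3}\phi^+_2\otimes(\I_9-F_3/d),
\]
and decomposing $F_3=\mathrm{Sym}-\mathrm{Antisym}$ and $\I_9=\mathrm{Sym}+\mathrm{Antisym}$ on $\CC^3\otimes\CC^3$ gives the block decomposition $M = \Omega_S\otimes\mathrm{Sym}+\Omega_A\otimes\mathrm{Antisym}$ with
$\Omega_S=\tfrac{2d+1}{12d}\I_4+\tfrac{d-1}{3d}\phi^+_2$ and
$\Omega_A=\tfrac{2d-1}{12d}\I_4+\tfrac{d+1}{3d}\phi^+_2$.
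Both $\Omega_S$ and $\Omega_A$ are unit-trace isotropic two-qubit operators; the $\phi^+_2$-weight in $\Omega_A$ equals $(d+1)/(3d)>1/3$, placing $\Omega_A$ outside the two-qubit separable region, while the weight $(d-1)/(3d)<1/3$ in $\Omega_S$ places it inside.

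The remaining step --- inferring from this block structure that $M$ itself is entangled across $\msf{A}_2\msf{A}_1':\msf{B}_2\msf{B}_1'$ --- is the main obstacle. Because $\Lambda$ has PPT Choi by Theorem~\ref{thm:distillnonentangling+ppt}, the composition $\Lambda\otimes\id$ preserves PPT and $M$ is automatically PPT, so partial transposition cannot serve as a detector. One must instead construct a linear bound-entanglement witness exploiting the entangled antisymmetric block: a natural candidate combines the two-qubit witness $\tfrac{1}{2}\I_4-\phi^+_2$ (which satisfies $\Tr\bigl((\tfrac{1}{2}\I_4-\phi^+_2)\Omega_A\bigr)=-1/(4d)<0$) with a suitable operator on $\msf{A}_1'\msf{B}_1'$ that distinguishes the antisymmetric from the symmetric block, and the task is then to verify that the resulting operator is nonnegative on every state separable across $\msf{A}_2\msf{A}_1':\msf{B}_2\msf{B}_1'$ while yielding a strictly negative expectation value on $M$.
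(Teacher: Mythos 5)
Your proposal stalls at precisely the step you flag, and that step is not a technicality but the entire content of the proposition. The block decomposition $M=\Omega_S\otimes\mathrm{Sym}+\Omega_A\otimes\mathrm{Antisym}$ proves nothing about separability across $\msf{A}_2\msf{A}_1'\!:\!\msf{B}_2\msf{B}_1'$: the antisymmetric projector is itself an entangled operator on the $\CC^3\otimes\CC^3$ factor, a separable decomposition of $M$ is free to mix the two blocks, and since $M$ is PPT any detecting witness would have to be non-decomposable. In fact it is not even established that your $M$ is entangled at all --- the truncation weakens the flip term by the factor $1/d$ (note $(P\otimes P)F_d(P\otimes P)^\dagger=F_3$ carries the coefficient $1/d$, not $1/3$), and the two-qubit weight on $\phi^+_2$ tends to the separability boundary as $d$ grows. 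The root cause is your choice of witness: taking $\eta=\phi^+_d$ (Schmidt rank $d$) means the projection truncates the witness term $\ketbra{\overline\eta}{\overline\eta}^\Gamma$ along with everything else, so there is no useful relation between $J(\Lambda)$ and the projected operator. Your own observation that the map ``does not depend on $\beta$'' is a symptom of this: the hypothesis $(d-3)/3<\beta$ is not mere motivation for choosing $3$-dimensional projections, it is exactly the condition under which a Schmidt-rank-$3$ witness detects $\rho_d(\beta)$.

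The paper avoids constructing any explicit witness. It takes $\eta$ of Schmidt rank $3$ --- the negative partial-transpose eigenvector of the \emph{projected} Werner state supplied by Lemma \ref{lem:kprojectseparable}, concretely $\phi^+_3$ embedded in a $3\times3$ subspace, for which $\Tr(\rho_d(\beta)\,\eta^\Gamma)<0$ holds precisely when $\beta>(d-3)/3$ --- and chooses $P,Q$ to project onto the Schmidt bases of $\overline\eta$. Then the witness term survives the projection intact, giving
$J(\Lambda)=J'(\Lambda)+\bigl(\tfrac{2}{3}\tfrac{\I_4}{4}+\tfrac{1}{3}\phi^+_2\bigr)\otimes R$
with $R\geq0$ a product-diagonal (hence separable) complement of the projected identity, so the added term is separable. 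Entanglement of $J(\Lambda)$ is then obtained \emph{indirectly}: for $\beta\leq(d-2)/2$ the states $\rho_d(\beta)$ are $1$-undistillable, so the map $\Lambda$ --- which sends $\rho_d(\beta)$ to an entangled, hence distillable, two-qubit state --- cannot be separable, i.e.\ $J(\Lambda)$ is (PPT bound) entangled; consequently $J'(\Lambda)$ must be entangled as well, since otherwise the displayed decomposition would render $J(\Lambda)$ separable. Both endpoints of the $\beta$-interval are thus used, one to make the rank-$3$ witness detect the state and one to supply $1$-undistillability. To repair your argument you would need to switch to the rank-$3$ witness and adopt this indirect route; completing it as written, by exhibiting a bound-entanglement witness for your specific $M$, is an open-ended task that your proposal only names.
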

\begin{proof}
Lemma \ref{lem:kprojectseparable} tells us that for the values of $\beta$ mentioned above, there exist projections on a $3$-dimensional subspace $P$ and $Q$ such that $(P\otimes Q)\rho_{d}(\beta)(P\otimes Q)^\dagger$ is entangled (and NPT). This immediately implies that there exists a pure state $|\eta\rangle$ with Schmidt rank equal to 3 such that $\Tr(\rho_d(\beta)W)<0$, where $W=|\eta\rangle\langle\eta|^\Gamma$ is an entanglement witness. Thus, as argued in Theorem \ref{thm:distillnonentangling+ppt}, the map $\Lambda$ of Eq.\ (\ref{eq:LambdaWnonentangling}) with $A=(W+2\I)/3$ has the property that $\Lambda(\rho_d(\beta))$ is an entangled 2-qubit state. We are now going to show that this map is not 3-non-entangling. Since for $(d-3)/3<\beta\leq(d-2)/2$ the states $\rho_d(\beta)$ are 1-undistillable \cite{Dur2000a}, the above map cannot be separable and, therefore, its Choi operator
\begin{equation}
J(\Lambda)=\left(\frac{2}{3}\frac{\I_4}{4}+\frac{1}{3}\phi^+_2\right)\otimes\I_{d^2}+\frac{1}{3}\left(\frac{\I_4}{4}-\phi^+_2\right)\otimes|\overline\eta\rangle\langle\overline\eta|^\Gamma,
\end{equation}
even though PPT, must be entangled. Now, by Lemma \ref{lem:knonent} the map $\Lambda$ can be seen not to be 3-non-entangling if $J'(\Lambda)=\bigl(\I_{4}\otimes (P\otimes Q)\bigr) J(\Lambda) \bigl(\I_{4}\otimes (P\otimes Q)^\dagger\bigr)$ is still entangled for some choice of projections on a $3$-dimensional subspace $P$ and $Q$. Choosing $P$ and $Q$ to project on the Schmidt bases of $|\overline\eta\rangle$ we obtain that
\begin{equation}
J'(\Lambda)=\left(\frac{2}{3}\frac{\I_4}{4}+\frac{1}{3}\phi^+_2\right)\otimes\I_{9}+\frac{1}{3}\left(\frac{\I_4}{4}-\phi^+_2\right)\otimes|\overline\eta\rangle\langle\overline\eta|^\Gamma,
\end{equation}
which amounts to
\begin{equation}
J(\Lambda)=J'(\Lambda)+\left(\frac{2}{3}\frac{\I_4}{4}+\frac{1}{3}\phi^+_2\right)\otimes\I_{(d-3)^2}.
\end{equation}
Now, since the operator corresponding to the second term in the right-hand-side of the above equation is separable, if $J'(\Lambda)$ were separable too, this would imply that $J(\Lambda)$ is separable, which is a contradiction. Thus, $J'(\Lambda)$ must be entangled.
\end{proof}

Although rather mildly, this suggests that this class of maps could lead to obstructions to distillation allowing for the existence of (NPT) bound entanglement. Thus, we believe that further investigation of the properties of this class deserves future attention.


\section{Proof of Theorem \ref{thm:EalphaisERalphainverse}}
\label{sec:Salphaproof}

This section presents the proof of Theorem \ref{thm:EalphaisERalphainverse}, which shows the equivalence of the $\alpha$-entropy of entanglement and the $(1/\alpha)$-relative entropy of entanglement for $\alpha\in[0,2]$ on pure states. We first state the background necessary for proving this result.

Recall that the R\'enyi $\alpha$-relative entropies $S_{\alpha}$ are monotonic under CPTP maps for $\alpha\in[0,2]$, and the corresponding $\alpha$-relative entropies of entanglement are defined by
\[
 E_{R,\alpha}(\rho) = \inf_{\sigma\in\calS} S_{\alpha}(\rho\Vert\sigma).
\]
When $\rho=\psi$ is a pure state, the $\alpha$-relative entropy reduces (for $\alpha\neq1$) to
\[
 S_{\alpha}(\psi\Vert\sigma) = \frac{1}{\alpha-1}\log\bra{\psi}\sigma^{1-\alpha}\ket{\psi}.
\]
While $S_{\alpha}$ is jointly convex for all $\alpha\in[0,1]$, it is also known to be convex in the second argument for $\alpha\in[0,2]$ (see \cite{Mosonyi2011}). This allows us to state a useful minimization criterion for the $\alpha$-relative entropies of entanglement \cite{Girard2014}. Indeed, for any state $\rho$ and any separable state $\sigma$, it holds that
\[
  E_{R,\alpha}(\rho) = S_{\alpha}(\rho\Vert\sigma)
\]
if and only if it holds that
\begin{equation}\label{eq:salphaderivative}
 \left.\frac{d}{dt} S_{\alpha}(\rho\Vert(1-t)\sigma + t \sigma')\right|_{t=0} \geq 0
\end{equation}
for all other separable states $\sigma'\in\calS$. Furthermore, we can limit our consideration only to pure separable states $\sigma'=\ketbra{\phi}{\phi}$ where $\phi$ is separable.

To show that $E_{R,\alpha}(\psi) = E_{1/\alpha}(\psi)$ holds for all pure states $\psi$ and all $\alpha\in[0,2]$, we must first find a separable state $\sigma$ such that $S_{\alpha}(\psi\Vert\sigma) = E_{1/\alpha}(\psi)$. We then show that \eqref{eq:salphaderivative} holds for all other separable pure states $\sigma'$. The necessary background for computing the derivatives in \eqref{eq:salphaderivative} is given in Section \ref{sec:frechet}. Finally, the proof of Theorem \ref{thm:EalphaisERalphainverse} is given in Section \ref{sec:Ealphaproof}.

\subsection{Directional derivatives of matrix trace functionals}
\label{sec:frechet}

Any real-valued function $f:(0,\infty)\rightarrow\RR$ can be extended to the set of positive definite Hermitian matrices by means of the spectral theorem. In particular, for any $n\times n$ Hermitian matrix $A$ with spectral decomposition $A = \sum_{i=1}^n \alpha_i \ketbra{u_i}{u_i}$, one defines
\[
 f(A)= \sum_{i=1}^n f(\alpha_i) \ketbra{u_i}{u_i}.
\]
Moreover, for any $n\times n$ positive semidefinite matrix $P$, one may define a function $f_P$ on positive definite $n\times n$ matrices as
\[
 f_P(A) = \Tr(Pf(A))
\]
for all positive definite $n\times n$ matrices $A$. Such functions may be extended to certain non-positive-definite matrices in the following way. If $A$ is a $n\times n$ positive semidefinite matrix, one defines $f_P(A)$ as
\begin{equation}\label{eq:fPA}
 f_P(A) = \sum_{\substack{i=1\\ \lambda_i\neq 0}}^n f(\alpha_i) \bra{u_i}P\ket{u_i},
\end{equation}
whenever $\supp(P)\subseteq\supp(A)$, where the sum is taken only over the indices corresponding to nonzero eigenvalues of $A = \sum_{i=1}^n \lambda_i \ketbra{u_i}{u_i}$. By extending the function $f_P$ to be defined on certain positive semidefinite matrices in this way, it is possible to compute the directional derivatives
\[
 \frac{d}{dt} f_P(A+tB)\Bigl|_{t=0^+}.
\]
A derivation and proof of these directional derivatives is provided in Ref.\ \cite{Girard2018}. We provide the details here without proof for clarity.

For any differentiable function $f:(0,\infty)\rightarrow\RR$, its (\emph{first-order}) \emph{divided differences} are defined as
\[
 f^{[1]}(x,y) = \left\{\begin{array}{ll}
                        f'(x) & \text{if }x=y\\
                        \displaystyle\frac{f(x)-f(y)}{x-y}& \text{if }x\neq y
                       \end{array}
\right.
\]
for all $x,y\in(0,\infty)$. Let $f:(0,\infty)\rightarrow\RR$ be a differentiable function. For every  positive semidefinite $n\times n$ matrix $A$, we will define a linear mapping $\Phi_{f,A}$ on the space of $n\times n$ Hermitian matrices as follows. If $A=\diag(\lambda_1,\dots,\lambda_n)$ is diagonal, we define an matrix $n\times n$ Hermitian matrix $D_{f,A}$ of the corresponding matrix of divided differences (restricted to the strictly positive eigenvalues) as
\begin{equation}\label{eq:DfAij}
 \bigl(D_{f,A}\bigr)_{i,j} = \left\{\begin{array}{ll}
                        f^{[1]}(\alpha_i,\alpha_j) & \text{if }\alpha_i,\alpha_j>0\\
                        0& \text{if }\alpha_i=0 \text{ or }\alpha_j=0,
                       \end{array}\right.
\end{equation}
and define the linear mapping $\Phi_{f,A}$ as
\[
\Phi_{f,A}(B) = D_{f,A}\odot B
\]
for all $n\times n$ Hermitian matrices $B$, where $A\odot B$ denotes the entrywise product of the two matrices with matrix entries $(A\odot B)_{i,j} = A_{i,j}B_{i,j}$ for any $n\times n$ Hermitian matrices $A$ and $B$. Thus $\Phi_{f,A}(B)$ has matrix elements
\[
 \bigl(\Phi_{f,A}(B)\bigr)_{i,j} = \left\{\begin{array}{ll}
  f^{[1]}(\alpha_i,\alpha_j) B_{i,j} & \text{if }\alpha_i,\alpha_j>0\\
  0 & \text{if }\alpha_i=0 \text{ or }\alpha_j=0.
  \end{array}
\right.
\]
If $A$ is not diagonal, there is an $n\times n$ diagonalizing unitary matrix $U$ such that the matrix $UAU^\dagger$ is diagonal, and one defines the linear mapping $\Phi_{f,A}$ as
\begin{equation}\label{eq:PhifAB}
 \Phi_{f,A}(B) = U^\dagger\bigl( D_{f,UAU^\dagger}\odot (UBU^\dagger)\bigr) U = U^\dagger\bigl(\Phi_{f,UAU^\dagger}(UBU^\dagger)\bigr) U
\end{equation}
for all $n\times n$ Hermitian matrices $B$, and this mapping is independent of the choice of diagonalizing unitary $U$.

The main result of Ref.\ \cite[Thm.\ 1]{Girard2018} states that directional derivatives of functions of the form $A\mapsto \Tr(Pf(A))$ can be computed at certain non-positive-definite matrices as follows. Suppose $f:(0,\infty)\rightarrow\RR$ is differentiable and satisfies
\begin{equation}\label{eq:limtft}
 \lim_{t\rightarrow0^+}tf(t)=0.
\end{equation}
Let $P$ be a positive semidefinite $n\times n$ matrix, and let $A$ be another positive semidefinite $n\times n$ matrix satisfying $\supp(P)\subseteq\supp(A)$ such that $f_P(A)$ may be defined as in \eqref{eq:fPA}. Let $B$ be any $n\times n$ Hermitian matrix and suppose there exists a positive value $\varepsilon>0$ such that $A+tB$ is positive semidefinite for all $t\in[0,\varepsilon)$. Then one may compute the directional derivative of $f_P$ at $A$ in the direction $B$ as
 \begin{equation}\label{eq:ddtfpAtB}
  \frac{d}{dt} f_P(A+tB) \Bigr|_{t=0^+} = \Tr(P\,\Phi_{f,A}(B)).
 \end{equation}
In the case when $A$ is positive definite, the mapping $\Phi_{f,A}$ is precisely the Fr\'echet derivative of the function $f$ at $A$ (see, e.g., \cite[Sec.\ X.4]{Bhatia1997} and \cite[Thm.\ 3.25]{Hiai2014})

Note that the function $f(x)=\log(x)$ satisfies the condition in \eqref{eq:limtft}, as well as the functions $f_\alpha(x)=x^{1-\alpha}$ for all $\alpha<2$. However, although the condition is \emph{not} satisfied by the function $f(x)=x^{-1}$, in \cite[Sec.\ 5]{Girard2018}, it is shown that the expression in \eqref{eq:ddtfpAtB} is still a lower bound for the directional derivative. That is, if the matrices $P$, $A$, and $B$ satisfy all of the criteria as given above, then
 \begin{equation}\label{eq:ddtfpAtBgeq}
  \frac{d}{dt} \Tr\left(P(A+tB)^{-1}\right) \Bigr|_{t=0^+} \geq \Tr(P\,\Phi_{f,A}(B)),
 \end{equation}
and this inequality is strict in general if $A$ is not positive definite.

Before providing the proof of Theorem \ref{thm:EalphaisERalphainverse}, we make a few more remarks regarding directional derivatives the linear mappings $\Phi_{f,A}$ given above as to how they will be applied here. Let $f:(0,\infty)\rightarrow\RR$ be a differentiable function satisfying \eqref{eq:limtft}, and let $\rho,\sigma,\sigma'\in\calD(\calH)$ be states on a finite-dimensional Hilbert space $\calH$ satisfying $\supp(\rho)\subseteq\supp(\sigma)$. It holds that
\[
 \sigma + t(\sigma'-\sigma) = (1-t)\sigma + t \sigma'
\]
is positive semidefinite for all $t\in[0,1]$, hence the directional derivative of $f_{\rho}$ at $\sigma$ is computed as
\[
 \frac{d}{dt}f_{\rho}((1-t)\sigma + t \sigma')\Bigr|_{t=0^+} = \Tr(\rho\, \Phi_{f,\sigma}(\sigma'-\sigma)).
\]
Moreover, as the mapping $\Phi_{f,\sigma}$ (as defined in \eqref{eq:PhifAB}) is a linear, it can be seen that
\[
 \Phi_{f,\sigma}(\sigma'-\sigma) = \Phi_{f,\sigma}(\sigma')- \Phi_{f,\sigma}(\sigma) = \Phi_{f,\sigma}(\sigma') - \sigma f'(\sigma),
\]
where $f'(\sigma)$ is defined as in \eqref{eq:fPA}.

\subsection{Proof of Theorem }
\label{sec:Ealphaproof}

We now present the proof that $E_{R,\alpha}(\psi) = E_{1/\alpha}(\psi)$ for all pure states $\psi$ and all $\alpha\in[0,2]$. This result is already known when $\alpha=1$ \cite{Vedral1998}, so here we prove it only for $\alpha\neq1$. We will consider the case when $\alpha=0$ separately.

Here, we will make use of the functions $f_\alpha$ for $\alpha\in[0,1)\cup(1,2]$ that are defined by $f_{\alpha}(x)=x^{1-\alpha}$ and whose divided differences are given by
\begin{equation}
 f_{\alpha}^{[1]}(x,y) = \left\{ \begin{array}{ll}
                         \displaystyle\frac{x^{1-\alpha}-y^{1-\alpha}}{x-y} & x\neq y\\
                         (1-\alpha)x^{-\alpha} & x=y
                        \end{array}\right.
\end{equation}
for all $x,y\in(0,\infty)$. For these functions, it is straightforward to check that
\begin{equation*}
  f_{\alpha}^{[1]}\left(\frac{x}{c},\frac{y}{c}\right) = c^\alpha f_{\alpha}^{[1]}(x,y)
\end{equation*}
holds for any positive real constant $c>0$ and any $x,y>0$. We now prove Theorem \ref{thm:EalphaisERalphainverse}.

 As remarked in Section \ref{sec:frechet}, the function $f_{\alpha}$ satisfies condition \eqref{eq:limtft} for all $\alpha<2$. This allows us to compute the directional derivatives of the $\alpha$-relative entropies $S_\alpha(\rho\lVert\sigma)$. For $\alpha=2$, we may compute a lower bound to the derivative as in \eqref{eq:ddtfpAtBgeq}.

\begin{proof}[Proof \textup{(}of Theorem \ref{thm:EalphaisERalphainverse}\textup{)}]
 Let $\psi=\sum_{i}\sqrt{\lambda_i}\ket{ii}$ be a pure state with Schmidt coefficients $\lambda=(\lambda_1,\dots,\lambda_d)$.
 First suppose that $\alpha\neq 0$ and define the following separable density operator
\begin{equation}
 \sigma = \frac{1}{\lVert\lambda\rVert_{1/\alpha}^{1/\alpha}}\sum_{i}\lambda_i^{1/\alpha} \ketbra{ii}{ii},
\end{equation}
where $\lVert\lambda\rVert_{1/\alpha} = \bigl(\sum_{i}\lambda_i^{1/\alpha}\bigr)^{\alpha}$. Note that the $\alpha$-relative entropy of $\psi$ and $\sigma$ is
\begin{align*}
 S_{\alpha}(\psi\Vert\sigma) &= \frac{1}{\alpha-1}\log\langle \psi\vert\sigma^{1-\alpha}\vert\psi\rangle\\
                             &= \frac{1}{\alpha-1}\log\lVert\lambda\rVert_{1/\alpha}\\
                             &= \frac{1}{1-1/\alpha} \log \lVert\lambda\rVert_{1/\alpha}^{1/\alpha},
\end{align*}
from which it follows that $S_{\alpha}(\psi\Vert\sigma) = E_{1/\alpha}(\psi)$. The matrix $\sigma$ has eigenvalues $\lambda_i^{1/\alpha}/\lVert\lambda\rVert^{1/\alpha}_{1/\alpha}$,  and the corresponding matrix $D_{f_{\alpha},\sigma}$ of divided differences as defined in \eqref{eq:DfAij} has non-zero matrix elements given by
\begin{equation}
 \bra{ii} D_{f_{\alpha},\sigma}\ket{jj}  = f_{\alpha}^{[1]}\left(\frac{\lambda_i^{1/\alpha}}{\lVert\lambda\rVert^{1/\alpha}_{1/\alpha}},\frac{\lambda_j^{1/\alpha}}{\lVert\lambda\rVert^{1/\alpha}_{1/\alpha}}\right)= \lVert \lambda\rVert_{1/\alpha}f_{\alpha}^{[1]}\Bigl(\lambda_i^{1/\alpha},\lambda_j^{1/\alpha}\Bigr)
\end{equation}
such that the linear mapping $\Phi_{f_\alpha,\sigma}$ is defined as
\[
 \Phi_{f_\alpha,\sigma} (B) =  \lVert \lambda\rVert_{1/\alpha}\sum_{i,j}  f_{\alpha}^{[1]}\Bigl(\lambda_i^{1/\alpha},\lambda_j^{1/\alpha}\Bigr)\bra{ii}B\ket{jj}\ketbra{ii}{jj}
\]
for all $B$. We first suppose that $\alpha\in(0,1)\cup(1,2)$ (the case $\alpha=2$ must be considered separately). For any other separable state $\sigma'$, we may use \eqref{eq:ddtfpAtB} to compute the necessary derivative
\begin{align*}
 \frac{d}{dt} \bra{\psi}\bigl((1-t)\sigma+t\sigma'\bigr)^{1-\alpha}\ket{\psi}\Bigr|_{t=0^+} & =  \frac{d}{dt} \Tr\left(\psi\,f_{\alpha}\bigl((1-t)\sigma+t\sigma'\bigr)\right)\Bigr|_{t=0^+}\\
  & = \Tr\bigl(\psi \, \Phi_{f_{\alpha},\sigma}(\sigma'-\sigma)\bigr)\\
  & = \bra{\psi}\Phi_{f_{\alpha},\sigma}(\sigma')\ket{\psi} -(1-\alpha)\lVert\lambda\rVert_{1/\alpha},
\end{align*}
where the final equality follows from the fact that $\bra{\psi}\sigma^{1-\alpha}\ket{\psi}= \lVert\lambda\rVert_{1/\alpha}$, the linearity of the mapping $\Phi_{f_{\alpha},\sigma}$, and the fact that $\Phi_{f_{\alpha},\sigma}(\sigma) = (1-\alpha)\sigma^{1-\alpha}$. Thus
\begin{align*}
 \left.\frac{d}{dt} S_{\alpha}\bigl(\psi\Vert(1-t)\sigma + t\sigma'\bigr)\right|_{t=0}
   & = \frac{1}{\alpha-1}  \frac{d}{dt} \log\Bigl(\bra{\psi}\bigl((1-t)\sigma+t\sigma'\bigr)^{1-\alpha}\ket{\psi}\Bigr|_{t=0^+} \\
   & = \frac{1}{\alpha-1}\frac{1}{\bra{\psi}\sigma^{1-\alpha}\ket{\psi}}\left( \bra{\psi}\Phi_{f_{\alpha},\sigma}(\sigma')\ket{\psi}-(1-\alpha)\lVert\lambda\rVert_{1/\alpha}\right)\\
   & = 1-\frac{1}{1-\alpha}\frac{1}{\lVert\lambda\rVert_{1/\alpha}} \bra{\psi}\Phi_{f_{\alpha},\sigma}(\sigma')\ket{\psi}.
\end{align*}
 When $\sigma'=\ketbra{\phi}{\phi}$ is a separable pure state for some $\ket{\phi}=\sum_{i,j}u_iv_j\ket{ij}$ with $\sum_{i}\lvert u_i\rvert^2=\sum_{j}\lvert v_j\rvert^2=1$, it holds that
\begin{align*}
 \Phi_{f_{\alpha},\sigma}(\sigma') & = \lVert \lambda\rVert_{1/\alpha} \sum_{i,j} u_iv_i\overline{u_j}\overline{v_j} f_{\alpha}^{[1]}\bigl(\lambda_i^{1/\alpha},\lambda_j^{1/\alpha}\bigr) \ketbra{ii}{jj}.
\end{align*}
Note that $f^{[1]}_{\alpha}(x,y)/(1-\alpha)\geq0$ for any $x,y>0$. Hence
\begin{align*}
\left\lvert\frac{1}{1-\alpha}\frac{1}{\lVert\lambda\rVert_{1/\alpha}} \bra{\psi}(\Phi_{f_{\alpha},\sigma}(\sigma')\ket{\psi}\right\rvert
 & \leq \sum_{i,j} \lvert u_i\rvert \lvert v_i\rvert \lvert u_j\rvert\lvert v_j\rvert  \frac{\sqrt{\lambda_i\lambda_j}}{1-\alpha}f_{\alpha}^{[1]}\bigl(\lambda_i^{1/\alpha},\lambda_j^{1/\alpha}\bigr) \\
 & \leq \sum_{i,j} \lvert u_i\rvert \lvert v_i\rvert \lvert u_j\rvert\lvert v_j\rvert \\
 & \leq \sum_{i} \lvert u_i\rvert^2 \sum_{i} \lvert v_i\rvert^2 \\
 & = 1,
\end{align*}
where the second inequality is due to Lemma \ref{lem:technical} (see below) and the final inequality follows from the Cauchy-Schwarz inequality. This completes the proof, since we have found that
\begin{equation}\label{eq:dtsalpha}
 \left.\frac{d}{dt} S_{\alpha}\bigl(\psi\Vert(1-t)\sigma + t\sigma'\bigr)\right|_{t=0^+} \geq0
\end{equation}
for all pure separable states $\sigma'$.

To prove the claim in the case when $\alpha=2$, we note that only a lower bound the directional derivative can be provided for the function $f_2(x)= x^{-1}$. In this case, we have
\[
 \frac{d}{dt} \bra{\psi}\bigl((1-t)\sigma+t\sigma'\bigr)^{-1}\ket{\psi}\Bigr|_{t=0^+} \geq \bra{\psi}\Phi_{f_{2},\sigma}(\sigma')\ket{\psi} +\lVert\lambda\rVert_{1/2},
\]
and using the same arguments as above one finds
\begin{align*}
 \left.\frac{d}{dt} S_{2}\bigl(\psi\Vert(1-t)\sigma + t\sigma'\bigr)\right|_{t=0}
  \geq 1+\frac{1}{\lVert\lambda\rVert_{1/2}} \bra{\psi}\Phi_{f_{2},\sigma}(\sigma')\ket{\psi},
\end{align*}
which is non-negative (by Lemma \ref{lem:technical} and the same arguments as above).

Lastly, to prove the claim when $\alpha=0$, set $\sigma=\ketbra{11}{11}$. Note that $S_{0}(\psi\Vert\sigma) = -\log\lambda_1$, where $\lambda_1$ is the largest Schmidt coefficient of~$\psi$. For any other separable state $\sigma'\in\calS$, it holds that $\bra{\psi}\sigma'\ket{\psi}\leq\lambda_1$ (by Lemma \ref{lem:Trpsisigmalambda1}) and thus
\begin{align*}
 S_0(\psi\Vert\sigma')  &= -\log\bra{\psi}\sigma'\ket{\psi}
                                       \geq -\log\lambda_1 = S_0(\psi\Vert \sigma),
\end{align*}
from which it follows that $E_{R,0}(\psi) = -\log\lambda_1$. Since $E_{+\infty}(\psi) = -\log\lambda_1$, the result follows.
\end{proof}

The proof of Theorem \ref{thm:EalphaisERalphainverse} above in the case when $\alpha\neq0$ relies on the following lemma, which we prove below.

\begin{lemma}\label{lem:technical}
 Let $\alpha\in(0,1)\cup(1,2]$. For all $p,q\in(0,1]$, it holds that
 \begin{equation}
  0\leq \frac{\sqrt{pq}}{1-\alpha}f_{\alpha}^{[1]}(p^{1/\alpha},q^{1/\alpha}) \leq 1,
 \end{equation}
 where $f_\alpha:(0,\infty)\rightarrow\RR$ is the function defined by $f_{\alpha}(x)=x^{1-\alpha}$.
\end{lemma}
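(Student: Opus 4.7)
The plan is to convert the inequality into the elementary fact that $\sinh$ is convex on $[0,\infty)$, via a change of variables that makes the expression scale-invariant and hyperbolic in nature. First, substitute $u=p^{1/\alpha}$ and $v=q^{1/\alpha}$, both lying in $(0,1]$. Then $\sqrt{pq}=(uv)^{\alpha/2}$, and for $u\neq v$ the quantity to be bounded reduces to
$$\frac{(uv)^{\alpha/2}}{1-\alpha}\cdot\frac{u^{1-\alpha}-v^{1-\alpha}}{u-v}.$$
Non-negativity is then immediate: for $\alpha<1$ the function $x\mapsto x^{1-\alpha}$ is increasing and $1-\alpha>0$, whereas for $\alpha>1$ it is decreasing and $1-\alpha<0$, so the sign works out in both cases. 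The equal case $u=v$ (which includes $p=q$) evaluates to exactly $1$ by direct substitution into the defining formula $(1-\alpha)x^{-\alpha}$, so for the upper bound I may assume $u\neq v$.

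Since the displayed expression is homogeneous of degree zero in $(u,v)$, setting $t=v/u\in(0,1)$ (after WLOG taking $u>v$) reduces matters to the single-variable inequality
$$g(t)\;:=\;\frac{t^{\alpha/2}(1-t^{1-\alpha})}{(1-\alpha)(1-t)}\;\leq\;1.$$
The key step is then the substitution $t=e^{-2s}$ with $s>0$. Pulling a common factor of $e^{-s}$ out of both numerator and denominator and recognising the remaining differences of exponentials as hyperbolic sines gives the identity
$$g(e^{-2s})\;=\;\frac{\sinh((1-\alpha)s)}{(1-\alpha)\sinh s}.$$
Writing $\beta=1-\alpha$, the hypothesis $\alpha\in(0,1)\cup(1,2]$ becomes $|\beta|\in(0,1]$, and the inequality $g(t)\leq 1$ is equivalent to $|\sinh(\beta s)|\leq|\beta|\sinh s$ for all $s>0$.

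This last inequality is immediate from the convexity of $\sinh$ on $[0,\infty)$ (since $\sinh''=\sinh\geq 0$ there) combined with $\sinh 0=0$: for any $\lambda\in[0,1]$, $\sinh(\lambda s)=\sinh((1-\lambda)\cdot 0+\lambda\cdot s)\leq\lambda\sinh s$. Applying this with $\lambda=|\beta|$ and handling the sign in the case $\beta<0$ by evenness finishes the proof. There is no genuine obstacle; the work lies entirely in choosing the right pair of substitutions, and once the $\sinh$ form is exposed the bound is standard and sharp precisely at $|\beta|=1$, i.e.\ $\alpha=2$, which matches the observation that the quantity equals $1$ whenever $p=q$.
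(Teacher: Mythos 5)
Your proof is correct, and it takes a genuinely different route from the paper's. Where you reduce by degree-zero homogeneity to one variable, substitute $t=e^{-2s}$, and recast the bound as $\sinh(\beta s)\leq \lvert\beta\rvert\sinh s$ for $\lvert\beta\rvert=\lvert 1-\alpha\rvert\in(0,1]$ (convexity of $\sinh$ on $[0,\infty)$ plus $\sinh 0=0$), the paper instead proves the intermediate inequality $\frac{1}{r}\frac{x^r-y^r}{x-y}\leq(\sqrt{xy})^{\,r-1}$ for $r\in(-1,0)\cup(0,1)$ via the Cauchy-type integral representations $rx^{r-1}=\frac{\pi}{\sin(r\pi)}\int_0^\infty \frac{t^r}{(x+t)^2}\,dt$ and $\frac{x^r-y^r}{x-y}=\frac{\pi}{\sin(r\pi)}\int_0^\infty\frac{t^r}{(x+t)(y+t)}\,dt$, combined with the AM--GM bound $(x+t)(y+t)\geq(\sqrt{xy}+t)^2$. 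The two core inequalities are equivalent (set $x=e^{2s}$, $y=e^{-2s}$ in the paper's lemma and it becomes exactly your $\sinh$ bound), but your argument is more elementary and more uniform: because the integral representation fails at $r=-1$, the paper must treat $\alpha=2$ separately by a direct computation, whereas your convexity argument covers $\lvert\beta\rvert\in(0,1]$, including the endpoint $\alpha=2$, in one stroke; the paper's lemma, on the other hand, is stated in a scale-free form of independent interest. One cosmetic slip: your closing remark conflates two distinct equality cases --- the quantity equals $1$ on the diagonal $p=q$ for \emph{every} $\alpha$, while at $\alpha=2$ it is identically $1$ for all $p\neq q$; this does not affect the validity of the argument.
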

The proof is trivial in the case when $\alpha=2$. Indeed, $f_{2}(x)=x^{-1}$ and thus for $p\neq q$ it holds that
\[
 -\sqrt{pq}f_{2}^{[1]}(\sqrt{p},\sqrt{q}) = -\sqrt{pq}\frac{\frac{1}{\sqrt{p}}-\frac{1}{\sqrt{q}}}{\sqrt{p}-\sqrt{q}} =1.
\]
The proof in the case when $\alpha\neq2$ is a bit more technical and requires the following pair of results.
\begin{lemma}\label{lem:xpyp}
 For all $x,y>0$ with $x\neq y$ and $r\in(-1,0)\cup(0,1)$, it holds that
 \begin{equation}
  \frac{1}{r}\frac{x^r-y^r}{x-y}\leq  (\sqrt{xy})^{r-1}.
 \end{equation}
\end{lemma}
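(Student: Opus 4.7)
The plan is to reduce the claimed inequality to a classical one-variable inequality for the hyperbolic sine via a homogeneity argument. First I would observe that under the joint scaling $(x,y) \mapsto (\lambda x, \lambda y)$ with $\lambda > 0$, both sides scale as $\lambda^{r-1}$: the numerator $x^r - y^r$ picks up a factor of $\lambda^r$ and the denominator $x - y$ picks up a factor of $\lambda$, while $(\sqrt{xy})^{r-1}$ manifestly picks up $\lambda^{r-1}$. Thus I may assume without loss of generality that $xy = 1$, in which case the right-hand side is simply $1$.

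Next, with $xy = 1$ and $x \neq y$, I would parametrize $x = e^t$, $y = e^{-t}$ for some $t \neq 0$. A direct computation gives
$$\frac{1}{r}\frac{x^r - y^r}{x - y} = \frac{e^{rt} - e^{-rt}}{r(e^t - e^{-t})} = \frac{\sinh(rt)}{r\sinh(t)},$$
so the inequality reduces to showing $\sinh(rt)/(r\sinh(t)) \leq 1$ for $|r| \in (0,1)$ and $t \neq 0$. The core fact I would invoke is the one-variable inequality $\sinh(au) \leq a \sinh(u)$ for $a \in [0,1]$ and $u \geq 0$, which follows immediately from the convexity of $\sinh$ on $[0, \infty)$ combined with $\sinh(0) = 0$: by Jensen's inequality,
$$\sinh(au) = \sinh\bigl(a \cdot u + (1-a) \cdot 0\bigr) \leq a \sinh(u) + (1-a)\sinh(0) = a\sinh(u).$$

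For $r \in (0, 1)$, I would apply this directly with $a = r$ and $u = |t|$, using the oddness of $\sinh$ to handle $t < 0$. For $r \in (-1, 0)$, I would write $r = -s$ with $s \in (0, 1)$ and use oddness to obtain $\sinh(rt)/(r\sinh(t)) = \sinh(st)/(s\sinh(t))$, which falls under the previous case. The nonnegativity asserted in Lemma \ref{lem:technical} is immediate, since the divided difference $(x^r - y^r)/(x - y)$ has the same sign as $r$ (as $t \mapsto t^r$ is increasing iff $r > 0$), so dividing by $r$ restores positivity. I do not anticipate any real obstacle here; the only subtlety is tracking signs in the $r < 0$ case, and the oddness of $\sinh$ handles this cleanly once the $e^{\pm t}$ parametrization has been chosen.
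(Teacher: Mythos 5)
Your proof is correct, and it takes a genuinely different route from the paper. The paper proves the lemma via the integral representations $rx^{r-1}=\frac{\pi}{\sin(r\pi)}\int_0^\infty \frac{t^r}{(x+t)^2}\,dt$ and $\frac{x^r-y^r}{x-y}=\frac{\pi}{\sin(r\pi)}\int_0^\infty \frac{t^r}{(x+t)(y+t)}\,dt$, then bounds the second integrand by the first evaluated at $\sqrt{xy}$ using the AM--GM estimate $(x+t)(y+t)\geq(\sqrt{xy}+t)^2$, with the sign condition $r\sin(r\pi)>0$ controlling the prefactor. You instead exploit the common homogeneity of degree $r-1$ to normalize $xy=1$, substitute $x=e^t$, $y=e^{-t}$ to rewrite the left side as $\sinh(rt)/(r\sinh(t))$, and reduce everything to the convexity bound $\sinh(au)\leq a\sinh(u)$ for $a\in[0,1]$, $u\geq 0$; your sign bookkeeping via oddness of $\sinh$ for $t<0$ and for $r\in(-1,0)$ is accurate, and the scaling step checks out on both sides. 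Your argument is more elementary and self-contained: it needs only Jensen's inequality for $\sinh$, whereas the paper's route rests on the (unproven in the paper) integral formulas, which are the standard L\"owner-type representations valid precisely for $r\in(-1,1)$. What the paper's approach buys in exchange is robustness and generalizability --- the integral-representation machinery is the natural tool behind operator-monotonicity arguments and extends to matrix divided differences, which is the context (Lemma \ref{lem:technical} feeding into $\Phi_{f_\alpha,\sigma}$) in which this scalar lemma is ultimately deployed. Note that both proofs use $\lvert r\rvert<1$ in an essential way: the paper through the convergence and sign of the integral representation, you through the requirement $a=r\in[0,1]$ in the convexity step; this is as it should be, since the inequality reverses for $\lvert r\rvert>1$.
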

\begin{proof}
For $x,y>0$ and $r\in(-1,0)\cup(0,1)$, we have the following integral representations:
 \begin{align}
   rx^{r-1} & =\frac{\pi}{\sin(r\pi)}\int_{0}^\infty \frac{t^r}{(x+t)^2}dt\label{eq:pxp}\\
   \frac{x^r-y^r}{x-y} &= \frac{\pi}{\sin(r\pi)}\int_{0}^\infty \frac{t^r}{(x+t)(y+t)}dt.   \label{eq:xpyp}
 \end{align}
Let $x,y>0$ with $x\neq y$. Note that $x+y\geq 2\sqrt{xy}$ and thus
\begin{align*}
 (x+t)(y+t) &= xy + t(x+y)+ t^2 \\&\geq xy + 2\sqrt{xy}t+ t^2 = (\sqrt{xy}+t)^2
\end{align*}
holds for all $t\geq 0$. Let $r\in(-1,0)\cup(0,1)$ and note that $r\sin(r\pi)\geq 0$. Thus
\begin{align*}
\frac{1}{r}\frac{x^r-y^r}{x-y}
     & = \frac{1}{r}\frac{\pi}{\sin(r\pi)}\int_{0}^\infty \frac{t^r}{(x+t)(y+t)}dt\\
     & \leq \frac{1}{r}\frac{\pi}{\sin(r\pi)}\int_{0}^\infty \frac{t^r}{(\sqrt{xy}+t)^2}dt    \\
     & = (\sqrt{xy})^{r-1},
\end{align*}
where we use the representations in \eqref{eq:pxp} and \eqref{eq:xpyp}.
\end{proof}

\begin{lemma}\label{lem:pneqq}
 For all $p,q\in(0,1]$ with $p\neq q$, and all $\alpha\in(0,1)\cup(1,2)$, it holds that
 \begin{equation}
  \frac{\sqrt{pq}}{1-\alpha}\frac{p^{(1-\alpha)/\alpha}-q^{(1-\alpha)/\alpha}}{p^{1/\alpha}-q^{1/\alpha}} \leq 1.
 \end{equation}
\end{lemma}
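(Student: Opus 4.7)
The plan is to reduce this directly to Lemma \ref{lem:xpyp} by a change of variables. Introduce $x = p^{1/\alpha}$ and $y = q^{1/\alpha}$, so that $x, y \in (0, 1]$ with $x \neq y$, and observe that $\sqrt{pq} = (xy)^{\alpha/2}$ while $p^{(1-\alpha)/\alpha} = x^{1-\alpha}$ and similarly for $q$. The inequality to prove thus becomes
\[
 \frac{(xy)^{\alpha/2}}{1-\alpha}\cdot\frac{x^{1-\alpha} - y^{1-\alpha}}{x - y} \leq 1.
\]

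Next I would set $r = 1 - \alpha$. For $\alpha \in (0,1) \cup (1,2)$, this gives $r \in (-1,0) \cup (0,1)$, which is exactly the range in which Lemma \ref{lem:xpyp} applies. That lemma then yields
\[
 \frac{1}{1-\alpha}\cdot\frac{x^{1-\alpha} - y^{1-\alpha}}{x-y} \;=\; \frac{1}{r}\cdot\frac{x^r - y^r}{x - y} \;\leq\; (\sqrt{xy})^{r-1} \;=\; (xy)^{-\alpha/2}.
\]
Multiplying both sides by the positive quantity $(xy)^{\alpha/2}$ gives the desired bound of $1$.

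I do not anticipate a real obstacle here: the sign considerations cancel neatly because Lemma \ref{lem:xpyp} is stated with the factor $1/r$ absorbed, matching the $1/(1-\alpha)$ prefactor on our side. The only thing worth double-checking is that the statement of Lemma \ref{lem:xpyp} is uniformly valid across both subintervals $r \in (-1,0)$ and $r \in (0,1)$ (corresponding to $\alpha > 1$ and $\alpha < 1$ respectively), since the difference quotient $(x^r - y^r)/(x-y)$ changes sign with $r$ but so does $1/r$, keeping $\tfrac{1}{r}(x^r-y^r)/(x-y)$ nonnegative; this is precisely the content of the integral representation used in the proof of Lemma \ref{lem:xpyp}. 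Hence no separate case analysis on $\alpha$ is needed.
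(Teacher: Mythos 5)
Your proposal is correct and follows essentially the same route as the paper: both set $r=1-\alpha$, $x=p^{1/\alpha}$, $y=q^{1/\alpha}$ and apply Lemma \ref{lem:xpyp}, with the prefactor $\sqrt{pq}=(xy)^{\alpha/2}$ cancelling $(\sqrt{xy})^{r-1}=(xy)^{-\alpha/2}$ to give the bound $1$. Your remark that no case split on the sign of $1-\alpha$ is needed, since Lemma \ref{lem:xpyp} already covers $r\in(-1,0)\cup(0,1)$ uniformly, matches the paper's (implicit) treatment exactly.
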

\begin{proof}
 Setting $r=1-\alpha$, $x=p^{1/\alpha}$, and $y=q^{1/\alpha}$, an application of of Lemma \ref{lem:xpyp} yields
 \begin{align*}
  \sqrt{pq}\frac{1}{1-\alpha}\frac{p^{(1-\alpha)/\alpha}-q^{(1-\alpha)/\alpha}}{p^{1/\alpha}-q^{1/\alpha}}
     & \leq \sqrt{pq}\Bigl(\sqrt{p^{1/\alpha}q^{1/\alpha}}\Bigr)^{-\alpha}\\
     & = 1 ,
 \end{align*}
as desired.
\end{proof}

A proof of Lemma \ref{lem:technical} in the case $\alpha\neq2$ now follows.
\begin{proof}[Proof \textup{(}of Lemma \ref{lem:technical}\textup{)}] As the statement has been shown to be true for $\alpha=2$, we may assume that $\alpha\neq2$.
 If $p\neq q$, we can apply Lemma \ref{lem:pneqq}. Otherwise we have
 \[
  f_{\alpha}^{[1]}(p^{1/\alpha},p^{1/\alpha}) = (1-\alpha)(p^{1/\alpha})^{-\alpha} = \frac{1-\alpha}{p}
 \]
and thus $\frac{p}{1-\alpha}f_{\alpha}^{[1]}(p^{1/\alpha},p^{1/\alpha}) = 1$, which completes the proof.
\end{proof}


\section{Discussion}
\label{sec:discussion}

Although LOCC maps provide the most physically meaningful choice of free operations for the resource theory of entanglement, from a resource-theoretic point of view, other choices of operations still provide consistent and well-defined theories.  Furthermore, insight into LOCC is gained by studying more general resource theories since impossibility results for the latter imply impossibility results for the former.  We have examined many different classes of non-entangling channels that are still larger than the class of LOCC (and separable) channels.  While many entanglement measures are still monotonic under such non-entangling channels, we have presented examples of entanglement measures whose monotonicity fails to hold.  This allowed us to find transformations among entangled states that are possible under non-entangling maps but not possible under LOCC (or separable maps).  PPT channels have also been widely studied in the context of entanglement theory and, in a similar vein, here we have also shown that, perhaps surprisingly, the negativity is no longer a monotone when this class is extended to the set of PPT-preserving channels.

Although we believe that resource theories of entanglement under different classes of operations is an interesting research topic in itself, one of the motivations behind this present work is to understand more deeply the phenomenon of bound entanglement.  The question of whether NPT bound entanglement exists is one of the most challenging open problems in the field.  If NPT bound entanglement does exist, this could be proven by finding some NPT entangled state that still remains undistillable by a larger class of operations than LOCC.  So far, only PPT operations had been considered in this context \cite{Eggeling2001}. Here, we have systematically studied the extraction of entanglement under hierarchies of operational classes that go beyond LOCC. Although this does not necessarily exclude the phenomenon of bound entanglement, we have proven that in most of these classes every entangled state can be converted to a state that is LOCC-distillable.
The only class for which we have been unable to prove that all NPT states can be converted to an LOCC-distillable state is that of $k$-non-entangling maps. Thus, further investigation into the properties of this class could be a promising route to answer the long-standing question of NPT bound entanglement. In particular, even though we have proved that the class of $k$-non-entangling maps is a strict superset of LOCC and SEP maps, this alone does not necessarily imply that there exist pure-state transformations achievable by the former class of maps which are impossible by the latter. It would be interesting to study if such examples exist and, if so, which entanglement measures loose their monotonicity under $k$-non-entangling maps.
\\

EC is supported by the National Science Foundation (NSF) Early CAREER Award No.\ 1352326. JdV acknowledges support from the Spanish MINECO through grants MTM2017-84098-P and MTM2017-88385-P and from the Comunidad de Madrid through grant QUITEMAD-CM P2018/TCS­4342. MG is acknowledges support from an Izaak Walton Killam Memorial Scholarship and an Alberta Innovates--Technology Futures (AITF) Graduate Student Scholarship. GG acknowledges support from the Natural Sciences and Engineering Research Council of Canada (NSERC).


\bibliographystyle{mybst}
\bibliography{BeyondLOCC}

\appendix


\section{Details for proof of Theorem \ref{thm:SchmRnkWithDuallyNonentangling}}
\label{appendix:epsdeltaproof}

Here we show that the values $\delta=d^{-4}$ and $\epsilon=d^{-12}$ satisfy both \eqref{eq:epsdelta1} and \eqref{eq:epsdelta2}. Indeed, for these values of $\epsilon$ and $\delta$, it is straightforward to check that
\[
\frac{d^2}{\sqrt{d-1}}\sqrt{(1-\epsilon)\epsilon} \leq d^{-4}
\]
and thus
\begin{align*}
 (1-\delta)\left(1+\frac{d^2}{\sqrt{d-1}}\sqrt{(1-\epsilon)\epsilon} \right)
 &\leq  (1-d^{-4})(1+d^{-4})\\
 &\leq 1,
\end{align*}
which implies that \eqref{eq:epsdelta1} holds. On the other hand, it clear that
\begin{align*}
 d^2(1-\epsilon)\sqrt{(1-\delta)\delta}
 &\leq d^2\sqrt{\delta} \\
 &= 1.
\end{align*}
Furthermore, it holds that $1\leq \sqrt{k-1}$ since $k\geq2$. It follows that
\begin{align*}
 d^2(1-\epsilon) \leq  \frac{\sqrt{k-1}}{\sqrt{(1-\delta)\delta}}
\end{align*}
%
%
which proves that \eqref{eq:epsdelta2} holds as desired.

\end{document}